\newtheorem{theorem}{Theorem} 
\newtheorem{cor}[theorem]{Corollary}
\newtheorem{lem}[theorem]{Lemma}
\newtheorem{lemma}[theorem]{Lemma}
\newtheorem{defn}[theorem]{Definition}
\newtheorem{fact}{Fact}
\newenvironment{remark}[1][Remark]{\begin{trivlist} \item[\hskip \labelsep
{\bfseries #1}]}{\end{trivlist}}
\newenvironment{proof}{\noindent {\bf Proof:} \hspace{.4em}}
                      {\hspace{\fill}{$\blacksquare$} \smallskip}
\newenvironment{packed_item}{ \begin{itemize}
  \setlength{\itemsep}{1pt} \setlength{\parskip}{0pt}
  \setlength{\parsep}{0pt}
}{\end{itemize}}
\newcommand{\ignore}[1]{{}}
 \newcommand{\ra}{\rightarrow}
\newcommand{\bt}{\ensuremath{BT_d^h(k)}}
\newcommand{\ft}{\ensuremath{FT_d^h(k)}}
\newcommand{\bth}{\ensuremath{BT_d(h,k)}}
\newcommand{\fth}{\ensuremath{FT_d(h,k)}}
\newcommand{\bttwothree}{\ensuremath{BT_2^3(k)}}
\newcommand{\bttwofour}{\ensuremath{BT_2^4(k)}}
\newcommand{\bttwoh}{\ensuremath{BT_2^h(k)}}
\newcommand{\btdthree}{\ensuremath{BT_d^3(k)}}
\newcommand{\bthk}{\ensuremath{BT_d(h,k)}}
\newcommand{\fthk}{\ensuremath{FT_d(h,k)}}
\newcommand{\dFstate}{\ensuremath{\mathsf{\#detFstates}^h_d(k)}}
\newcommand{\dFdthreestate}{\ensuremath{\mathsf{\#detFstates}^3_d(k)}}
\newcommand{\ndFstate}{\ensuremath{\mathsf{\#ndetFstates}^h_d(k)}}
\newcommand{\dBstate}{\ensuremath{\mathsf{\#detBstates}^h_d(k)}}
\newcommand{\dBdthreestate}{\ensuremath{\mathsf{\#detBstates}^3_d(k)}}
\newcommand{\ndBstate}{\ensuremath{\mathsf{\#ndetBstates}^h_d(k)}}
\newcommand{\nddthreeBstate}{\ensuremath{\mathsf{\#ndetBstates}^3_d(k)}}
\newcommand{\ndtwothreeBstate}{\ensuremath{\mathsf{\#ndetBstates}^3_2(k)}}
\newcommand{\Bpebbles}{\ensuremath{\mathsf{\#Bpebbles}}}
\renewcommand{\Bpebbles}{\ensuremath{\mathsf{\#pebbles}}}
\newcommand{\BWpebbles}{\ensuremath{\mathsf{\#BWpebbles}}}
\newcommand{\FRpebbles}{\ensuremath{\mathsf{\#FRpebbles}}}
\newcommand{\acone}{\ensuremath{\mathbf{AC}^1}}
\newcommand{\aczsix}{\ensuremath{\mathbf{AC}^0(6)}}
\newcommand{\ncone}{\ensuremath{\mathbf{NC}^1}}
\newcommand{\nctwo}{\ensuremath{\mathbf{NC}^2}}
\newcommand{\p}{\ensuremath{\mathbf{P}}}
\newcommand{\nl}{\ensuremath{\mathbf{NL}}}
\newcommand{\lspace}{\ensuremath{\mathbf{L}}}
\newcommand{\logcfl}{\ensuremath{\mathbf{LogCFL}}}
\newcommand{\logdcfl}{\ensuremath{\mathbf{LogDCFL}}}
\newcommand{\np}{\ensuremath{\mathbf{NP}}}
\newcommand{\ph}{\ensuremath{\mathbf{PH}}}
\newcommand{\func}{\ensuremath{Children_d^h(k)}}
\newcommand{\functwofour}{\ensuremath{Children_2^4(k)}}
\newcommand{\funcmod}{\ensuremath{\mathit{SumMod}_d^h(k)}}
\newcommand{\funcmodtwothree}{\ensuremath{\mathit{SumMod}_2^3(k)}}
\newcommand{\nkway}{\ensuremath{N_{\mbox{\scriptsize nondet}}^{\mbox{\scriptsize $k$-way}}}}
\newcommand{\dkway}{\ensuremath{N_{\mbox{\scriptsize det}}^{\mbox{\scriptsize $k$-way}}}}
\newcommand{\ntwoway}{\ensuremath{N_{\mbox{\scriptsize
        nondet}}^{\mbox{\scriptsize $2$-way}}}}
\newcommand{\Vars}{\mathsf{Vars}}   
\newcommand{\var}{{\sf var}}
\newcommand{\adv}{{\sf adv}}
\newcommand{\Dom}{{\sf Dom}}
\newcommand{\isc}{i_{{\sf sc}}}
\def\IntAdv{\text{{\sc InterAdv}}}
\newcommand{\DEF}{{\downarrow}}
\newcommand{\UL}{U_{\mathsf{L}}}
\newcommand{\node}{{\sf node}}
\def\titlefootnote{\footnote{Versions of parts of this paper appeared in
\cite{BCMSW09} and \cite{BCMSWA09}}}
\begin{document} 

\title{Pebbles and Branching Programs for Tree Evaluation\titlefootnote}
\author[1]{Stephen Cook}
\author[2]{Pierre McKenzie}
\author[1]{Dustin Wehr}
\author[3]{Mark Braverman}
\author[4]{Rahul Santhanam}
\affil[1]{University of Toronto}
\affil[2]{Université de Montréal}
\affil[3]{Microsoft Research}
\affil[4]{University of Edinburgh}
%\and \SCinstitute %\and \PMinst \and \RSinst 
%\and Dustin Wehr\inst{2}

%\institute{
%Microsoft Research \and
%University of Toronto \and
%Université de Montréal \and
%}

\maketitle
\begin{abstract}
We introduce the {\em tree evaluation problem},
show that it is in \logdcfl\
(and hence in {\bf P}), and study its branching program complexity
in the hope of eventually proving a superlogarithmic space lower bound.
The input to the problem is a rooted, balanced $d$-ary tree of height $h$,
whose internal nodes are labeled with $d$-ary functions
on $[k]=\{1,\ldots,k\}$, and whose leaves are labeled with elements
of $[k]$.  Each node obtains a value in $[k]$ equal to its $d$-ary
function applied to the values of its $d$ children.  The output is the
value of the root.  We show that the standard black pebbling
algorithm applied
to the binary tree of height $h$ yields a deterministic $k$-way
branching program with $O(k^h)$ states solving this problem,
and we prove that this upper bound is tight
%up to a factor of $\log k$
for $h=2$ and $h=3$.  We introduce
a simple semantic restriction called {\em thrifty}
on $k$-way branching programs
solving tree evaluation problems and show that the same state
bound of $\Theta(k^h)$ is tight
for all $h\ge 2$ for deterministic
thrifty programs.  We introduce fractional pebbling for trees
and show that this yields nondeterministic thrifty programs
with $\Theta(k^{h/2+1})$ states solving the Boolean problem
``determine whether the root has value 1'', and prove
that this bound is tight for $h=2,3,4$.  We also prove that
this same bound is tight for unrestricted
nondeterministic $k$-way branching programs solving the Boolean
problem for $h=2,3$.
%We apply Ne\u{c}iporuk lower bound techniques and introduce other
%techniques that beat (our interpretation of) the Ne\u{c}iporuk
%method for certain function problems.
\end{abstract}

\newpage

\tableofcontents

\newpage

\section{Introduction}

Below is a nondecreasing sequence of standard complexity classes
between \aczsix\ and the polynomial hierarchy.
\begin{equation}\label{classes}
  \aczsix \subseteq \ncone \subseteq \lspace \subseteq \nl \subseteq 
    \logcfl 
   \subseteq \acone \subseteq \nctwo \subseteq \p \subseteq \np
   \subseteq \ph
\end{equation}
A problem in \aczsix\ is given by a uniform family of polynomial
size bounded depth circuits with unbounded fan-in Boolean and
mod 6 gates.  As far as we know an \aczsix\ circuit cannot determine
whether a majority of its input bits are ones, and yet we
cannot provably separate \aczsix\ from any of the other classes in the
sequence.  This embarrassing state of affairs motivates this paper
(as well as much of the lower bound work in complexity theory).

We propose a candidate for separating \nl\ from \logcfl.
The \emph{Tree Evaluation problem} \fth\ is defined as follows.
The input to \fth\ is a balanced $d$-ary tree of height $h$,
denoted $T^h_d$ (see Fig. \ref{sample}).  Attached
to each internal node $i$ of the tree is some explicit function
$f_i: [k]^d\rightarrow [k]$ specified as $k^d$ integers in
$[k]=\{1,\ldots,k\}$.  Attached to each leaf is a number in $[k]$.
Each internal tree node takes a value in $[k]$ obtained by applying its
attached function to the values of its children.  The function
problem \fth\ is to compute the value of the root, and the Boolean
problem \bth\ is to determine whether this value is $1$.

\ifpdf
\else
\begin{figure}
%%%%%%%%%%%%%%%%%%%%%%%%%%%%%%%%%%%%%%%%%%%%%%%%%%%%%%%%%%%%
% Pierre cannot insert .eps graphics so he produced the pdf.
% If you have trouble revert to your method using .eps namely
%\vspace*{.3cm}
%\hspace*{3.8cm}\includegraphics[scale=0.40]{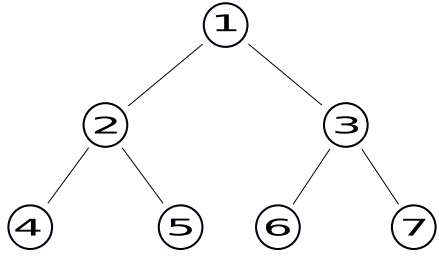}
%\hspace*{1cm} \vspace*{.2cm}
%%%%%%%%%%%%%%%%%%%%%%%%%%%%%%%%%%%%%%%%%%%%%%%%%%%%%%%%%%%
%\centerline{\includegraphics[width=.3\textwidth]{h3_labeled_tree.pdf}}
\vspace*{.3cm}
\hspace*{3.8cm} \includegraphics[scale=0.40]{h3_labeled_tree.eps}
\hspace*{1cm} \vspace*{.2cm}
\caption{A height 3 binary tree $T_2^3$ with nodes numbered heap style.}
\label{sample}
\end{figure}
\fi

It is not hard to show that
%for each fixed $d\ge 2$
a deterministic logspace-bounded polytime auxiliary
pushdown automaton decides \bth, where $d$,$h$ and $k$ are input
parameters. This implies by \cite{su78}
that \bth\ belongs to the class \logdcfl\ of languages logspace
reducible to a deterministic context-free language.  The latter class
lies between \lspace\ and \logcfl,
but its relationship with \nl\ is unknown (see \cite{ma07} for a recent
survey).  We conjecture that \bth\ does not lie in \nl.
A proof would separate \nl\ and \logcfl, and hence (by (\ref{classes}))
separate \ncone\ and \nctwo.

Thus we are interested in proving superlogarithmic
space upper and lower bounds (for fixed degree $d\ge 2$)
for \bth\ and \fth.  Notice that for each constant $k=k_0\ge 2$,
$BT_d(h,k_0)$ is an easy generalization of the Boolean formula value
problem for balanced formulas, and hence it is in \ncone\
and \lspace.  Thus it is important that $k$ be an unbounded
input parameter.

We use branching programs (BPs) as a nonuniform
model of Turing machine space:  A lower bound of $s(n)$ on the
number of BP states implies a lower bound of $\Theta(\log s(n))$
on Turing machine space, but to prove the converse we would need
to supply the machine with an advice string for each input length.
Thus BP state lower bounds are stronger than TM space
lower bounds, but we do not know how to take advantage of the
uniformity of TMs to get the supposedly easier lower bounds on TM space.
In this paper all of our lower bounds are nonuniform and all of
our upper bounds are uniform.

In the context of branching programs we think of $d$ and $h$ as
fixed, and we are interested in how the number of states required
grows with $k$.   To indicate this point of view we write the
function problem \fth\ as \ft\ and the Boolean problem \bth\ as \bt.
For this it turns out that $k$-way BPs are
a convenient model, since an input for \bt\ or \ft\ is naturally
presented as a tuple of elements in $[k]$.  Each nonfinal state
in a $k$-way BP queries a specific element of the tuple, and branches
$k$ possible ways according to the $k$ possible answers.

It is natural to assume that the inputs to Turing machines
are binary strings, so 2-way BPs are a closer model of TM space than
are $k$-way BPs for $k>2$. 
But every 2-way BP is easily converted to a $k$-way
BP with the same number of states, and every $k$-way BP can be
converted to a 2-way BP with an increase of only a factor of $k$
in the number of states,
so for the purpose of separating \lspace\ and \p\ we may as
well use $k$-way BPs.

Of course the number of states required by a $k$-way BP
to solve the Boolean
problem \bt\ is at most the number required to solve the function
problem \ft.  In the other direction it is easy to see
(Lemma \ref{l:FvsB}) that \ft\ requires at most a factor of $k$
more states than \bt.  From the point of view of separating \lspace\
and \p\ a factor of $k$ is not important.  Nevertheless it is
interesting to compare the two numbers, and
in some cases (Corollary \ref{c:HtThree}) we can prove tight
bounds for both:  For deterministic BPs solving height 3 trees
they differ by a factor of $\log k$ rather than $k$.

The best (i.e. fewest states) algorithms that we know
for deterministic $k$-way
BPs solving \ft\ come from black pebbling algorithms for trees:
If $p$ pebbles suffice to pebble
the tree $T^h_d$ then $O(k^p)$ states suffice for a BP to solve
\ft\ (Theorem \ref{t:pebSim}). 
This upper bound on states is tight (up to
a constant factor) for trees of height $h=2$ or $h=3$
(Corollary \ref{c:HtThree}), and we
suspect that it may be tight for trees of any height.

There is a well-known generalization of black pebbling called
black-white pebbling which naturally simulates nondeterministic
algorithms.  Indeed if $p$ pebbles suffice to black-white pebble
$T^h_d$ then $O(k^p)$ states suffice for a nondeterministic BP
to solve \bt.  However the best lower bound we can obtain
for nondeterministic BPs solving \bttwothree\
(see Figure \ref{sample}) is $\Omega(n^{2.5})$, whereas
it takes 3 pebbles to black-white pebble the tree $T^3_2$.
This led us to rethink the upper bound, and we discovered that
there is indeed a nondeterministic BP with $O(k^{2.5})$ states
which solves \bttwothree.  The algorithm comes from a black-white
pebbling of $T^3_2$ using only 2.5 pebbles: It places a half-black
pebble on node 2, a black pebble on node 3, and adds a half white pebble
on node 2, allowing the root to be black-pebbled
(see Figure \ref{f:bin_h3_fract_ub} on page \pageref{f:bin_h3_fract_ub}).

This led us to the idea of fractional pebbling in general,
a natural generalization of black-white pebbling.  A fractional
pebble configuration on a tree assigns two nonnegative real numbers 
$b(i)$ and $w(i)$ totalling at most 1, to each node $i$ in the
tree, with appropriate rules for removing and adding pebbles.
The idea is to minimize the maximum total pebble weight on the
tree during a pebbling procedure which starts and ends with no
pebbles and has a black pebble on the root at some point.

It turns out that nondeterministic
BPs nicely implement fractional pebbling procedures:  If $p$
pebbles suffice to fractionally pebble $T^h_d$ then $O(k^p)$
states suffice for a nondeterministic BP to solve \bt.
After much work we have not been able to improve upon
this $O(k^p)$ upper bound for any $d,h\ge 2$.  We prove it is optimal
for trees of height 3 (Corollary \ref{c:HtThree}).

We can prove that for fixed degree $d$ the number of pebbles
required to pebble (in any sense) the tree $T^h_d$ grows
as $\Theta(h)$, so the $p$ in the above best-known upper bounds of
$O(k^p)$ states grows as $\Theta(h)$. This and the following fact
motivate further study of the complexity of \ft.

\medspace

\noindent
\begin{fact}\label{f:unbounded}
A lower
bound of $\Omega(k^{r(h)})$ for {\em any} unbounded function $r(h)$
on the number of states required to solve \ft\
implies that $\lspace \ne \logcfl$ (Theorem \ref{t:logDCFL} and
Corollary \ref{c:thegoal}).
\end{fact}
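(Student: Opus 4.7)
The plan is to argue the contrapositive: assuming $\lspace = \logcfl$, I will show that the $k$-way branching program state complexity of $\ft$ is bounded by $k^{C(d)}$, where $C(d)$ is a constant depending only on $d$ (and crucially not on $h$); this rules out any lower bound of the form $\Omega(k^{r(h)})$ with $r$ unbounded.

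First, by Theorem~\ref{t:logDCFL}, $\bth \in \logdcfl \subseteq \logcfl$, so the assumption yields a deterministic Turing machine $M$ deciding $\bth$ in space $c \log N$, where $N$ is the total input length and $c$ is a universal constant independent of $d,h,k$. By Lemma~\ref{l:FvsB} (or simply by running $M$ once per candidate root value), the function problem $\fth$ is also decidable in space $O(\log N)$. Next, apply the classical configuration-graph simulation: a space-$s(N)$ TM yields a $2$-way branching program with $2^{O(s(N))} = N^{O(1)}$ states, and converting to a $k$-way BP costs at most a factor of $k \le N$, so we obtain a $k$-way branching program for $\ft$ with at most $N^{c'}$ states, with $c'$ again a universal constant.

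Now measure the input size for fixed $d$ and $h$: the $d^h - 1$ internal nodes each carry a function table of $k^d \lceil \log k \rceil$ bits, and the leaves carry $\lceil \log k \rceil$-bit labels, giving $N = \Theta(k^d \log k)$, where the hidden $\Theta$ depends on $d$ and $h$ but the exponent $d$ on $k$ does not depend on $h$. Substituting, the branching program solving $\ft$ has at most $k^{c'd + o(1)}$ states, i.e.\ $k^{O(1)}$ states with the $O(1)$ exponent independent of $h$. The contradiction is then immediate: given any hypothetical lower bound $\Omega(k^{r(h)})$ with $r$ unbounded, pick $h$ with $r(h) > c'd + 1$; for all sufficiently large $k$, the claimed lower bound exceeds the derived upper bound.

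The main subtlety — really the only delicate point — is careful bookkeeping on which constants are permitted to depend on $h$. The $O(\log N)$ space bound for $M$ must use an absolute constant (independent of $h,d,k$), which is legitimate because $h,d,k$ are part of the input rather than built into the machine; this is precisely what forces the exponent $c'd$ in the BP upper bound to be independent of $h$, so that an unbounded $r(h)$ eventually overtakes it. Up to this care, everything else is the standard space-to-branching-program simulation, and together these steps realize the implication stated in Theorem~\ref{t:logDCFL} and Corollary~\ref{c:thegoal}.
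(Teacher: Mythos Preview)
Your proof is correct and follows essentially the same route as the paper: you combine Theorem~\ref{t:logDCFL} (membership in $\logdcfl\subseteq\logcfl$) with the contrapositive of Theorem~\ref{t:goal}/Corollary~\ref{c:thegoal} (a logspace machine yields a $k$-way BP of size $f_d(h)\cdot k^{c_d}$ with $c_d$ independent of $h$), exactly as the paper does. Two minor slips that do not affect the argument: the number of internal nodes of $T_d^h$ is $(d^{h-1}-1)/(d-1)$, not $d^h-1$; and the paper notes that converting a $2$-way BP to a $k$-way BP costs \emph{no} extra states (the factor-of-$k$ blowup goes the other direction), though your weaker estimate is harmless here.
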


\medspace

Proving tight bounds on the number of pebbles required to fractionally
pebble a tree turns out to be much more difficult than
for the case of whole black-white pebbling.   However we can prove
good upper and lower bounds.
For binary trees of any height $h$ we prove an upper
bound of $h/2 + 1$ and a lower bound of $h/2-1$ (the upper bound
is optimal for $h\le 4$).  These bounds can be generalized to
$d$-ary trees (Theorem \ref{t:daryFract}).

We introduce a natural semantic restriction on BPs which solve \bt\ or
\ft:  A $k$-way BP is {\em thrifty} if it only queries the
function $f(x_1,\ldots,x_d)$ associated with a node when
$(x_1,\ldots,x_d)$ are the correct values of the children of the node.

It is not hard to see that the deterministic BP algorithms that
implement black pebbling are thrifty.  With some effort we were
able to prove a converse (for binary trees):  If $p$ is the
minimum number of pebbles required to black-pebble $T^h_2$ then every
deterministic thrifty BP solving $BT^h_2(k)$ (or $FT^h_2(k)$) requires
$\Omega(k^p)$ states.  Thus any deterministic BP solving these
problems with fewer states
must query internal nodes $f_i(x,y)$ where $(x,y)$ are not the
values of the children of node $i$.  For the decision problem
$BT^h_2(k)$ there is indeed a nonthrifty deterministic BP
improving on the bound by a factor of $\log k$ (Theorem \ref{t:BPUpper}
(\ref{e:dBUpper})), and this is tight for $h=3$
(Corollary \ref{c:HtThree}).  But we have not been able to improve
on thrifty BPs for solving any function problem \ft.

The nondeterministic BPs that implement fractional pebbling are
indeed thrifty.  However here the converse is far from clear:
there is nothing in the definition of {\em thrifty} that hints
at fractional pebbling.   We have been able to prove that thrifty
BPs cannot beat fractional pebbling for binary trees of height
$h=4$ or less, but for general trees this is open.

It is not hard to see that for black pebbling, fractional pebbles
do not help.  This may explain why we have been able to prove
tight bounds for deterministic thrifty BPs for all binary trees,
but only for trees of height 4 or less for nondeterministic
thrifty BPs.

We pose the following as another interesting open question:

\medskip

\noindent
\label{thriftyH}
{\bf Thrifty Hypothesis:}  Thrifty BPs are optimal among $k$-way
BPs solving \ft.

\medskip

Proving this for deterministic BPs would show $\lspace \ne \logdcfl$,
and for nondeterministic BPs would show $\nl\ne\logcfl$.
Disproving this would provide interesting new space-efficient algorithms
and might point the way to new approaches for proving lower bounds.

The lower bounds mentioned above for unrestricted branching programs
when the tree heights are small are obtained in two ways: First using
the Ne\u{c}iporuk method \cite{ne66}, and second using
a method that analyzes the state sequences of the BP computations.
Using the state
sequence method we have not yet beat the $\Omega(n^2)$ deterministic
branching program size barrier (neglecting log factors) inherent to
the Ne\u{c}iporuk method for Boolean problems,
but we can prove lower bounds for 
function problems which cannot be matched by the Ne\u{c}iporuk method
(Theorems \ref{t:rootfunction}, \ref{t:lasttheorem}, \ref{t:childLB},
\ref{t:beatittwice}).
For nondeterministic branching programs with
states of unbounded outdegree, we show that both methods yield a lower
bound of $\Omega(n^{3/2})$ states (neglecting logs) for the decision
problem $BT_2^3$, and this improves on the former
$\Omega(n^{3/2})$ bound obtained for the number of edges
\cite{pu87,ra91} in such BPs.

%We prove the lower bounds mentioned above for unrestricted
%branching programs using the Ne\u{c}iporuk method, and we
%also give alternative proofs by analyzing the state sequence
%of the BP computations.  In general the Ne\u{c}iporuk method
%cannot be used to prove a lower bound of more than $n^2$
%states for a problem with $n$ inputs.  We have not yet beat
%this $n^2$ limitation using the state sequence method, but
%we can prove lower bounds for certain function problems
%which cannot be matched by the Ne\u{c}iporuk method, as we
%interpret it.

\subsection{Summary of Contributions}
\begin{itemize}
\item
We introduce a family of computation problems \ft\ and \bt,
$d,h \ge 2$, which we propose as good candidates for separating
\lspace\ and \nl\ from apparently larger complexity classes
in (\ref{classes}).  Our goal is to prove space lower bounds
for these problems by proving state lower bounds for $k$-way
branching programs which solve them.  For $h=3$ we can prove
tight bounds for each $d\ge 2$ on the number of states required by
$k$-way BPs to solve them, namely (from Corollary \ref{c:HtThree})
\begin{align*}
   & \Theta(k^{(3/2)d - 1/2})
\mbox{ for nondeterministic BPs solving $BT^3_d(k)$}\\
   & \Theta(k^{2d-1}/\log k)
\mbox{ for deterministic BPs solving $BT^3_d(k)$}\\
   & \Theta(k^{2d-1})
\mbox{ for deterministic BPs solving $FT^3_d(k)$}
\end{align*}

\item
We introduce a simple and natural restriction called {\em thrifty}
on BPs solving \ft\ and \bt.  The best known upper bounds for
deterministic BPs solving \ft\ and for nondeterministic BPs
solving \bt\ are realized by thrifty BPs.  Proving even much
weaker lower bounds than these upper bounds for unrestricted
BPs would separate \lspace\ from \logcfl\ (see Fact \ref{f:unbounded}
above).  We prove that for binary trees deterministic thrifty BPs cannot
do better than implement black pebbling (this is far from obvious).

\item
We formulate the {\bf Thrifty Hypothesis} (see above).  Either
a proof or a disproof would have interesting consequences.

\item
We introduce {\em fractional pebbling} as a natural generalization
of black-white pebbling for simulating nondeterministic space
bounded computations.  We prove almost tight lower bounds for
fractionally pebbling binary trees (Theorem \ref{t:daryFract}).
The best known upper bounds for nondeterministic BPs solving
\ft\ come from fractional pebbling, and these
can be implemented by thrifty BPs.
An interesting open question is to prove that nondeterministic thrifty
BPs cannot do better than implement fractional pebbling.  (We prove this
for $h=2,3,4$.)

\item
We use a ``state sequence'' method for proving size
lower bounds for branching programs solving \ft\ and \bt, and
show that it improves on the Ne\u{c}iporuk method for certain function
problems.
\end{itemize}

The next major step is to prove good lower bounds for trees of
height $h=4$.  If we can prove the above Thrifty Hypothesis for
deterministic BPs solving the function problem (and hence the
decision problem) for trees of height 4, then we would beat
the $\Omega(n^2)$ limitation mentioned above on Ne\u{c}iporuk's method.
See Section \ref{s:conclu} (Conclusion) for this argument, and
a comment about the nondeterministic case.

\subsection{Relation to previous work}
Taitslin \cite{ta05} proposed a problem similar to \bttwoh\
in which the functions attached to internal nodes are specific
quasi groups, in an unsuccessful attempt to prove $\nl\ne\p$.

Gal, Koucky and McKenzie \cite{gakomc08} proved exponential
lower bounds on the size of restricted $n$-way branching
programs solving versions of the problem GEN.  Like our problems
\bt\ and \ft, the best known upper bounds for solving GEN
come from pebbling algorithms.

As a concrete approach to separating \ncone\ from \nctwo, Karchmer,
Raz and Wigderson \cite{karawi95} suggested proving that the circuit
depth required to compose a Boolean function with itself $h$ times
grows appreciably with $h$. They proposed the \emph{universal
  composition relation} conjecture, stating that an abstraction of the
composition problem requires high communication complexity, as an
intermediate goal to validate their approach.  This conjecture was
later proved in two ways, first \cite{edimrusg01} using innovative
information-theoretic machinery and then \cite{hawi93} using a clever
new complexity measure that generalizes the subadditivity property
implicit in Ne\u{c}iporuk's lower bound method \cite{ne66}.  Proving
the conjecture thus cleared the road for the approach, yet no
sufficiently strong unrestricted circuit lower bounds could be proved
using it so far.

Edmonds, Impagliazzo, Rudich and Sgall \cite{edimrusg01} noted that the
approach would in fact separate \ncone\ from \acone. They also coined the
name \emph{Iterated Multiplexor} for the most general computational
problem considered in \cite{karawi95}, namely
composing in a tree-like fashion a set of
explicitly presented Boolean functions, one per tree node.
Our problem \ft\ can be considered as a generalization of
the Iterated Multiplexor problem in which the functions map
$[k]^d$ to $[k]$ instead of $\{0,1\}^d$ to $\{0,1\}$.
This generalization allows us to focus on getting
lower bounds as a function of $k$ when the tree is fixed.

For time-restricted branching programs,
Borodin, Razborov and Smolensky \cite{borasm93}
exhibited a family of Boolean functions that require exponential size
to be computed by nondeterministic syntactic read-$k$ times BPs.
Later Beame, Saks, Sun, and Vee \cite{BSSV03} exhibited such functions
that require exponential size to be computed by randomized
BPs whose computation
time is limited to $o(n\sqrt{\log n/\log\log n})$,
where $n$ is the input length.  However all these functions can
be computed by polynomial size BPs when time is unrestricted.

In the present paper we consider branching programs with no
time restriction such as read-$k$ times.  However the smallest
size deterministic BPs known to us that solve \ft\ implement
the black pebbling algorithm, and these BPs happen to be
(syntactic) read-once.

\subsection{Organization}
The paper is organized as follows.
Section \ref{s:preliminaries} defines the main notions used
in this paper, including branching programs and pebbling.
Section \ref{s:Connecting} relates pebbling and branching programs to
Turing machine space, noting in particular that a $k$-way BP size
lower bound of $\Omega(k^{\mbox{\scriptsize function}(h)})$ for \bt\ would
show $\lspace \neq \logcfl$.
Section \ref{s:PebBounds} proves upper and lower bounds on the
number of pebbles required to black, black-white and fractionally
pebble the tree $T^h_d$.
These pebbling bounds are exploited in Section \ref{s:PBbounds} to prove
upper bounds on the size of branching programs.
BP lower bounds are obtained using the Ne\u{c}iporuk method in Subsection
\ref{s:NecLB}.  Alternative proofs to some of these lower bounds
using the ``state sequence method'' are given in Subsection
\ref{s:beating}.
An example of a function problem for which the state sequence method
beats the Ne\u{c}iporuk method is given in Theorems
\ref{t:rootfunction} and \ref{t:childLB}.
Subsection \ref{s:thriftyLB} contains bounds for thrifty branching
programs.

\section{Preliminaries}\label{s:preliminaries}

We assume some familiarity with complexity theory, such as can be
found in \cite{go08}.
We write $[k]$ for $\{1,2,\ldots,k\}$.
For $d,h\ge 2$ we use $T_d^h$ to denote the balanced $d$-ary tree
of height $h$.

\medskip

\noindent
{\bf Warning:}  Here the {\em height} of a tree is the number
of levels in the tree, as opposed to the distance from root to
leaf.  Thus $T^2_2$ has just 3 nodes.

\medskip

\noindent
We number the nodes of $T_d^h$ as suggested by the heap data
structure.  Thus the root is node 1, and in general the children
of node $i$ are (when $d=2$) nodes $2i,2i+1$
(see Figure \ref{sample}).

%Our computational models will be solving the Boolean tree evaluation
%problem \bt\ and the functional tree evaluation problem \ft, defined
%as follows:

\begin{defn}[Tree evaluation problems]
\label{d:treeEval} \
%\begin{quote}
Given: The tree $T_d^h$ with each non-leaf node $i$
independently labeled with a function $f_i: [k]^d\rightarrow [k]$ and each
leaf node independently labeled with an element from $[k]$,
where $d,h,k\geq 2$.

\noindent
\emph{Function evaluation problem} \ft: Compute
the value $v_1\in[k]$ of the root $1$ of $T_d^h$, where in general
$v_i=a$ if $i$ is a leaf labeled
$a$ and $v_i=f_i(v_{j_1},\ldots,v_{j_d})$ if
the children of $i$ are $j_1,\ldots,j_d$.

\noindent
\emph{Boolean problem} \bt: Decide whether $v_1=1$.
\end{defn}

\subsection{Branching programs}

A family of branching programs serves as a nonuniform model of
of a Turing machine.  For each input size $n$ there is a BP
$B_n$ in the family which models the machine on inputs of size $n$. 
The states (or nodes) of $B_n$ correspond
to the possible configurations of the machine for inputs of size $n$.
Thus if the machine computes in space $s(n)$ then $B_n$ has
$2^{O(s(n))}$ states.

Many variants of the branching program model have been studied (see in
particular the survey by Razborov \cite{ra91} and the book by
Ingo Wegener \cite{we00}).  Our definition below is inspired by
Wegener \cite[p.\ 239]{we00}, by the $k$-way branching
program of Borodin and Cook \cite{boco82} and by its nondeterministic
variant \cite{borasm93,gakomc08}.  We depart from the latter however
in two ways: nondeterministic branching program labels are attached to
states rather than edges (because we think of branching program states
as Turing machine configurations) and cycles in branching programs are
allowed (because our lower bounds apply to this more powerful model).

\begin{defn}[Branching programs]
A \emph{nondeterministic $k$-way branching program} $B$ computing a
  total function $g:[k]^m\rightarrow R$, where $R$ is a finite set, is
  a directed rooted multi-graph whose nodes are called {\em states}. Every
  edge has a label from $[k]$.  Every state has a label from $[m]$,
  except $|R|$ {\em final} sink states consecutively labelled with the
  elements from $R$.  An input
  $(x_1,\ldots,x_m)\in [k]^m$ activates, for each $1\leq j\leq m$,
  every edge labelled $x_j$ out of every state labelled $j$. A
  {\em computation} on input $\vec{x}=(x_1,\ldots,x_m)\in [k]^m$
  is a directed path consisting of edges activated by $\vec{x}$
  which begins with the
  unique start state (the root), and either it is infinite,
  or it ends in the final state labelled
  $g(x_1,\ldots,x_m)$, or it ends in a nonfinal state labelled $j$
  with no outedge labelled $x_j$ (in which case we say the
  computation {\em aborts}).
  At least one such computation must end in a final state.
  The \emph{size} of $B$ is its number of states.  $B$ is
  \emph{deterministic $k$-way} if every non-final state has precisely
  $k$ outedges labelled $1,\ldots,k$.  $B$ is \emph{binary} if $k=2$.

We say that $B$ solves a decision problem (relation) if it computes
the characteristic function of the relation.
\end{defn}

A $k$-way branching program computing the function \ft\ requires
$k^d$ $k$-ary arguments for each internal node $i$ of $T^h_d$ in
order to specify the function $f_i$, together with one $k$-ary
argument for each leaf.  Thus in the notation of
Definition~\ref{d:treeEval}, \ft $: [k]^m \rightarrow R$ where $R=[k]$ and
$m=\frac{d^{h-1}-1}{d-1}\cdot k^d +  d^{h-1}$.
Also \bt $: [k]^m \rightarrow \{0,1\}$.

For fixed $d,h$ we are interested in how the number of states
required for a $k$-way branching program to compute \ft\ and \bt\
grows with $k$.  We define \dFstate\ (resp. \ndFstate)
to be the minimum number of states required for a deterministic
(resp. nondeterministic) $k$-way branching program to solve \ft.
Similarly we define \dBstate\ and \ndBstate\
to be the number of states for solving \bt.

The next lemma shows that the function problem is not much harder
to solve than the Boolean problem.

\begin{lem}\label{l:FvsB}
\begin{align*}
\dBstate \le \dFstate \le  k \cdot \dBstate\\
\ndBstate \le \ndFstate \le  k \cdot \ndBstate
\end{align*}
\end{lem}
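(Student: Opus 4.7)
The plan is to prove both inequalities by direct branching program constructions; in each direction the deterministic and nondeterministic cases proceed in parallel.

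The left inequalities $\dBstate \le \dFstate$ and $\ndBstate \le \ndFstate$ are immediate. Given an optimal $k$-way BP for \ft, whose $k$ final states are labelled $1, \ldots, k$, relabel the final state labelled $1$ as an accept sink and merge the remaining $k-1$ final states into a single reject sink. The resulting BP solves \bt\ and has no more states than the original.

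For the right inequalities, the idea is to build, for each $j \in \{1, \ldots, k\}$, a BP $B_j$ that decides ``$v_1 = j$?'' and has the same number of states as an optimal BP $B_1 = B$ for \bt, then chain them together. To obtain $B_j$ from $B$, apply the transposition $\pi = (1\ j)$ to the output of $f_1$: at every state of $B$ whose label queries an entry of the table of $f_1$, swap the outedges labelled $1$ and $j$. A computation of $B_j$ on an input $I$ then traces the same sequence of states as the computation of $B$ on the input obtained from $I$ by replacing $f_1$ with $\pi \circ f_1$. This modified input has root value $\pi(v_1)$ (the children's values being unchanged), so $B_j$ accepts $I$ iff $\pi(v_1) = 1$ iff $v_1 = j$. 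Clearly $B_j$ has the same number of states as $B$.

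Chain $B_1, \ldots, B_{k-1}$ by relabelling the accept sink of each $B_j$ as the output-$j$ sink, identifying the reject sink of $B_j$ with the start state of $B_{j+1}$ for $j < k-1$, and relabelling the reject sink of $B_{k-1}$ as the output-$k$ sink. The total state count is $(k-1)\dBstate - (k-2) \le k \cdot \dBstate$. Correctness in the deterministic case is immediate: the chain runs $B_1, B_2, \ldots$ in turn, each either accepting and exiting at the corresponding output sink or rejecting and passing control forward, so the chain exits at output-$v_1$. The main subtlety is in the nondeterministic case, where we must ensure that no path in the chain exits at a wrong output. Here we rely on the fact, implicit in the BP definition given above, that any finite non-aborting computation of a valid nondeterministic BP ends in the correct final state: thus if $v_1 \ne j$ then no path of $B_j$ reaches its accept sink, so the chain cannot exit at a wrong output-$j$; and if $v_1 = j$ then a rejecting path in each of $B_1, \ldots, B_{j-1}$ (which exists because $v_1 \ne \ell$ for $\ell < j$) splices together with an accepting path in $B_j$ to give a valid computation from the start of the chain to the output-$v_1$ sink. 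This yields $\ndFstate \le k \cdot \ndBstate$.
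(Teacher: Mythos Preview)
Your proof is correct and follows essentially the same approach as the paper, which simply says ``construct a branching program solving the function problem from a sequence of $k$ programs solving Boolean problems, where the $i$th program determines whether the value of the root node is $i$.'' You supply the details the paper omits: the explicit construction of $B_j$ via the edge-label swap at $f_1$-querying states, the chaining, the state count, and the verification in the nondeterministic case that no path can reach a wrong output sink (which does rely on the paper's definition of computation, under which no finite non-aborting path may terminate at an incorrect final state). Your use of only $k-1$ copies, taking the reject sink of $B_{k-1}$ as the output-$k$ sink, is a small sharpening over the paper's $k$ copies, but the bound claimed is the same.
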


\begin{proof}
The left inequalities are obvious.  For the others, we can
construct a branching program solving the function problem
from a sequence of $k$ programs solving Boolean problems, where
the $i$th program determines whether the value of the root node
is $i$.
\end{proof}

Next we introduce thrifty programs, a restricted form of $k$-way branching
programs for solving tree evaluation problems.  Thrifty
programs efficiently simulate pebbling algorithms, and
implement the best known upper bounds for
\ndBstate\ and \dFstate, and are within a factor of $\log k$
of the best known upper bounds for \dBstate.
In Section \ref{s:PBbounds} we prove tight lower bounds for deterministic
thrifty programs which solve \bt\ and \ft.

\begin{defn}[Thrifty branching program]
\label{d:thrifty}
A deterministic $k$-way branching program which solves \ft\
or \bt\ is {\em thrifty} if during the computation on any input every
query $f_i(\vec{x})$ to an internal node $i$ of $T^h_d$ satisfies the
condition that
$\vec{x}$ is the tuple of correct values for the children of node $i$.
A nondeterministic such program is {\em thrifty} if for every input
every computation which ends in a final state
satisfies the above restriction on queries.
\end{defn}

Note that the restriction in the above definition is semantic,
rather than syntactic.  It somewhat resembles the semantic
restriction used to define incremental branching programs
in \cite{gakomc08}.  However we are able to prove strong lower bounds
using our semantic restriction, but in \cite{gakomc08} a syntactic
restriction was needed to prove lower bounds.

\subsection{One function is enough}\label{s:oneFunction}

The theorem in this section is not used in the sequel.

It turns out that the complexities of \ft\ and \bt\ are not much different
if we require all functions assigned to internal nodes to be
the same.\footnote{We thank Yann Strozecki, who posed this question} 
To denote this restricted version of the problems
we replace $F$ by $\hat{F}$
and $B$ by $\hat{B}$.  Thus $\hat{F}T_d^h(k)$ is the function
problem for $T^h_d$ when all node functions are the same, and
$\hat{B}T_d^h(k)$ is the corresponding Boolean problem.
To specify an instance of one of these new problems we need only give
one copy of the table for the common node function $\hat{f}$,
together with the values for the leaves.

\begin{theorem}\label{t:single}
Let $N = (d^h-1)/(d-1)$ be the number of nodes in the tree $T^h_d$.
Any $Nk$-way branching program $\hat{B}$ solving $\hat{F}T_d^h(Nk)$
(resp. $\hat{B}T_d^h(Nk)$) can
be transformed to a $k$-way branching program $B$ solving
\ft\ (resp. \bt), where $B$ has no more states than $\hat{B}$ and
$B$ is deterministic iff $\hat{B}$ is deterministic. Also
for each $d\ge 2$ the decision problem $BT_d(h,k)$ is log space
reducible to $\hat{B}T_d(h,k)$ (where $h,k$ are input parameters).
\end{theorem}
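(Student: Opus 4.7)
The plan is to reduce $\ft$ to $\hat{F}T_d^h(Nk)$ via an encoding that tags each $[k]$-valued node value with the identity of the node it lives on, and then to simulate $\hat B$ on the encoded input while translating every $Nk$-way input query into a single $k$-way query of the original $\ft$ input. Fix a bijection $e\colon [k]\times[N]\to [Nk]$ with $e(1,1)=1$. Given an $\ft$ instance with node functions $f_i$ and leaf labels $v_\ell$, the encoded instance has leaf $\ell$ labeled $e(v_\ell,\ell)$ and its common function $\hat f\colon [Nk]^d\to[Nk]$ is defined by $\hat f(e(u_1,j_1),\ldots,e(u_d,j_d))=e(f_{i^\ast}(u_1,\ldots,u_d),i^\ast)$ when $j_1,\ldots,j_d$ are precisely the children of some node $i^\ast$ of $T_d^h$, and $\hat f(\vec x)=1$ otherwise. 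A short induction up the tree shows the value at node $i$ of the encoded instance is $e(v_i,i)$; in particular the root evaluates to $e(v_1,1)$, and since $e(1,1)=1$ the Boolean and function problems transport correctly.

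To construct $B$ from $\hat B$, I copy $\hat B$'s state set and rewrite each non-final state. Consider a state $q$ of $\hat B$ that queries position $p$ of the $\hat F$-input. If $p$ is leaf position $\ell$, then on every encoded input the returned value is $e(v_\ell,\ell)$ for some $v_\ell\in[k]$, so $B$ at $q$ queries leaf $\ell$ of its own input and wires the outedge labelled $v\in[k]$ to the target of $\hat B$'s edge labelled $e(v,\ell)$. If $p$ is a table entry of $\hat f$ at arguments $(e(u_1,j_1),\ldots,e(u_d,j_d))$ whose node-components are the children of some $i^\ast$, then $B$ queries $f_{i^\ast}(u_1,\ldots,u_d)$ and wires outedge $v$ to the target of $\hat B$'s edge labelled $e(v,i^\ast)$. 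Otherwise the queried value is the fixed default $1$, and $B$ at $q$ makes an arbitrary dummy query whose $k$ outedges all lead to the target of $\hat B$'s edge labelled $1$. Relabelling the final states (in the Boolean case $\{0,1\}$ carries over; in the function case each reachable final state $e(v,1)$ is renamed $v$) yields a $k$-way BP with the same number of states as $\hat B$, deterministic iff $\hat B$ is, whose computation on any $\ft$ input mirrors $\hat B$'s computation on the encoded $\hat{F}T_d^h(Nk)$ input.

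The log space reduction from $BT_d(h,k)$ to $\hat{B}T_d(h,Nk)$ applies the same encoding directly: enumerate the $(Nk)^d$ positions of the $\hat f$ table using an $O(d\log(Nk))$-bit counter, output for each position either the default $1$ or a single lookup into the appropriate $f_{i^\ast}$ table from the input, and encode the leaves similarly. The main obstacle I anticipate is the nondeterministic correctness argument for the BP transformation: one must check that on every encoded input, each accepting computation of $\hat B$ uses only outedges carrying values actually produced by the three cases above, so that no accepting path is lost and no spurious one is introduced. Apart from this and the care needed to relabel final states for the function problem, the argument is essentially bookkeeping.
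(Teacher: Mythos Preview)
Your proposal is correct and follows essentially the same approach as the paper: the encoding $e(x,i)$ is exactly the paper's pairing $\langle i,x\rangle$ (with the same default and the same convention $e(1,1)=1$), and your state-by-state rewriting of $\hat B$ matches the paper's construction, including the handling of ``dummy'' queries and the relabelling of final states. The nondeterministic concern you flag is not a real obstacle, since on every encoded input only the edge labels you retain are ever activated, so no accepting path is lost or gained.
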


\begin{proof}
Given an instance $I$ of \ft\ (or \bt) we can find a corresponding
instance $\hat{I}$ of $\hat{F}T_d^h(Nk)$ (or $\hat{B}T_d^h(Nk)$) by
coding the set
of all functions $f_i$ associated with internal nodes $i$
in $I$ by a single function $\hat{f}$ associated with each node
of $\hat{I}$. Here we represent each element
of $[Nk]$ by a pair $\langle i,x\rangle$, where $i\in [N]$
represents a node in $T^h_d$ and $x\in [k]$.
We want to satisfy the following Claim:

{\bf Claim:} If a node $i$ has a value $x$ in $I$ then node $i$ has
value $\langle i,x\rangle$ in $\hat{I}$.

Thus if $i$ is a leaf node, then we define the leaf value for node $i$
in $\hat{I}$ to be $\langle i,x\rangle$, where $x$ is the
value of leaf $i$ in $I$.

We define the common internal node function $\hat{f}$ as follows. 
If nodes $i_1,\ldots,i_d$ are the children of node $j$ in $T^h_d$, then
\begin{equation}\label{e:fhat}
  \hat{f}(\langle i_1,x_1\rangle, \ldots,\langle i_d,x_d\rangle)=
                \langle j,f_j(x_1,\ldots,x_d)\rangle
\end{equation}
The value of $\hat{f}$ is irrelevant (make it $\langle 1,1\rangle$)
if  nodes $i_1,\ldots,i_d$ are not the children of $j$.

An easy induction on the height of a node $i$ shows that the
above {\bf Claim} is satisfied.

Note that the value $x$ of the root node $1$ in $I$ is easily determined
by the value $\langle 1,x\rangle$ of the root in $\hat{I}$.
We specify that the pair $\langle 1,1\rangle$ has value 1 in $[N]$,
so $I$ is a YES instance of the decision problem \bt\ iff $\hat{I}$
is a YES instance of $\hat{B}T_d^h(Nk)$.

To complete the proof of the last sentence in the theorem we note
that the number of bits needed to specify $I$ is
$\Theta(Nk^d\log k)$, and the number of bits to specify $\hat{I}$
is dominated by the number to specify $\hat{f}$, which is
$O((Nk)^d\log(Nk))$.  Thus the transformation from $I$ to $\hat{I}$
is length-bounded by a polynomial in length of its argument,
and it is not hard to see that it can be carried out in log space.

Now we prove the first part of the theorem.
Given an $Nk$-way BP $\hat{B}$ solving $\hat{B}T_d^h(Nk)$
(resp. $\hat{F}T_d^h(Nk)$) we can find a corresponding
$k$-way BP $B$ solving \bt\ (resp. \ft) as follows.

The idea is that on input instance $I$,  $B$ acts like $\hat{B}$
on input $\hat{I}$.   Thus for each state $\hat{q}$ in $\hat{B}$
that queries a leaf node $i$, the corresponding state $q$ in
$B$ queries $i$, and for each possible answer $x\in [k]$, $B$ has an
outedge labelled $x$ corresponding to the edge from $\hat{q}$ labelled
$\langle i,x\rangle$.  If $\hat{q}$ queries $\hat{f}$ at arguments
as in (\ref{e:fhat}) (where $i_1,\ldots,i_d$ are the children of
node $j$) then $q$ queries $f_j(x_1,\ldots,x_d)$ and for each
$x\in [k]$, $q$ has an outedge labelled $x$ corresponding to the
edge from $\hat{q}$
labelled $\langle j,x\rangle$.  If $i_1,\ldots,i_d$ are not the
children of $j$, then the node $q$ is not necessary in $B$,
since the answer to the query is always the default $\langle 1,1\rangle$.

In case $\hat{B}$ is solving the function problem $\hat{F}T_d^h(Nk)$
then each output state labelled $\langle 1,x\rangle$ is relabelled
$x$ in $B$ (recall that the root of $T^h_d$ is number 1).
Any output state $q$ labelled $\langle i,x\rangle$ where $i>1$ will
never be reached in $B$ (since the value of the root node
of $\hat{I}$ always has the form $\langle 1,x\rangle$) so
$q$ can be deleted.
For any edge in $\hat{B}$ leading to $q$ the
corresponding edge in $B$ can lead anywhere.
\end{proof}

One goal of this paper is to motivate trying to show
$BT_d(h,k) \notin \lspace$.
By Theorem \ref{t:single} this is equivalent to showing
$\hat{B}T_d(h,k)\notin \lspace$.  Further our suggested method
is to try proving for each fixed $h$ a lower bound of $\Omega(k^{r(h})$
on the number of states required for a $k$-way BP to solve \ft,
where $r(h)$ is any unbounded function
(see Corollary \ref{c:thegoal} below).  Again acording to
Theorem \ref{t:single} (since $N$ is a constant)  technically
speaking we may as well
assume that all the node functions
in the instance of \ft\ are the same.  However in practice this
assumption is not helpful in proving a lower bound.
For example Theorem \ref{t:beatittwice} states that $k^3$ states
are required for a deterministic $k$-way BP to solve $FT^3_2(k)$,
and the proof assigns three different functions to the three
internal nodes of the binary tree of height 3.

% Thus Theorem \ref{t:single} is just a curiosity.

\subsection{Pebbling}

The pebbling game for dags was defined by Paterson and Hewitt
\cite{pahe70} and was used as an abstraction for deterministic
Turing machine space in \cite{co74}.  Black-white pebbling
was introduced in \cite{cose76} as an abstraction of
nondeterministic Turing machine space (see \cite{nordstrom}
for a recent survey).

Here we define and use three versions of the pebbling game.
The first is a simple `black pebbling' game:  A black
pebble can be placed on any leaf node, and in general
if all children of a node $i$ have pebbles, then one of the
pebbles on the children can be slid to $i$ (this is a
``black sliding move')'.  Any black pebble can be removed
at any time.  The goal is to pebble the root, using as few
pebbles as possible.  The second
version is `whole' black-white pebbling as defined in
\cite{cose76} with the restriction that we do not allow
``white sliding moves''.  Thus if node $i$ has a white
pebble and each child of $i$ has a pebble (either black or white)
then the white pebble can be removed.  (A white sliding move
would apply if one of the children had no pebble, and the
white pebble on $i$ was slid to the empty child.  We do not
allow this.)  A white pebble can be placed on any node at
any time.  The goal is to start and end with no pebbles,
but to have a black pebble on the root at some time.

The third is a new game called {\em fractional pebbling},
which generalizes whole black-white pebbling by allowing the
black and white pebble value of a node to be any real number
between 0 and 1.  However the total pebble value of each
child of a node $i$ must be 1 before the black value of $i$
is increased or the white value of $i$ is decreased.
Figure \ref{f:bin_h3_fract_ub} illustrates two configurations
in an optimal fractional pebbling of the binary tree of
height three using 2.5 pebbles.

Our motivation for choosing these definitions is that we want
pebbling algorithms for trees to closely correspond to $k$-way
branching program algorithms for the tree evaluation problem.

We start by defining fractional pebbling, and then define
the other two notions as restrictions on fractional pebbling.

\begin{defn}[Pebbling]
\label{d:pebbling}
A {\em fractional pebble configuration} on a rooted $d$-ary
tree $T$ is an assignment of
a pair of real numbers $(b(i),w(i))$ to each node $i$ of the tree, where
\begin{align}
   &  0\le b(i),w(i) \label{e:consOne} \\ &  b(i)+w(i)\le 1
   \label{e:consTwo}
\end{align}
 Here $b(i)$ and $w(i)$ are the
{\em black pebble value} and the {\em white pebble value}, respectively,
of $i$, and $b(i)+w(i)$ is the {\em pebble value} of $i$.  The number of
pebbles in the configuration is the sum over all nodes $i$ of the pebble
value of $i$.  The legal pebble moves are as follows (always subject to
maintaining the constraints (\ref{e:consOne}), (\ref{e:consTwo})): (i)
For any node $i$, decrease $b(i)$ arbitrarily, (ii) For any node $i$,
increase $w(i)$ arbitrarily, (iii) For every node $i$,
if each child of
$i$ has pebble value 1, then decrease $w(i)$ to 0, increase $b(i)$
arbitrarily, and simultaneously decrease the black pebble values of
the children of $i$ arbitrarily.

A {\em fractional pebbling} of $T$ using $p$ pebbles is any
sequence of (fractional) pebbling moves on nodes of $T$ which starts
and ends with every node having pebble value 0, and at some point the
root has black pebble value 1, and no configuration has more than $p$
pebbles.

A {\em whole black-white pebbling} of $T$ is a fractional pebbling of
$T$ such that $b(i)$ and $w(i)$ take values in $\{0,1\}$ for every node
$i$ and every configuration.  A {\em black pebbling} is a
black-white pebbling in which $w(i)$ is always 0.
\end{defn}

Notice that rule (iii) does not quite treat black and white pebbles
dually, since the pebble values of the children must each be 1 before any
decrease of $w(i)$ is allowed.  A true dual move would allow increasing
the white pebble values of the children so they all have pebble value 1
while simultaneously decreasing $w(i)$.  In other words, we allow black
sliding moves, but disallow white sliding moves.  The reason for this
(as mentioned above) is
that nondeterministic branching programs can simulate the former,
but not the latter.

We use $\Bpebbles(T)$, $\BWpebbles(T)$, and $\FRpebbles(T)$ respectively
to denote the minimum number of pebbles required to black pebble $T$,
black-white pebble $T$, and fractional pebble $T$.
Bounds for these values are given in Section \ref{s:PebBounds}.
For example for $d=2$ we have $\Bpebbles(T^h_2)= h$,
$\BWpebbles(T^h_2)= \lceil h/2\rceil +1$, and
$\FRpebbles(T^h_2) \le h/2+1$.  In particular
$\FRpebbles(T^3_2) = 2.5$ (see Figure \ref{f:bin_h3_fract_ub}).

\section{Connecting TMs, BPs, and Pebbling}\label{s:Connecting}

Let \fthk\ be the same as \ft\ except now the inputs vary
with both $h$ and $k$, and we assume the input to \fthk\
is a binary string $X$ which codes $h$ and $k$ and
codes each node function $f_i$ for the tree $T^h_d$
by a sequence of $k^d$ binary numbers and
each leaf value by a binary number in $[k]$, so $X$ has length
\begin{equation}\label{e:Flength}
   |X| = \Theta(d^hk^d\log k)
\end{equation}
The output is a binary number in $[k]$ giving
the value of the root.

The problem \bthk\ is the Boolean version of \fthk:
The input is the same, and the instance is true iff the value of the
root is 1.

Obviously \bthk\ and \fthk\ can be solved in polynomial time,
but we can prove a stronger result.

\begin{theorem}\label{t:logDCFL}
The problem \bthk\ is in \logdcfl, even when $d$ is given as an
input parameter.
\end{theorem}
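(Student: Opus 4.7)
The plan is to invoke Sudborough's theorem \cite{su78} characterizing $\logdcfl$ as the class of languages decided by deterministic logspace, polynomial-time auxiliary pushdown automata (AuxPDA), and to exhibit such a machine for \bthk. The natural algorithm is the recursive tree evaluation $\text{eval}(i)$: if $i$ is a leaf, read and return its label; otherwise iteratively evaluate each of the $d$ children of $i$ and return $f_i$ applied to the resulting $d$-tuple. The auxiliary stack holds the frames of this recursion, while the worktape keeps only a constant number of $O(\log|X|)$-bit registers.

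A stack frame for an internal node $i$ contains $i$ itself, an index $j\in\{1,\ldots,d+1\}$ marking which child is about to be evaluated, and the values $v_{c_1},\ldots,v_{c_{j-1}}\in[k]$ already computed for its previously-visited children. When $j$ reaches $d+1$, the automaton uses its worktape to compute the address in $X$ of the entry $f_i(v_{c_1},\ldots,v_{c_d})$, reads that entry, pops the frame, and delivers the value to the caller. At the top level the machine accepts iff the value returned for the root equals $1$.

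For the resource bounds, (\ref{e:Flength}) gives $\log|X| = \Theta(h\log d + d\log k + \log\log k)$. The worktape holds a current-node index of $O(h\log d)$ bits, a few $O(\log|X|)$-bit input pointers, and scratch space for address arithmetic---all within $O(\log|X|)$. The stack carries at most $h$ frames of $O(d\log k+\log d)$ bits each, which is permitted since an AuxPDA's stack is unrestricted. The running time is dominated by the $O(d^h)$ recursive calls; each performs at most $d$ stack operations, one input read (a leaf label or a function-table entry), and a few logspace-computable arithmetic steps. Since $|X|\ge d^h$, the total time is polynomial in $|X|$.

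The one delicate point is that $d$ is an input parameter rather than a constant, so the address arithmetic that locates the entry for $f_i$ at arguments $(v_{c_1},\ldots,v_{c_d})$ inside $X$ must itself be carried out in logspace: one forms a base offset determined by $i$ and a mixed-radix offset in base $k$ determined by the $d$-tuple $(v_{c_j})$. Since all numbers involved have $O(\log|X|)$-bit descriptions (those for $d$, $h$, $k$, $i$ on the worktape and those for the $v_{c_j}$'s on the top stack frame, accessible by a short scan), the iterated addition and multiplication needed to compute this address sits comfortably in $\lspace$, and the construction goes through.
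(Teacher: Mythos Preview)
Your proof is correct and follows essentially the same approach as the paper: invoke Sudborough's characterization of \logdcfl\ via logspace polytime deterministic AuxPDAs, and evaluate the tree by a depth-first recursion whose frames live on the pushdown while the worktape holds only $O(\log|X|)$ bits. The paper's proof is terser but makes the same key resource observation you rely on, namely that $d\log k \le \log|X|$ so the $d$ child values (and the frame more generally) fit in logspace; your treatment of the address arithmetic for variable $d$ is an explicit detail the paper leaves implicit, and your phrase ``accessible by a short scan'' should be read as popping the top frame into the worktape and re-pushing it, which is fine since the frame has $O(\log|X|)$ bits.
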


\begin{proof}
By \cite{su78} if suffices to show that \bthk\ is solved
by some deterministic auxiliary pushdown automaton $M$ in $\log$ space
and polynomial time.  The algorithm for $M$ is to use its stack
to perform a depth-first search of the tree $T^h_d$, where for
each node $i$ it keeps a partial list of the values of the
children of $i$, until
it obtains all $d$ values, at which point it computes the value of
$i$ and pops its stack, adding that value to the list for
the parent node.

Note that the length $n$ of an input instance is about $d^k k^d\log k$
bits, so $\log n > d\log k$, so $M$ has ample space on its 
work tape to write all $d$ values of the children of a node $i$.
\end{proof}

The best known upper bounds on branching program size for \ft\
grow as $k^{\Omega(h)}$.
The next result shows (Corollary~\ref{c:thegoal}) that any lower bound
with a nontrivial dependency on $h$ in the exponent of $k$ for
deterministic (resp. nondeterministic) BP size would
separate \lspace (resp. \nl) from \logdcfl.

\begin{theorem}\label{t:goal}
For each $d \ge 2$, if \bthk\ is in \lspace\ (resp.\ \nl)
then there is a constant $c_d$ and a function $f_d(h)$
such that $\dFstate \le f_d(h)k^{c_d}$
(resp. $\ndFstate \le f_d(h)k^{c_d}$) for all $h,k\ge 2$.
\end{theorem}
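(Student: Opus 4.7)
The plan is to apply the standard simulation of a space-bounded Turing machine by a branching program, combined with Lemma~\ref{l:FvsB} to bridge the Boolean and function versions. First, by Lemma~\ref{l:FvsB} we have $\dFstate \le k \cdot \dBstate$ and $\ndFstate \le k \cdot \ndBstate$, so it suffices to prove bounds of the form $f_d(h)\,k^{c_d}$ on $\dBstate$ and $\ndBstate$; the extra factor of $k$ can be absorbed by increasing $c_d$ by one.

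Assume \bthk\ is in \lspace\ (the \nl\ case is analogous). Let $M$ be a Turing machine that decides \bthk\ using space $s(n) = O(\log n)$, where $n = |X|$ is the binary input length. By equation~(\ref{e:Flength}), for fixed $d$ we have $n = \Theta(d^h k^d \log k)$, hence
\begin{equation*}
  s(n) \;\le\; a_d\cdot h \;+\; b_d\cdot \log k
\end{equation*}
for constants $a_d, b_d$ depending only on $d$. Applying the textbook configuration-based construction turns $M$ into a 2-way branching program whose states are the reachable configurations of $M$ on inputs of length $n$, with transitions mirroring $M$'s and querying a bit of the input precisely when $M$ reads that bit. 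This 2-way BP has at most $2^{O(s(n))} = 2^{O(a_d h)}\cdot k^{O(b_d)}$ states and solves \bthk\ on binary-encoded instances. Converting it into a $k$-way BP without any increase in the number of states is routine and is exactly the simulation mentioned in the introduction: each 2-way state that queries the $j$-th bit of the binary encoding of some $k$-ary input symbol $i$ becomes a $k$-way state that queries symbol $i$, whose $k$ outgoing edges are dictated by the value of the $j$-th bit of the resulting $k$-ary answer.

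Setting $f_d(h) = 2^{O(a_d h)}$ and letting $c_d$ be a suitable constant (with one extra unit added to cover Lemma~\ref{l:FvsB}) yields the claimed inequality. The nondeterministic case is handled identically using a nondeterministic $M$ and nondeterministic branching programs throughout. There is no serious technical obstacle; the only point that really needs a line of care, and is arguably the crux of the argument, is the decomposition $s(n)\le a_d h + b_d \log k$, i.e.\ that for fixed $d$ the quantity $\log n$ splits additively into a term linear in $h$ and a term logarithmic in $k$. It is precisely this additive split that lets the state count factor as $f_d(h)\cdot k^{c_d}$ rather than, say, $(hk)^{O(d)}$, and it is the reason $d$ must be held fixed in the statement of the theorem.
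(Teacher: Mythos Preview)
Your proof is correct and follows essentially the same route as the paper: reduce to the Boolean problem via Lemma~\ref{l:FvsB}, bound the number of configurations of a log-space machine by a polynomial in the input length $n=\Theta(d^hk^d\log k)$, and observe that this polynomial factors as $f_d(h)\cdot k^{c_d}$. The paper carries out the last step by writing $(d^hk^d\log k)^{c'}\le d^{c'h}k^{c'(d+1)}$ directly rather than first splitting $\log n$ additively, but this is the same computation.
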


\begin{proof}
By Lemma \ref{l:FvsB} it suffices to prove this for \dBstate\
and \ndBstate\ instead of \dFstate\ and \ndFstate.
In general a Turing machine which can enter at most $C$ different
configurations on all inputs of a given length $n$ can be
simulated (for inputs of length $n$) by a binary (and hence $k$-ary)
branching program with $C$ states.
Each Turing machine using space $O(\log n)$ has at most $n^c$
possible configurations on any input of length $n \ge 2$,
for some constant $c$.  By (\ref{e:Flength}) the input for
\bthk\ has length $n=\Theta(d^hk^d\log k)$, so there are
at most $(d^hk^d\log k)^{c'}$ possible configurations for a log
space Turing machine solving \bthk, for some constant $c'$.  So we can
take $f_d(h) = d^{c'h}$ and $c_d = c'(d+1)$.
\end{proof}

\begin{cor}\label{c:thegoal}
Fix $d \ge 2$ and any unbounded function $r(h)$. If
$\dFstate\in\Omega(k^{r(h)})$ then $\bthk\notin\lspace$.
If $\ndFstate\in\Omega(k^{r(h)})$
then $\bthk\notin\nl$.
\end{cor}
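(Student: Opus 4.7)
The plan is to prove the corollary as a straightforward contrapositive of Theorem~\ref{t:goal}. Both halves have identical structure, so I describe only the deterministic case. I assume $\bthk\in\lspace$ and aim to derive a contradiction with the hypothesis $\dFstate\in\Omega(k^{r(h)})$.

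The first step is to invoke Theorem~\ref{t:goal} to obtain a constant $c_d$ (depending only on $d$) and a function $f_d(h)$ such that $\dFstate \le f_d(h)\,k^{c_d}$ for all $h,k\ge 2$. The crucial feature here is that the exponent $c_d$ does not depend on $h$; only the multiplicative factor $f_d(h)$ does. The second step is to use the hypothesis that $r$ is unbounded: choose $h_0$ so large that $r(h_0) > c_d + 1$, and fix this value of $h$. Then for the single-parameter function $k\mapsto \dFstate|_{h=h_0}$, the upper bound from Theorem~\ref{t:goal} is a polynomial in $k$ of degree $c_d$, namely $f_d(h_0)\,k^{c_d}$, while the assumed lower bound reads $\dFstate|_{h=h_0} \in \Omega(k^{r(h_0)})$, which grows at least as fast as $k^{c_d+1}$. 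Comparing the two as $k\to\infty$ yields a contradiction, so $\bthk\notin\lspace$. The nondeterministic case is identical, with $\nl$ and $\ndFstate$ in place of $\lspace$ and $\dFstate$.

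The main thing to get right, and essentially the only subtle point, is the interpretation of $\Omega(k^{r(h)})$ when both $h$ and $k$ vary. The natural reading — and the one needed for the argument — is that there exists $c>0$ and $h_1,k_1$ such that $\dFstate \ge c\,k^{r(h)}$ whenever $h\ge h_1$ and $k\ge k_1$. Under this reading the argument above goes through: fix any $h_0 \ge h_1$ with $r(h_0) > c_d+1$, and then as $k$ grows the lower bound beats any fixed-degree polynomial in $k$. I would add a brief sentence clarifying this convention before the main calculation, since the corollary is the headline motivation for the rest of the paper. No other obstacles are expected; everything else is a one-line appeal to Theorem~\ref{t:goal} and the unboundedness of $r$.
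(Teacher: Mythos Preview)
Your proposal is correct and is exactly the intended argument: the paper states Corollary~\ref{c:thegoal} without proof, as an immediate contrapositive of Theorem~\ref{t:goal}, and your write-up simply makes explicit the step of choosing $h_0$ with $r(h_0)>c_d$ and letting $k\to\infty$. Your remark on the intended reading of $\Omega(k^{r(h)})$ is a useful clarification that the paper leaves unstated.
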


The next result connects pebbling upper bounds with upper
bounds for thrifty branching programs.

\begin{theorem}\label{t:pebSim}
(i) If $T^h_d$ can be black pebbled with $p$ pebbles, then deterministic
thrifty branching programs with $O(k^p)$ states can solve
\ft\ and \bt. 

(ii) If $T^h_d$ can be fractionally pebbled with $p$
pebbles then nondeterministic thrifty branching programs can
solve \bt\ with $O(k^p)$ states.
\end{theorem}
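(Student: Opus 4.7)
The plan is to construct the branching program by walking along the given pebbling sequence and maintaining the invariant that each BP state at ``stage'' $s$ records a value assignment consistent with the pebble configuration $\pi_s$ at that stage.

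For part (i), let $\pi_0, \pi_1, \ldots, \pi_t$ be a black pebbling of $T^h_d$ with at most $p$ pebbles. At stage $s$ I would introduce a layer of BP states, one for each assignment of values in $[k]$ to the nodes currently pebbled in $\pi_s$, giving at most $k^p$ states per layer and $O(k^p)$ states overall (the number of moves $t$ depends only on the tree, not on $k$). Transitions mirror pebble moves: placing a black pebble on a leaf $i$ becomes a $k$-way query on leaf $i$ whose $k$ outgoing edges lead to the $k$ extended states of layer $s{+}1$; a black sliding move onto an internal node $i$ becomes a $k$-way query to $f_i(v_{j_1},\ldots,v_{j_d})$, where $j_1,\ldots,j_d$ are the children of $i$, using the child values recorded in the current state; removing a pebble simply projects away one coordinate (folded into the next transition). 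An easy induction on $s$ shows that the recorded value of each pebbled node equals the true subtree value $v_i$, so every query on $f_i$ is thrifty by construction. When the root is pebbled, the state exposes $v_1$, answering both \ft\ and \bt.

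For part (ii) I would extend the template in two ways. First, white pebbles are handled by nondeterminism: placing a white pebble on $i$ corresponds to a nondeterministic $k$-way guess of $v_i$, and removing a white pebble (legal only when each child of $i$ is fully pebbled) becomes a deterministic query to $f_i$ at the child values, followed by an abort unless the result matches the guess. This verification keeps the program thrifty because the child values used are the true $v_{j_1},\ldots,v_{j_d}$, and it ensures that only accepting computations correspond to correct guesses. Second, fractional weights are handled by working, without loss of generality, with a pebbling whose $b(i), w(i)$ are integer multiples of $1/\log k$, so that $k^{b(i)}$ and $k^{w(i)}$ are integers; a node $i$ with pebble value $b(i)+w(i)$ is then represented by splitting $v_i \in [k]$ into $\log k$ independent bits and allotting $b(i)\log k$ ``verified'' bits and $w(i)\log k$ ``guessed'' bits. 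Each atomic pebble move changes the total weight by some $\Delta$, and the corresponding interlayer transition multiplies or divides the state count by $k^{|\Delta|}$, yielding an overall bound of $O(k^p)$ states.

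The main obstacle is the fractional bookkeeping in part (ii), specifically on sliding moves. A black sliding move simultaneously raises $b(i)$ and lowers $b$ on the children of $i$, and one must argue that the $f_i$ query can be realized incrementally, so that at each atomic rise of $b(i)$ by $1/\log k$ only a single bit of $f_i(v_{j_1},\ldots,v_{j_d})$ is queried and the change in state count matches the change in pebble weight; a naive implementation would cost $k^d$ transitions instead of $k^{|\Delta|}$. Once this accounting is in place, thriftiness follows from the same invariant as in (i): whenever $f_i$ is actually queried, every child of $i$ has full pebble value $1$ and the stored child values are the true $v_{j_1},\ldots,v_{j_d}$.
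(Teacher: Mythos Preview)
Your approach matches the paper's: layer the BP by the pebbling sequence, with each state recording (partial) node values according to the current pebble weights; queries implement black increases, nondeterministic branching implements white increases, and verification (query plus abort on mismatch) implements white removals.

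The ``main obstacle'' you flag is not real. A single $k$-way query to the variable $f_i(a_1,\ldots,a_d)$ returns all of $v_i\in[k]$ at once; there is no bit-by-bit querying in this model, and hence no need to realize anything ``incrementally.'' At the instant of a type-(iii) move every child of $i$ has pebble value $1$, so the querying state already encodes all $d\log k$ child-value bits within its overall budget of roughly $p\log k$ bits; the $k$ outedges then lead to next-layer states that retain only the desired $\lceil b'(i)\log k\rceil$ bits of $v_i$ and drop the unwanted children's bits. No intermediate blow-up to $k^d$ beyond the $k^p$ bound occurs, because the $d$ fully pebbled children are already counted in the $\le p$ total. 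The paper handles the non-integrality not by discretizing the pebbling but by rounding each of $b(i)\log k$ and $w(i)\log k$ up to the next integer, which costs at most two extra bits per tree node and hence only a constant factor $2^{O(T)}$ in the state count --- equivalent in effect to your ``multiples of $1/\log k$'' assumption but without the unjustified WLOG.
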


\begin{proof}
Consider the sequence $C_0,C_1,\ldots C_\tau$ of pebble configurations
for a black pebbling of $T^h_d$ using $p$ pebbles.  We may as
well assume that the root is pebbled in configuration $C_\tau$,
since all pebbles could be removed in one more step at no
extra cost in pebbles.  We design
a thrifty branching program $B$ for solving \ft\ as follows.
For each pebble configuration $C_t$, program $B$ has $k^p$ states; one
state for each possible assignment of a value from $[k]$ to each
of the $p$ pebbles.  Hence $B$ has $O(k^p)$ states, since $\tau$
is a constant independent of $k$.  Consider an input $I$ to \ft, and let
$v_i$ be the value in $[k]$ which $I$ assigns to node $i$ in
$T^h_d$ (see Definition \ref{d:treeEval}).  We design $B$ so that on
$I$ the computation of $B$ will be a state sequence
$\alpha_0,\alpha_1,\ldots,\alpha_\tau$, where the state $\alpha_t$
assigns to each pebble the value $v_i$ of the node $i$ that it is on.
(If a pebble is not on any node, then its value is 1.)

For the initial pebble configuration no pebbles have been assigned
to nodes, so the initial state of $B$ assigns the value 1 to each
pebble.  In general if $B$ is in a state $\alpha$ corresponding
to configuration $C_t$, and the next configuration $C_{t+1}$
places a pebble $j$ on node $i$, then the state $\alpha$
queries the node $i$ to determine $v_i$, and moves to a new
state which assigns $v_i$ to the pebble $j$
and assigns 1 to any pebble which is removed from the
tree.  Note that if $i$ is an internal node, then all children
of $i$ must be pebbled at $C_t$, so the state $\alpha$
`knows' the values $v_{j_1},\ldots,v_{j_d}$ of the children of
$i$, so $\alpha$ queries $f_i(v_{j_1},\ldots,v_{j_d})$.

When the computation of $B$ reaches a state $\alpha_\tau$ corresponding
to $C_\tau$, then $\alpha_\tau$ determines the value of the root
(since $C_\tau$ has a pebble on the root), so $B$ moves to
a final state corresponding to the value of the root.

The argument for the case of whole black-white pebbling is
similar, except now the value for each white pebble represents
a guess for the value $v_i$ of the node it is on.  If
the pebbling algorithm places a white pebble $j$ on a node at some
step, then the corresponding state of $B$ nondeterministically
moves to any state in which the values of all pebbles except $j$
are the same as before, but the value of $j$ can be any value
in $[k]$.  If the pebbling algorithm removes a white pebble $j$
from a node $i$, then the corresponding state has a guess $v'_i$
for the value of $i$, and either $i$ is a leaf, or all children
of $i$ must be pebbled.  The corresponding state of $B$
queries $i$ to determine its true value $v_i$.  If $v_i \ne v'_i$
then the computation aborts (i.e. all outedges from the state have
label $v'_i$).  Otherwise $B$ assigns $j$ the value 1 and continues.

When $B$ reaches a state $\alpha$ corresponding to a pebble configuration
$C_t$ for which the root has a black pebble $j$, then $\alpha$
knows whether or not the tentative value assigned to the root is 1.
All future states remember whether the tentative value is 1.
If the computation successfully (without aborting) reaches a state
$\alpha_\tau$ corresponding to the final pebble configuration
$C_\tau$, then $B$ moves to the final state corresponding to
output 1 or output 0, depending on whether the tentative root
value is 1.

Now we consider the case in which $C_0,\ldots,C_\tau$ represents
a fractional pebbling computation.   If $b(i),w(i)$ are the black
and white pebbled values of node $i$ in configuration $C_t$, then
a state $\alpha$ of $B$ corresponding to $C_t$ will remember
a fraction $b(i) + w(i)$ of the $\log k$ bits specifying the value
$v_i$ of the node $i$, where the fraction $b(i)$ of bits are verified,
and the fraction $w(i)$ of bits are conjectured.
In general these numbers
of bits are not integers, so they are rounded up to the next integer.
This rounding introduces at most two extra bits for each node in $T^h_d$,
for a total of at most $2T$ extra bits, where $T$ is the number of
nodes in $T^h_d$.  Since the sum over all nodes of all pebble
values is at
most $p$, the total number of bits that need to be remembered for a
given pebble configuration is at most $p \log k + 2T$, where $T$
is a constant.  Associated with each step in the fractional pebbling
there are $2^{p\log k +2T} = O(k^p)$ states in the branching program,
one for each setting of these bits.   These bits can be updated
for each of the three possible fractional pebbling moves
(i), (ii), (iii) in Definition \ref{d:pebbling} in a manner similar
to that for whole black-white pebbling.

It is easy to see that in all cases the branching programs
described satisfy the thrifty requirement that an internal node
is queried only at the correct values for its children
(or, in the black-white and fractional cases, the program
aborts if an incorrect query is made because of an incorrect
guess for the value of a white-pebbled node).
\end{proof}

\begin{cor}
$\dFstate = O(k^{\Bpebbles(T^h_d)})$ and
$\ndFstate = O(k^{\FRpebbles(T^h_d)})$.
\end{cor}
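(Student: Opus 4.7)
The plan is to observe that both bounds are immediate rephrasings of Theorem \ref{t:pebSim}, with the minimum pebble numbers $\Bpebbles(T^h_d)$ and $\FRpebbles(T^h_d)$ plugged in for the parameter $p$.

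For the first equality, I would set $p = \Bpebbles(T^h_d)$. By the definition of $\Bpebbles$, there exists a black pebbling of $T^h_d$ using exactly $p$ pebbles. Applying Theorem \ref{t:pebSim}(i) to this pebbling yields a deterministic thrifty branching program with $O(k^p)$ states that solves \ft. Since $\dFstate$ is the minimum number of states over all deterministic $k$-way branching programs solving \ft, we conclude $\dFstate = O(k^{\Bpebbles(T^h_d)})$.

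For the second equality, I would set $p = \FRpebbles(T^h_d)$ and apply Theorem \ref{t:pebSim}(ii) to obtain a nondeterministic thrifty branching program with $O(k^p)$ states. Although Theorem \ref{t:pebSim}(ii) is phrased in terms of \bt, the construction in its proof already records a partially verified/partially conjectured representation of every pebbled node's $\log k$-bit value; in particular, at the step when the root attains black pebble value $1$, the current state stores all $\log k$ bits of the root's value in $[k]$. Consequently, replacing the final binary branch (``is the root value $1$?'') with a $k$-way branch on the actual stored root value converts the program into one solving \ft\ with the same $O(k^p)$ state count, giving $\ndFstate = O(k^{\FRpebbles(T^h_d)})$.

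There is essentially no obstacle here; the only nontrivial point is the remark in the second paragraph that the fractional pebbling simulation from Theorem \ref{t:pebSim}(ii) already carries enough information about the root to yield the function problem bound directly, so that we do not pay the extra factor of $k$ that would come from routing through Lemma \ref{l:FvsB} ($\ndFstate \le k \cdot \ndBstate$).
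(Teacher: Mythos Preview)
Your argument for the deterministic bound is fine and is exactly what the paper intends (the paper gives no proof; the corollary is stated as immediate from Theorem~\ref{t:pebSim}).

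For the nondeterministic bound your patch has a real gap. You correctly notice that Theorem~\ref{t:pebSim}(ii) only yields a program for \bt, not \ft, and you try to upgrade it by observing that at the moment the root attains black value $1$ the state records the full root value. That is true at that instant, but it is not enough. In the fractional (and black--white) simulation the computation does \emph{not} end when the root is pebbled: the program must continue in order to discharge the remaining white pebble weight, and during that tail the root pebble has been removed, so the root value is no longer part of the pebble-encoding. In the paper's construction this tail carries only one extra bit (``was the root value $1$?''); to solve \ft\ you would have to carry $\log k$ extra bits through every subsequent state. Since the tail of the pebbling can itself hit total pebble weight $p$ again (e.g.\ in the optimal $2.5$-pebble fractional pebbling of $T^3_2$ one returns to weight $2.5$ while removing the white half-pebble on node~$2$), this multiplies the tail state count by $k$ and gives only $O(k^{p+1})$, which is exactly the Lemma~\ref{l:FvsB} bound you were trying to avoid.

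In fact the paper never elsewhere claims $\ndFstate = O(k^{\FRpebbles(T^h_d)})$: Theorem~\ref{t:BPUpper} states the fractional-pebbling upper bound only for $\ndBstate$ (equation~(\ref{e:nBUpper})), and Corollary~\ref{c:HtThree} gives tight height-$3$ bounds only for $\nddthreeBstate$, not for the nondeterministic function problem. So the second half of the corollary is almost certainly a typo for $\ndBstate$, in which case it \emph{is} immediate from Theorem~\ref{t:pebSim}(ii) and no extra argument is needed.
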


\section{Pebbling Bounds}\label{s:PebBounds}

\subsection{Previous results}

We start by summarizing what is known about whole black and black-white
pebbling numbers as defined at the end of Definition \ref{d:pebbling}
(i.e. we allow black sliding moves but not white sliding moves).

The following are minor adaptations of results and techniques that have
been known since work of Loui, Meyer auf der Heide and Lengauer-Tarjan
\cite{loui, meyeraufderheide,lengauer-tarjan} in the late '70s. They
considered pebbling games where sliding moves were either disallowed or
permitted for both black and white pebbles, in contrast to our results
below.

We always assume $h\ge 2$ and $d\ge 2$.

\begin{theorem} \label{t:blackSliding}
$\Bpebbles(T^{h}_{d}) = (d-1)h - d +2$.
\end{theorem}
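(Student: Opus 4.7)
The plan is to prove matching upper and lower bounds by induction on $h$, with $d$ fixed. Let $P(h) := \Bpebbles(T^h_d)$. The target formula is equivalent to the recurrence $P(h) = P(h-1) + (d-1)$ with base case $P(2) = d$, so the whole argument reduces to showing this recurrence in both directions.

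\textbf{Upper bound.} I would give a recursive strategy that pebbles the subtrees rooted at the $d$ children of the root one after another, keeping a pebble on the root of each already-finished child. When we begin the $j$-th child's subtree, $j-1$ pebbles already sit on the previously handled siblings, so the peak during that phase is $P(h-1) + (j-1)$, which maximises to $P(h-1) + (d-1)$ during the $d$-th phase. Once all $d$ children carry pebbles, one black sliding move promotes a pebble to the root. The base case $h=2$ is the same strategy applied to leaves, giving $P(2) = d$.

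\textbf{Lower bound.} Take any black pebbling of $T^h_d$ using $p$ pebbles, and let $f$ be the first time a pebble is placed on the root; just before $f$, all $d$ children $c_1,\ldots,c_d$ of the root are simultaneously pebbled. For each $j$, let $b_j$ be the latest time before $f$ at which the entire subtree rooted at $c_j$ is empty of pebbles (well-defined because every subtree is empty at time $0$). Relabel so that $b_1 \le b_2 \le \cdots \le b_d$; by definition of $b_j$, the subtree at $c_j$ contains at least one pebble at every time in $(b_j, f]$, and in particular each subtree $c_j$ with $j<d$ carries a pebble throughout $(b_d, f]$. Now consider the subsequence of moves lying inside the subtree at $c_d$ and during $(b_d, f]$: each such move is legal in that subtree in isolation (the preconditions of a black sliding move involve only children of the target node), and the subsequence starts empty and at some moment pebbles the root $c_d$. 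A standard truncation argument (freely delete any remaining pebbles at the end) then forces this subsequence to reach $P(h-1)$ pebbles on the $c_d$-subtree at some time $t^\star \in (b_d, f]$, by the inductive lower bound for $T^{h-1}_d$. At $t^\star$ the other $d-1$ subtrees still each carry a pebble, so $p \ge P(h-1) + (d-1)$, as required.

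\textbf{Main obstacle.} The subtle step is the lower bound, because arbitrary pebblings may interleave work across the $d$ sibling subtrees in complicated ways, placing, removing and replacing pebbles on each child many times before $f$, and the peak on the $c_d$-subtree need not coincide with the moment its siblings are pebbled. The device that tames this is focusing on the child whose subtree emptied latest (i.e. the one with largest $b_j$): this choice guarantees that throughout the critical interval in which that subtree is forced to rise from empty to pebbled-at-its-root, every other sibling subtree must already be carrying a pebble. Aligning the inductive hypothesis with this interval is the only place in the proof where a nontrivial combinatorial insight is required; the rest is routine bookkeeping.
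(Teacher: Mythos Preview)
Your proposal is correct and follows essentially the same approach as the paper: both establish the recurrence $P(h)=P(h-1)+(d-1)$ with base case $P(2)=d$, and both prove the lower bound by locating a moment at which one child subtree is forced to carry $P(h-1)$ pebbles while each of the other $d-1$ subtrees already carries at least one. The only difference is the device used to pin down that moment: the paper takes the \emph{last} time $t'$ before the root is pebbled at which \emph{some} child subtree holds $P(h-1)$ pebbles, and argues that any other subtree empty at $t'$ would have to be pebbled afterwards (forcing a later $P(h-1)$-peak, contradicting maximality of $t'$); you instead single out the child subtree whose last-empty time $b_j$ is largest and apply induction to that subtree over $(b_d,f]$. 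These are dual ways of organising the same counting, and neither is materially simpler than the other.
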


\begin{proof} For $h=2$ this gives $\Bpebbles(T^{2}_{d}) = d$, which is
obviously correct.  In general we
show $\Bpebbles(T^{h+1}_{d}) = \Bpebbles(T^h_{d}) + d - 1$, from which
the theorem follows.

The following pebbling strategy gives the upper bound: Let the root be
node $1$ and the children be $2 \ldots d+1$. Pebble the nodes
$2 \ldots d+1$
in order using the optimal number of pebbles for $T^{h-1}_{d}$,
leaving a black pebble at each node. Note that for the black
pebble game, the complexity of pebbling in the game where a
pebble remains on the root is the
same as for the game where the root has a black pebble on it at
some point.  The maximum number of pebbles at any point on the tree is
$d-1 + \Bpebbles(T^{h-1}_{d})$. Now slide the black pebble from node $1$ to
the root, and then remove all pebbles.

For the lower bound, consider the time $t$ at which the children of the
root all have black pebbles on them. There must be a final time $t'$
before $t$ at which one of the sub-trees rooted at $2,3, \ldots d+1$ had
$T^{h}_{d}$ pebbles on it. This is because pebbling any of these
subtrees requires at least $T^{h}_{d}$ pebbles, by definition.
At time $t'$, all the other subtrees must
have at least 1 black pebble each on them. If not, then there is a subtree
$T$ which does not, and it would have to be pebbled before time $t$, which
contradicts the definition of $t'$. Thus at time $t'$, there are at least
$T^{h}_{d} + d - 1$ pebbles on the tree.
\end{proof}

\begin{theorem}\label{t:BSlideW}
For $d=2$ and $d$ odd:\,
\begin{equation}\label{e:dOdd}
\BWpebbles(T^{h}_{d}) = \lceil(d-1)h/2\rceil +1
\end{equation}
For $d$ even:\, 
\begin{equation}\label{e:dEven}
\BWpebbles(T^{h}_{d}) \leq \lceil(d-1)h/2\rceil +1
\end{equation}
When $d$ is odd, this number is the same as when white sliding
moves are allowed.
\end{theorem}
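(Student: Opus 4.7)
The plan is to establish the theorem in two parts: an upper bound (the $\leq$ direction) that works for all $d \geq 2$, and a matching lower bound (the $\geq$ direction) for $d = 2$ and for $d$ odd.

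For the upper bound I would proceed by induction on $h$, using a recursive strategy at the root. Let $c_1,\dots,c_d$ denote the children of the root, and partition them into $d-w$ ``computed'' children and $w$ ``guessed'' children, with $w$ to be chosen optimally in terms of $d$. First, black-pebble $c_1,\dots,c_{d-w}$ in sequence using the inductive strategy on $T^{h-1}_d$; the peak during this phase is $\BWpebbles(T^{h-1}_d)+(d-w-1)$, since prior blacks stay on the tree. Next, place white pebbles on $c_{d-w+1},\dots,c_d$, apply rule (iii) to put a black on the root while simultaneously dropping all the computed blacks, and then remove the black on the root. Finally, verify the $w$ whites one at a time. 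To close the induction I would introduce an auxiliary quantity $V(T^h_d)$ equal to the minimum number of pebbles needed to erase a white from the root starting from the one-pebble configuration $\{\text{white root}\}$, and prove, simultaneously with the bound on $\BWpebbles$, that $V(T^h_d)\leq \lceil (d-1)h/2\rceil+1$ (up to a small additive slack). The two phases are balanced by choosing $w\approx (\BWpebbles(T^{h-1}_d)-V(T^{h-1}_d)+d)/2$, after which both the ``compute'' and ``verify'' peaks come in at the target value $\lceil (d-1)h/2\rceil+1$.

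For the lower bound I would adapt the standard bottleneck argument of Loui, Meyer auf der Heide, and Lengauer--Tarjan. Consider a minimum pebbling and focus on the last moment $t^\star$ before the root first receives a black pebble at which all $d$ root-children simultaneously carry pebble value $1$. At $t^\star$ every child subtree carries at least one pebble, and the child whose root-pebble was placed most recently must itself have undergone a pebbling peak of at least $\BWpebbles(T^{h-1}_d)$ within its subtree. This yields the per-level inequality $\BWpebbles(T^h_d)\geq \BWpebbles(T^{h-1}_d)+(d-1)$ in the all-black case. To sharpen this to $\lceil (d-1)h/2\rceil+1$ I would reason over two consecutive levels: every white pebble placed on a child root must later be verified, and the verification reaches another peak of size at least $\BWpebbles(T^{h-2}_d)+(d-1)$, so amortizing gives an average cost of $(d-1)/2$ per level plus one for the pebble carried across levels. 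When $d$ is odd (or $d=2$) the parity of $d-1$ makes this averaging round exactly, matching the upper bound; when $d$ is even the averaging loses a fractional pebble, which is why equality is only claimed in the odd case.

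The main obstacle I expect is the refinement of the bottleneck argument to the exact $\lceil (d-1)h/2\rceil+1$ bound. The naive one-level recursion only yields $\BWpebbles(T^{h-1}_d)+(d-1)$, which is too weak by roughly a factor of two; the two-level amortization requires careful bookkeeping so that each pebble is counted once and the verification cost of every white is attributed to the correct level. The final claim of the theorem---that disallowing white sliding does not change the bound for $d$ odd---then follows as an immediate corollary, because the lower-bound inequalities never invoked white sliding, while the classical (with-white-sliding) upper bound of $\lceil (d-1)h/2\rceil+1$ meets it from above.
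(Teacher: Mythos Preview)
Your upper-bound plan is in the right spirit but misses the key strengthening that makes the induction close. The paper does not introduce an auxiliary verification quantity $V$; instead it strengthens the induction hypothesis with a \emph{critical time}: during the pebbling of $T^h_d$ there is a moment when the root carries a black pebble and the tree carries at most $\BWpebbles(T^h_d)-(d-1)/2$ pebbles total (for odd $d$; for even $d$ the slack is $d-1$ and the recursion jumps two levels, $\BWpebbles(T^{h+2}_d)\le\BWpebbles(T^h_d)+d-1$). This slack is exactly what lets you park $(d-1)/2$ black pebbles on the early children, run the inductive procedure on the middle child up to its critical time, white-pebble the remaining $(d-1)/2$ children, and slide to the root---all within budget. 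Your balancing of $w$ against $V$ ``up to a small additive slack'' is not obviously convertible into an exact bound.

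The lower bound is where your plan has a genuine gap. You correctly observe that the one-level bottleneck gives only $\BWpebbles(T^{h-1}_d)+(d-1)$ and propose a ``two-level amortization'' to recover the missing factor of two, but you give no mechanism for it, and the sentence about verification peaks of size $\BWpebbles(T^{h-2}_d)+(d-1)$ does not lead anywhere concrete. The paper's argument for odd $d$ is not an amortization over two levels; it is a single-level argument with a different idea. For each child $i$ of the root, let $t_i$ be a time at which the subtree rooted at $i$ carries $\BWpebbles(T^{h-1}_d)$ pebbles (such a time exists by induction, and the induction is strengthened to allow white sliding and either-color root pebbling, which is also what yields the final sentence of the theorem). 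Order the $t_i$'s and let $m$ be the median child. If $t_m$ precedes the root-pebbling time $t$, then each of the $(d-1)/2$ children with $t_i<t_m$ already carries a pebble at time $t_m$ (else the earlier work on it was wasted); if $t_m>t$, the $(d-1)/2$ children with $t_i>t_m$ carry a pebble at $t_m$ symmetrically. Either way the tree holds $\BWpebbles(T^{h-1}_d)+(d-1)/2$ pebbles at time $t_m$. This median-of-peak-times trick is the missing idea. For $d=2$ the paper does argue over two levels (grandchildren $3,4,5$), but again the crux is ordering four designated times $t_1,t_3,t_4,t_5$ and a case split on where the root-pebbling time $t_1$ falls among them---not an amortized accounting.
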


\begin{proof}
We divide the proof into three parts.

\medskip

\noindent
{\bf Part I:}\\
We show (\ref{e:dOdd}) when $d$ is odd.

For $h=2$ this gives $\BWpebbles(T^{2}_{d}) = d$, which is
obviously correct.  In general for odd $d$ we
show
\begin{equation}\label{e:BSWrec}
\BWpebbles(T^{h+1}_{d}) = \BWpebbles(T^h_{d}) + (d - 1)/2
\end{equation}
from which the theorem follows for this case.

For the upper bound for the left hand side,
we strengthen the induction hypothesis by
asserting that during the pebbling there is a {\em critical time}
at which the root has a black pebble and there are at most
$\BWpebbles(T^h_d)-(d-1)/2$ pebbles on the tree (counting the pebble
on the root).  This can be made true when $h=2$ by removing all
the pebbles on the leaves after the root is pebbled.

To pebble the tree $T^{h+1}_d$, note that we are allowed $(d-1)/2$
extra pebbles over those required to pebble $T^h_d$.  Start
by placing black pebbles on the left-most $(d-1)/2$ children
of the root, and removing all other pebbles.  Now go through
the procedure for pebbling the middle principal subtree, stopping at the
critical time, so that there is a black pebble on the middle
child of the root and at most $\BWpebbles(T^h_d)-(d-1)/2$ pebbles on
the middle subtree.  Now place white pebbles on the remaining
$(d-1)/2$ children of the root, slide a black pebble to the root,
and remove all black pebbles on the children of the root.  This is
the critical time for pebbling $T^{h+1}_d$: note that there are
at most $\BWpebbles(T^h_d)$ pebbles on the tree (we removed the black
pebble on the root of the middle subtree).

Now remove the pebble on the root and remove all pebbles on
the middle  subtree by completing its pebbling
(keeping the $(d-1)/2$ white pebbles on the children in place).
Finally remove the remaining $(d-1)/2$ white pebbles one by one,
simply by pebbling each subtree, and removing the white pebble
at the root of the subtree instead of black-pebbling it.

To prove the lower bound for the left hand side of (\ref{e:BSWrec}),
we strengthen the induction hypothesis so that now a black-white
pebbling allows white sliding moves, and the root may be pebbled
by either a black pebble or a white pebble.  (Note that for the
base case the tree $T^2_d$ still requires $d$ pebbles.)
Consider such a pebbling of $T^{h+1}_d$
which uses as few moves as possible.  Consider a time $t$
at which all children of the root have pebbles on them (i.e.
just before the root is black pebbled or just after a white
pebble on the root is removed).
For each child $i$, let $t_i$ be a time at which the tree rooted
at $i$ has $\BWpebbles(T^h_d)$ pebbles on it.  We may assume
$$
   t_2<t_3< \ldots < t_{d+1}
$$
Let $m = (d+3)/2$ be the middle child.  If $t_m < t$
then each of the $(d-1)/2$ subtrees rooted at $i$ for $i<m$
has at least one pebble on it at time $t_m$, since otherwise
the effort made to place $\BWpebbles(T^h_d)$ pebbles on it earlier
is wasted.
Hence (\ref{e:BSWrec}) holds for this case.  Similarly if
$t_m > t$ then each of the $(d-1)/2$ subtrees rooted at $i$ for $i>m$
has at least one pebble on it at time $t_m$, since otherwise the
effort to place $T^h_d$ pebbles on it later is wasted, so
again (\ref{e:BSWrec}) holds.

\medskip

\noindent
{\bf Part II:}\\
We prove (\ref{e:dEven}) for even degree $d$:
\[\BWpebbles(T^{h}_{d}) \leq \lceil(d-1)h/2\rceil +1\]

For $h=2$ the formula gives $\BWpebbles(T^2_d) = d$, which is obviously
correct.  For $h=3$ the formula gives $\BWpebbles(T^3_d) = (3/2)d$, which
can be realized by black-pebbling $d/2+1$ of the root's children
and white-pebbling the rest.  In general it suffices to prove
the following recurrence:
\begin{equation}\label{e:BSWrecEven}
\BWpebbles(T^{h+2}_{d}) \le \BWpebbles(T^h_{d}) + d-1
\end{equation}
We strengthen the induction hypothesis by
asserting that during the pebbling of $T^{h}_d$ there is a
{\em critical time}
at which the root has a black pebble and there are at most
$\BWpebbles(T^h_d)-(d-1)$ pebbles on the tree (counting the pebble
on the root).  This is easy to see when $h=2$ and $h=3$.

We prove the recurrence as follows.
We want to pebble $T^{h+2}_d$ using $d-1$ more pebbles than
is required to pebble $T^h_d$.   Let us call the children
of the root $c_1,\ldots,c_d$.  We start by placing black pebbles
on $c_1,\ldots c_{d/2}$.  We illustrate how to do this by
showing how to place a black pebble on $c_{d/2}$ after
there are black pebbles on nodes $c_1,\ldots c_{d/2-1}$.
At this point we still have $d/2$ extra pebbles left among
the original $d-1$.  Let us assign the names $c'_1,\ldots c'_d$
to the children of $c_{d/2}$.  Use the $d/2$ extra pebbles
to put black pebbles on $c'_1,\ldots,c'_{d/2}$.  Now run
the procedure for pebbling the subtree rooted at $c'_{d/2+1}$
up to the critical time, so there is a black pebble on $c'_{d/2+1}$.
Now place white pebbles on the remaining $d/2-1$ children
of $c_{d/2}$, slide a black pebble up to $c_{d/2}$, remove
the remaining black pebbles on the children of $c_{d/2}$, and
complete the pebbling procedure for the subtree rooted
at $c'_{d/2+1}$, so that subtree has no pebbles.  Now remove
the white pebbles on the remaining $d/2-1$ children of $c_{d/2}$
using the remaining $d/2-1$ extra pebbles.

At this point there are black pebbles on nodes $c_1,\ldots,c_{d/2}$,
and no other pebbles on the tree.  We now place a black pebble
on $c_{d/2+1}$ as follows.  Let us assign the names
$c''_1,\ldots c''_d$ to the children of $c_{d/2+1}$.  Use the
remaining $d/2-1$ extra pebbles to place black pebbles on
$c''_1,\ldots,c''_{d/2-1}$.  Now run the pebble procedure on
the subtree rooted at $c''_{d/2}$ up to the critical time, so
$c''_{d/2}$ has a black pebble.  Now place white pebbles on
the remaining $d/2$ children of $c_{d/2+1}$, slide a black
pebble up to $c_{d/2+1}$, remove the remaining black pebbles
on the children of $c_{d/2+1}$, place white pebbles on
the remaining $d/2-1$ children of the root, slide a black
pebble up to the root, and remove the remaining black
pebbles from the children of the root.

This is now the critical time for the procedure pebbling
$T^{h+2}_d$.  There is a black pebble on the root, $d/2-1$ white pebbles
on the children of the root, $d/2$ white pebbles on the
children of  $c_{d/2+1}$, and at most $\BWpebbles(T^h_d) -d$
pebbles on the subtree rooted at $c''_{d/2}$ (we've removed
the black pebble on $c''_{d/2}$), making a total of at most
$\BWpebbles(T^h_d)$ pebbles on the tree.

Now remove the black pebble from the root and complete the pebble
procedure for the subtree rooted at $c''_{d/2}$ to remove
all pebbles from that subtree.  There remain $d/2-1$ white
pebbles on the children of the root and $d/2$ white pebbles
on the children of $c_{d/2+1}$, making a total of $d-1$ white
pebbles.  Now remove each of the white pebbles on the children
of $c_{d/2+1}$ by pebbling each of these subtrees in turn.
Finally we can remove each of the remaining $d/2-1$ white pebbles
on the children of the root by a process similar to the one
used to place $d/2$ black pebbles on the children of the root
at the beginning of the procedure (we now in effect have one
more pebble to work with).

\medskip

\noindent
{\bf Part III:}\\
Finally we give the lower bound for the case $d=2$:
\[\BWpebbles(T^{h}_{2}) \geq \lceil h/2 \rceil +1\]

Clearly 2 pebbles are required for the tree of height 2, and it is easy
to show that 3 pebbles are required for the height 3 tree.

In general it suffices to show that the binary tree $T$ of height $h+2$
requires at least one more pebble than the binary tree of height $h$.
Suppose otherwise, and consider a pebbling of $T$ that uses
the minimum number of pebbles required for the tree of height $h$,
and assume that the pebbling is as short as possible.
Let $t_1$ be a time when the root has a black pebble. For $i=3,4,5$
there must be a time $t_i$ when all the pebbles are on the subtree rooted
at node $i$.  This is because node $i$ must be pebbled at some
point, and if the pebble is white then right after the white
pebble is removed we could have placed a black pebble in its place
(since we do not allow white sliding moves). 

Suppose that $\{t_1,t_3,t_4,t_5\}$ are ordered such that
$$
   t_{i_1}<t_{i_2}<t_{i_3}<t_{i_4}
$$
Then $t_1$ cannot be either $t_{i_3}$ or $t_{i_4}$ since
otherwise at time $t_{i_2}$ there are no pebbles on the subtree
rooted at node $i_1$ and hence its earlier pebbling was wasted
(since the root has yet to be pebbled).
Similarly if $t_1$ is either $t_{i_1}$ or $t_{i_2}$ then
at time $t_{i_3}$ there are no pebbles on the subtree rooted
at $i_4$, and since the root has already been pebbled the later
pebbling of this subtree is wasted.
\end{proof}

\subsection{Results for fractional pebbling}

The concept of fractional pebbling is new.  Determining the
minimum number $p$ of pebbles required to fractionally pebble $T^h_d$ is
important since $O(k^p)$ is the best known upper bound on the
number of states required by a nondeterministic BP to solve
\ft\ (see Theorem \ref{t:pebSim}).
It turns out that proving fractional
pebbling lower bounds is much more difficult than proving
whole black-white pebbling lower bounds.  We are able to get
exact fractional pebbling numbers for the binary tree of height 4
and less, but the best general lower bound comes
from a nontrivial reduction
to a paper by Klawe \cite{klawe} which proves bounds for
the pyramid graph.  This bound is within $d/2+1$ pebbles of optimal
for degree $d$ trees (at most 2 pebbles from optimal for binary trees).

Our proof of the exact value of $\FRpebbles(T^4_2) = 3$
led us to conjecture that any nondeterministic BP computing $BT_2^4(k)$
requires $\Omega(k^3)$ states. In section \ref{s:PBbounds} we provide
evidence for that conjecture by proving that any nondeterministic
\emph{thrifty} BP requires $O(k^3)$ states. The lower bound for height 3
and any degree follows from the lower bound of
$\Omega(k^{\frac{3}{2}d-\frac{1}{2}})$ states for
nondeterministic branching programs computing $BT_d^3(k)$
(Corollary \ref{c:HtThree}).

\ifpdf
\else
\begin{figure}
\vspace*{-.5cm}
\hspace*{1.5cm}\includegraphics[scale=0.70]{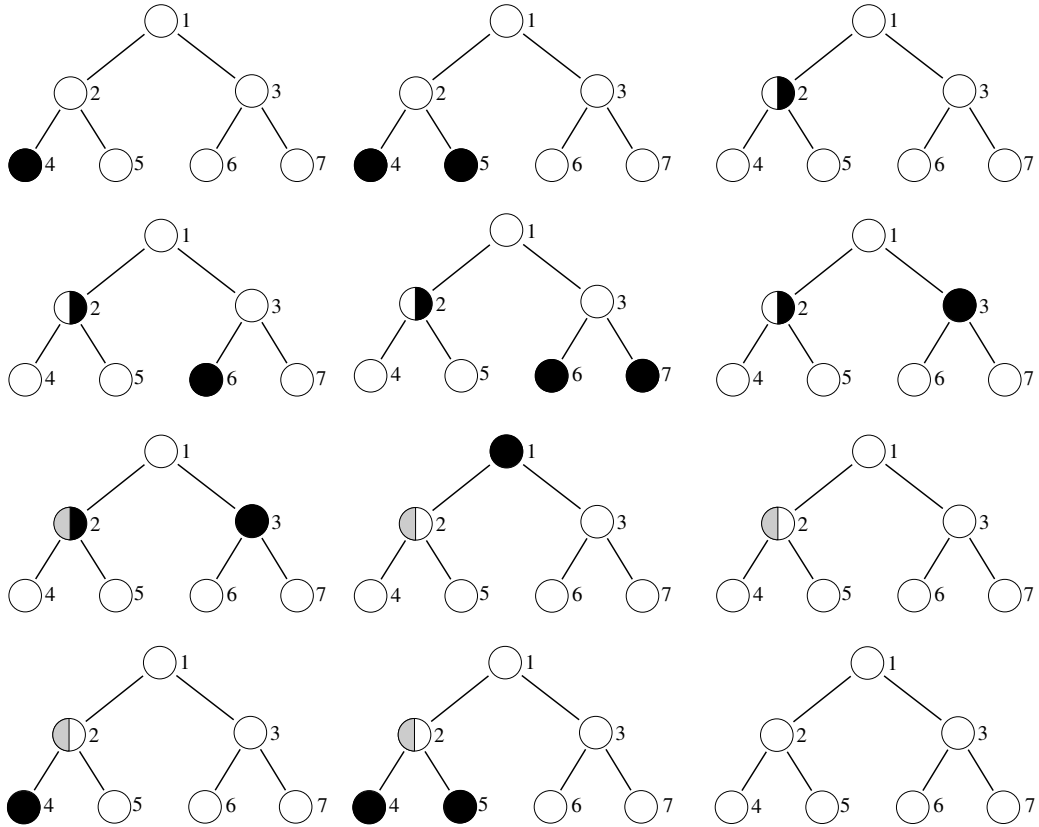}
\vspace*{-1cm}
\caption{An optimal fractional pebbling sequence for the height 3 tree
using 2.5 pebbles, all configurations included. The grey half circle
means the \emph{white} value of that node is $.5$, whereas unshaded
area means absence of pebble value. So for example in the seventh
configuration, node 2 has black value .5 and white value .5, node 3
has black value 1, and the remaining nodes all have black and white
value 0. }
\label{f:bin_h3_fract_ub}
\end{figure}
\fi

We start by
presenting a general result showing that fractional pebbling
can save at most a factor of two over whole black-white pebbling
for any DAG (directed acyclic graph).  (Here the pebbling rules
for a DAG are the same as for a tree, where we require that
every sink node (i.e. every `root') must have a whole black
pebble at some point.)   We will not use this result, but it
does provide a simple proof of weaker lower bounds than
those given in Theorem \ref{t:daryFract} below.

\begin{theorem}\label{r:factorTwo}
If a DAG $D$ has a fractional pebbling using $p$ pebbles,
then it has a black-white pebbling using $2p$ pebbles.
\end{theorem}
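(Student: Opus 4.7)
My plan is to convert the given fractional pebbling $C_0, C_1, \ldots, C_T$ using $p$ pebbles into a whole black-white pebbling using at most $2p$ pebbles by a rounding-and-simulation argument. For each $t$ I will define the whole configuration $D_t$ by thresholding: a node $i$ receives a whole pebble in $D_t$ exactly when $b_t(i)+w_t(i)\ge 1/2$, the colour being black when $b_t(i)\ge w_t(i)$ and white otherwise. Since each pebbled node contributes at least $1/2$ to the total fractional mass $\sum_i(b_t(i)+w_t(i))\le p$, we immediately get $|D_t|\le 2p$.

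Next I will interpolate between consecutive $D_t, D_{t+1}$ by a short legal sequence of whole moves whose intermediate peak remains at most $2p$. Fractional moves of type (i) (decrease $b(i)$) or (ii) (increase $w(i)$) of Definition~\ref{d:pebbling} translate to at most a single black-pebble removal or white-pebble placement respectively, and these plainly respect the bound. Fractional moves of type (iii) are more delicate, but their precondition — every child of $i$ has fractional pebble value $1$ — forces every child of $i$ to carry a whole pebble in $D_t$, which is precisely what the whole-pebbling rules require for removing a white pebble from $i$ or for shrinking the black value on a child. By executing the removals before any placements — in particular removing the white on $i$ before touching any child — all of these sub-moves are legal and keep the count inside $2p$.

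The main obstacle is the subcase of rule (iii) in which the increase of $b(i)$ causes the whole simulation to install a fresh black pebble on $i$ via a sliding move, yet every child of $i$ in $D_t$ happens to carry only a white pebble, so no sliding source is available. I plan to handle this by exploiting the slack in the $2p$ budget: because $|D_t|\le 2p$ and the preceding removals only shrink the whole configuration, some extra budget is available at the moment of the troublesome move. I will use it to temporarily promote one of the white-pebbled children to a black pebble (a short whole-game detour on that child's subtree, handled recursively by the same strategy), perform the slide onto $i$, and then unwind the detour. Showing that this recursive top-up always fits inside the $2p$ cap — essentially a careful charging of the temporary extra pebbles against the slack inherent in the half-unit thresholding — is the most delicate bookkeeping step of the argument.
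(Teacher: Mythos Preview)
Your overall strategy --- round the fractional configuration at threshold $1/2$ and observe that each surviving pebble carries mass at least $1/2$, giving the $2p$ bound --- is exactly the right idea and is also what the paper does. But the proposal goes astray in two places.

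First, the ``main obstacle'' you identify is not an obstacle. Re-read rule~(iii) of Definition~\ref{d:pebbling}: when every child of $i$ has pebble value~$1$ you may increase $b(i)$ \emph{and simultaneously decrease the black pebble values of the children arbitrarily} --- including by zero. In the whole game this means that placing a black pebble on $i$ only requires every child to be pebbled (of either colour); you need not slide from a black child, and indeed need not remove anything from the children at all. Since your own invariant guarantees every child is pebbled in $D_t$ (they have fractional value~$1$), the move is legal outright, and the resulting configuration is exactly $D_{t+1}$, which you already know has at most $2p$ pebbles. The entire ``recursive detour on the child's subtree'' is therefore unnecessary, and as written it is not a proof anyway: you promise a charging argument but give no mechanism for it, and there is no obvious induction to appeal to.

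Second, your choice of colour rule (black when $b\ge w$, white otherwise) creates avoidable complications. A move of type~(i) or~(ii) can now flip a node's colour without the total $b+w$ crossing $1/2$, forcing extra remove-then-place steps you never discuss. The paper sidesteps this by thresholding $b$ and $w$ separately: black iff $b(v)\ge 1/2$, white iff $w(v)>1/2$. Since $b+w\le 1$, these are mutually exclusive, and whenever $b+w=1$ one of them holds, so children with fractional value~$1$ are always pebbled. With this rule each fractional move translates to at most one whole move of the corresponding type, legality is immediate, and the invariant gives the $2p$ bound at every step with no case analysis.
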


\begin{proof}
Given a sequence $P$ of fractional pebbling moves for a
DAG $D$ in which at most $p$ pebbles are used, we define a corresponding
sequence $P'$ of pebbling moves in which at most $2p$ pebbles are used.
The sequence $P'$ satisfies the following invariant with respect to $P$.

\medskip

\noindent ($\spadesuit$) A node $v$ has a black pebble (resp. a white
pebble) on it at time $t$ with respect to $P'$ iff $b(v) \geq 1/2$
(resp. $w(v) > 1/2$) at time $t$ with respect to $P$.

\medskip

An important consequence of this invariant is that if at time $t$ in
$P$ node $v$ satisfies $b(v)+w(v)=1$ then at time $t$ in $P'$ node $v$
is pebbled.

We describe when a pebble is placed or removed in $P$. At the
beginning, there are no pebbles on any nodes. $P'$ simulates $P$
as follows. Assume there is a certain configuration of pebbles on
$D$, placed according to $P'$ after time $t-1$; we describe how $P$'s
move at time $t$ is reflected in $P'$. If in the current move of $P$,
$b(v)$ (resp. $w(v)$) increases to $1/2$ or greater (resp. greater than
$1/2$) for some node $v$, then the current pebble, if any, on $v$, is
removed and a black pebble (resp. a white pebble) is placed on $v$ in
$P'$. Note that this is always consistent with the pebbling rules. If in
the current configuration of $P'$ there is a black (resp. white) pebble
on a vertex $v$, and in the current move of $P$, $b(v)$ (resp. $w(v)$)
falls below $1/2$, then the pebble on $v$ is removed. Again, this is
always consistent with the pebbling rules for the black-white pebble
game and the fractional black-white pebble game. For all other kinds of
moves of $P$, the configuration in $P'$ does not change.

If $P$ is a valid sequence of fractional pebbling moves, then $P'$ is a
valid sequence of pebbling moves. We argue that the cost of $P'$ is at
most twice the cost of $P$, and that if there is a point at which the root
has black pebble value $1$ with respect to $P$,
then there is a point at which
the root is black-pebbled in $P'$.  These facts together
establish the theorem.

To demonstrate these facts, we simply observe that the invariant
($\spadesuit$) holds by induction on the time $t$ for the simulation we
defined.  This implies that at any point $t$, the number of pebbles on $D$
with respect to $P'$ is at most the number of nodes $v$ for which $b(v)
+ w(v) \geq 1/2$ with respect to $P$, and is therefore at most twice the
total value of pebbles with respect to $P$ at time $t$. Hence the cost
of pebbling $D$ using $P'$ is at most twice the cost of pebbling $D$
using $P$. Also, if there is a time $t$ at which the root $r$ has
black pebble value $1$ with respect to $P$, then $b(r) \ge 1/2$ at
time $t$, so there is a black pebble on $r$ with respect to $P'$ at
time $t$.
\end{proof}

The next result presents our best-known bounds for fractionally
pebbling trees $T^h_d$.

\begin{theorem}\label{t:daryFract}
\[(d-1)h/2 - d/2 \leq \FRpebbles(T^h_d) \leq (d-1)h/2 +1\]
\[\FRpebbles(T^3_d) = (3/2)d - 1/2\] 
\[\FRpebbles(T^4_2) = 3\]
\end{theorem}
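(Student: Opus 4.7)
My plan is to prove the three assertions separately: first the upper bound by an explicit recursive construction, then the general lower bound by reduction to Klawe's pyramid result, and finally the tight cases $h=3$ (any $d$) and $h=4$ (binary) by direct combinatorial analysis of critical moments.

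\textbf{Upper bound.} I would prove $\FRpebbles(T^h_d) \le (d-1)h/2 + 1$ by induction on $h$, strengthening the hypothesis in the style of Part II of Theorem \ref{t:BSlideW}: a pebbling exists in which, at some \emph{critical time}, the root carries a whole black pebble while the total pebble value on the tree is only $(d-1)(h-1)/2 + 1$. For the inductive step passing from $T^h_d$ to $T^{h+1}_d$, I would black-pebble the first $\lceil(d-1)/2\rceil$ children of the root using the inductive strategy one by one (at each stage keeping the previously placed pebbles), then run the subtree pebbling for the middle child up to its critical time, then place half-white pebbles summing to $(d-1)/2$ on the remaining children, slide a full black pebble up to the root, and finally unwind everything symmetrically. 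The half-white values are what beat whole pebbling by a factor close to $2$, and they behave exactly like the $h=3$ picture in Figure \ref{f:bin_h3_fract_ub} repeated at every level. A simple count shows the budget $(d-1)h/2 + 1$ is never exceeded.

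\textbf{General lower bound.} For $\FRpebbles(T^h_d) \ge (d-1)h/2 - d/2$ I would appeal to Klawe's lower bound for pyramid DAGs \cite{klawe}. The plan is to embed $T^h_d$ into (or otherwise relate it to) an appropriate pyramid so that any fractional pebbling of the tree induces a fractional pebbling of the pyramid using essentially the same number of pebbles, up to an additive $O(d)$ loss coming from the boundary between tree structure and pyramid structure. Since Klawe proves a matching lower bound for her pyramid in the fractional model (modulo technical translation between the pebbling conventions), this transfers back. The main obstacle in this whole theorem lies here: Klawe's argument is set up for the pyramid with its particular in-degrees and sink, so making the reduction preserve fractional pebble counts while only losing the claimed $d/2$ additive term (as opposed to something multiplicative) will be the delicate step, and is what forces the lower bound to sit $d/2+1$ below the upper bound.

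\textbf{Tight bound for $T^3_d$.} The upper bound equals $(d-1)\cdot 3/2 + 1 = (3/2)d - 1/2$, so only the lower bound needs work. I would look at the first time $t^\ast$ at which $b(\mathrm{root}) = 1$. At $t^\ast$ each of the $d$ children of the root has pebble value $1$, and by the rules governing rule (iii), increasing a child's black value from below $1$ required, at some earlier moment, all $d$ of its leaf-children to have pebble value $1$ simultaneously, costing $d$ pebbles on that subtree. Order the children of the root by the last time $t_i < t^\ast$ at which subtree $i$ carries $d$ pebbles, and pick the median index $m = \lceil d/2\rceil$. At time $t_m$ the $\lceil d/2\rceil -1$ earlier subtrees must each carry at least $1/2$ units of pebble value (otherwise their having reached full pebble value $1$ on the root of that subtree was wasted, because that root must still carry value $1$ at time $t^\ast$ and white-sliding moves are forbidden), and a symmetric statement holds for the later subtrees. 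Summing gives $d + (d-1)/2$ pebbles at time $t_m$, matching $(3/2)d - 1/2$.

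\textbf{Tight bound for $T^4_2$.} Again only the lower bound of $3$ is needed. Suppose for contradiction there is a fractional pebbling using strictly fewer than $3$ pebbles. Let $t^\ast$ be the first time $b(1)=1$. Then $b(2)+w(2)=b(3)+w(3)=1$ at $t^\ast$. For each internal node $v\in\{2,3\}$ there must be an earliest time at which $v$ reached black value at least $\max(b(v),1/2)$, and at that time both children of $v$ had pebble value $1$ (requiring $2$ pebbles on that length-$2$ chain plus whatever is carried elsewhere). The argument then branches on whether each of nodes $2,3$ is pebbled mostly black, mostly white, or near half-and-half, and in every case either two subtrees are being worked simultaneously (giving $\ge 3$ pebbles) or a subtree must be re-pebbled after all pebble mass leaves it, contradicting minimality. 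The hard subcase, and the main obstacle for this part, is when both nodes $2$ and $3$ sit near the $(1/2,1/2)$ boundary at $t^\ast$; there one has to exploit the asymmetry between black and white sliding moves very carefully to force a third pebble into existence, much as in Part III of Theorem \ref{t:BSlideW}.
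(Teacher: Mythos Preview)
Your upper bound sketch and the $T^4_2$ case-analysis plan are in the right spirit and close to what the paper does. The genuine gaps are in the general lower bound and the $T^3_d$ case.

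\textbf{General lower bound.} Your plan misreads Klawe's contribution. Klawe does not prove a fractional-pebbling lower bound for pyramids; her result says that for \emph{nice} DAGs $G$ (a structural condition), the whole black-white pebbling cost is at least $\lfloor \Bpebbles(G)/2\rfloor + 1$. There is nothing fractional in her statement, and trees are not pyramids. The paper's reduction is therefore quite different from an embedding: first observe (via a linear-programming argument) that an optimal fractional pebbling can be taken to use only rational values, so for some large integer $c$ every pebble value is a multiple of $1/c$. Then replace each tree node by a cluster of $c$ copies, with complete bipartite edges between clusters along tree edges, obtaining a DAG $G_{d,h}$ whose \emph{whole} BW pebbling cost is exactly $c$ times the fractional cost of $T^h_d$. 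The graph $G_{d,h}$ is not nice, so one must delete carefully chosen edges to get a nice subgraph $G'_{d,h}$, compute $\Bpebbles(G'_{d,h})$ exactly, and only then invoke Klawe. The ``embed in a pyramid and hope fractional bounds transfer'' route does not work: without the discretization-and-blowup step you have no mechanism for converting a fractional tree pebbling into anything Klawe's theorem applies to.

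\textbf{The $T^3_d$ lower bound.} Your median argument asserts that at time $t_m$ each of the $\lceil d/2\rceil - 1$ already-processed subtrees carries pebble value at least $1/2$, but you give no reason for the specific constant $1/2$. In whole pebbling the analogous claim gives value $\ge 1$ because a pebble is indivisible; in fractional pebbling nothing stops a subtree root from sitting at value $0.01$ at time $t_m$ and then having white value added (for free) before $t^\ast$. The ``otherwise wasted'' reasoning does not pin down $1/2$. The paper does not prove this case by a direct pebbling argument at all: it deduces $\FRpebbles(T^3_d)\ge (3/2)d-1/2$ from the $\Omega(k^{(3/2)d-1/2})$ state lower bound for nondeterministic branching programs computing $BT^3_d(k)$ (Corollary~\ref{c:HtThree}) together with Theorem~\ref{t:pebSim}. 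A direct proof is given only for $d=2$, and even there the key first step---arguing that no internal node can carry white value $\ge 1/2$, hence black values must exceed $1/2$---is absent from your sketch.
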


We divide the proof into several parts.
First we prove the upper bound:
\[\FRpebbles(T^h_d) \leq (d-1)h/2 +1\]

\begin{proof} 
Let $A_h$ be the algorithm for
height $h \ge 2$. It is composed of two parts, $B_h$ and $C_h$. $B_h$
is run on the empty tree, and finishes with a black pebble on the
root and $(d-1)(h-2)$ white half pebbles below the root (and
of these $(d-1)(h-3)$ lie below the right child of the root).
%\footnote{The algorithm determines the exact placement of
%these half pebbles. It would be easier to describe the pattern with an
%illustration than with the indices of the nodes.} 
Next, the black pebble
on the root is removed. Then $C_h$ is run on the result, and finishes
with the empty tree. $B_h$ and $C_h$ both use $(d-1)h/2 + 1$ pebbles.

$A_h'$ is the same as $A_h$ except that it finishes with a black half
pebble on the root. It does this in the most straight-forward way, by
leaving a black half pebble after the root is pebbled, and so it uses
$(d-1)h/2 + 1.5$ pebbles for all $h \ge 3$.

$B_2$: Pebble the tree of height 2 using $d$ black pebbles.

$B_h, h>2$: Run $A_{h-1}'$ on node 2 using $(d-1)(h-1)/2 + 1.5$ pebbles, and
then on node 3 (if $3 \leq d$) using a total of $(d-1)(h-1)/2 + 2$ pebbles (counting the
half pebble on node 2), and so on for nodes $2,3 \ldots, d$. So
$(d-1)(h-1)/2 + 1 + (d-1)/2 = (d-1)h/2 + 1$ pebbles are used when
$A_{h-1}'$ is run on node $d$.  Next run $B_{h-1}$ on node $d+1$, using
$(d-1)(h-1)/2 + 1$ pebbles on the subtree rooted at $d+1$, for
$(d-1)h/2 + 1$ pebbles in total
(counting the black half pebbles on node $2, \ldots, d$). The result
is a black pebble on node ${d+1}$, \ $(d-1)(h-3)$ white half pebbles under
$d+1$, and from earlier $d-1$ black half pebbles on
$2, \ldots, d$, for a total of $(d-1)(h-2)/2 + 1$ pebbles.
Add a white half pebble
to each of $2, \ldots, d$, then slide the black pebble from $d+1$
onto the root. Remove the black half pebbles from $2, \ldots, d$. Now
there are $(d-1)(h-2)$ white half pebbles under the root, and a black
pebble on the root.

$C_2$: The tree of height 2 is empty, so return.

$C_h$: The tree has no black pebbles and $(d-1)(h-2)$ white half
pebbles. Note that if a sequence can pebble a tree with $p$ pebbles,
then essentially the same sequence can be used to remove a white half
pebble from the root with $p + .5$ pebbles. $C_h$ runs $C_{h-1}$ on
node $d+1$, resulting in a tree with only a half white pebble on each
of $2, \ldots, d$. This takes $(d-1)h/2+1$ pebbles. Then $A_{h-1}$
is run on each of $2,\ldots,d$ in turn, to remove the white half
pebbles. The first such call of $A_{h-1}$ is the most expensive, using
$(d-1)(h-1)/2+1+(d-1)/2 = (d-1)h/2 + 1$ pebbles.
\end{proof}

\subsubsection*{}
As noted earlier, the tight lower bound for height 3 and any degree:
\[\FRpebbles(T^3_d) \ge 3/2d - 1/2\] 
follows from the asymptotically tight lower bound of $\Omega(k^{\frac{3}{2}d-\frac{1}{2}})$ states for 
nondeterministic branching programs computing $BT_d^3(k)$
(Corollary \ref{c:HtThree}).
We do, however, have a direct proof of $\FRpebbles(T^3_2) \ge 5/2$:

\begin{proof} Assume to the contrary that there is a fractional pebbling
with fewer than $2.5$ pebbles.  It follows that no non-leaf node $i$ can
ever have $w(i)\ge 0.5$, since the children of $i$ must each have pebble
value 1 in order to decrease $w(i)$.  Since there must be some time $t_1$
during the pebbling sequence such that both the nodes $2$ and $3$
(the two children of the root) have pebble value 1, it follows that at
time $t_1$, $b(2) > 0.5$ and $b(3)>0.5$.  Hence for $i=2,3$ there is a
largest $t_i \le t_1$ such that node $i$ is black-pebbled at time $t_i$ and
$b(i)>0.5$ during the time interval $[t_i,t_1]$.  (By `black-pebbled'
we mean at time $t_i-1$ both children of $i$ have pebble value 1,
so that at time $t_i$ the value of $b(i)$ can be increased.)

Assume w.l.o.g. that $t_2< t_3$.  Then at time $t_3-1$ both children of
node $3$ have pebble value 1 and $b(2)>0.5$, so the total pebble value
exceeds $2.5$. 
\end{proof}

\subsubsection*{}
Before we prove the lower bound for all heights, which we do not believe is tight,
we prove one more tight lower bound:
\[\FRpebbles(T^4_2) \ge 3\]
\begin{proof}
Let $C_0,C_1,\ldots,C_m$ be the sequence of pebble configurations
in a fractional pebbling of the binary tree of height 4.  We say
that $C_t$ is the configuration at time $t$.  Thus
$C_0$ and $C_m$ have no pebbles, and there is a first time $t_1$ such
that $C_{t_1+1}$ has a black pebble on the root.  In general we
say that step $t$ in the pebbling is the move form $C_t$ to $C_{t+1}$.
In particular, if an internal node $i$ is black-pebbled at step $t$
then both children of $i$ have pebble value 1 in $C_t$ and node
$i$ has a positive black pebble value in $C_{t+1}$.

Note that if any configuration $C_t$ has a whole white pebble
on some internal node then both children must have pebble value 1
to remove that pebble, so some configuration will have at least
pebble value 3, which is what we are to prove.  Hence we may
assume that no node in any $C_t$ has white pebble value 1,
and hence every node must be black-pebbled at some step.

For each node $i$ we associate a critical time $t_i$
such that $i$ is black-pebbled at step $t_i$ and hence the
children of $i$ each have pebble value 1 in configuration $C_{t_i}$.
The time $t_1$ associated with the root (as above) is the first
step at which the root is black-pebbled, and hence nodes 2 and 3
each have pebble value 1 in $C_{t_1}$.  In general if $t_i$
is the critical time for internal node $i$, and $j$ is a child
of $i$, then the critical time $t_j$ for $j$ is the largest $t<t_i$
such that $j$ is black-pebbled at step $t$. 

\medskip
\noindent
{\bf Sibling Assumption:}
We may assume w.l.o.g. (by applying an isomorphism to the tree)
that if $i$ and $j$ are siblings and $i<j$ then $t_i<t_j$.

In general the critical times for a path from root to leaf
form a descending chain.  In particular
$$
    t_7< t_3 < t_1
$$
For each $i>1$ we define $b_i$ and $w_i$ to be the black and
white pebble values of node $i$ at the critical time of its parent.
Thus for all $i>1$
\begin{equation}\label{e:bwsum}
    b_i + w_i =1
\end{equation}
Now let $p$ be the maximum pebble value of any configuration
$C_t$ in the pebbling.  Our task is to prove that $p\ge 3$

After the critical time of an internal node $i$ the white pebble
values of its two children must be removed.  When the first one
is removed both white values are present along with pebble value
1 on two children, so
$$
      w_{2i} + w_{2i+1} +2 \le p
$$
In particular for $i = 1,3$ we have
\begin{eqnarray}
    w_2 + w_3 +2 & \le & p \label{e:wtwothree} \\
    w_6 + w_7 +2 & \le & p \label{e:wsevtwo}
\end{eqnarray}

Now we consider two cases, depending on the order of $t_2$ and $t_7$.

\medskip
\noindent
{\bf CASE I:}  $t_2<t_7$

Then by the Sibling Assumption, at time $t_7$ (when node 7
is black-pebbled) we have
\begin{equation}\label{e:twosix}
   b_2+b_6 +2 \le p
\end{equation}
Now if we also suppose that $w_6$ is not removed until after
$t_1$ (CASE IA) then when the first of $w_2,w_6$ is removed we have
$$
   w_2+ w_6 +2 \le p
$$ 
so adding this equation with (\ref{e:twosix}) and using
(\ref{e:bwsum}) we see that $p\ge 3$ as required.

However if we suppose that $w_6$ is removed before $t_1$
(CASE IB) (but necessarily after $t_2 < t_3$) then we have
$$
  b_2+b_3 + w_6 +2 \le p
$$
then we can add this to (\ref{e:wtwothree}) to again
obtain $p \ge 3$.

\medskip
\noindent
{\bf CASE II:} $t_7< t_2$    

Then $t_6<t_7<t_2<t_3$ so at time $t_2$ we have
$$
    b_6+b_7 +2 \le p
$$
so adding this to (\ref{e:wsevtwo}) we again obtain $p\ge 3$.
\end{proof}

\subsubsection*{}
To prove the general lower bound, we need the following lemma:
\begin{lem}\label{l:rational} For every finite DAG there is an optimal
fractional B/W pebbling in which all pebble values are rational numbers.
(This result is robust independent of various definitions of pebbling;
for example with or without sliding moves, and whether or not we require
the root to end up pebbled.) 
\end{lem}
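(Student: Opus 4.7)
The plan is to argue that pebblings with a \emph{fixed} combinatorial structure form a rational polytope, so the optimum over that structure is attained at a rational vertex, which must itself be a global optimum.

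First I would take an arbitrary optimal fractional B/W pebbling $P=(C_0,\dots,C_T)$ of the finite DAG $D$, with cost $p$, and record its combinatorial structure: for each step $t<T$, which rule of Definition \ref{d:pebbling} is applied and at which node $n_t$, together with a distinguished time $t^*$ witnessing $b_{t^*}(\text{root})=1$. Treating the numbers $\{b_t(i),w_t(i) : 0\le t\le T,\ i\in V(D)\}$ and a cost $p$ as real variables, I would then write down the usual constraints: non-negativity $b_t(i),w_t(i)\ge 0$; the cap $b_t(i)+w_t(i)\le 1$; the cost bound $\sum_i (b_t(i)+w_t(i))\le p$ for every $t$; equalities $b_{t+1}(i)=b_t(i)$ and $w_{t+1}(i)=w_t(i)$ for every node $i$ not touched by step $t$; the directional constraints dictated by the recorded rule at each step (e.g.\ $b_{t+1}(n_t)\le b_t(n_t)$ for a black-removal step; $w_{t+1}(n_t)\ge w_t(n_t)$ for a white-placement step; $b_t(j)+w_t(j)\ge 1$ for each predecessor $j$ of $n_t$ together with $w_{t+1}(n_t)=0$, $b_{t+1}(n_t)\ge b_t(n_t)$, $b_{t+1}(j)\le b_t(j)$ when the step is a type-(iii) black-sliding move); the start/end conditions $b_0(i)=w_0(i)=b_T(i)=w_T(i)=0$ for all $i$; and the goal $b_{t^*}(\text{root})\ge 1$. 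Every coefficient lies in $\{-1,0,+1\}$.

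Minimizing $p$ subject to these finitely many linear inequalities is a linear program $\mathcal{L}$ with integer coefficient matrix. The given $P$ is a feasible point with objective value $p$, so $\mathcal{L}$ is feasible and bounded below by $0$; standard LP theory then produces a rational optimal solution $P'$, attained at a vertex of the rational feasible polyhedron. By construction $P'$ respects every recorded move type, so it is a valid fractional B/W pebbling of $D$, and its cost is at most that of $P$. Since $p$ is the overall minimum cost, $P'$ also achieves cost $p$, and all its pebble values are rational.

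The only subtle point, and the reason for explicitly fixing a structure, is that the global optimum $p$ is an infimum over infinitely many possible combinatorial structures, so no single LP captures the entire problem; the trick is that we need only one structure that \emph{can} realize $p$, and we already have one from $P$ itself. The framework is visibly robust to the variants mentioned in the statement (with or without black/white sliding moves, and with or without a requirement that the root end pebbled): each variant only alters which linear inequalities appear among the move constraints, so the LP argument and the conclusion go through unchanged.
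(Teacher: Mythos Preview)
Your proposal is correct and takes essentially the same approach as the paper: fix the combinatorial skeleton of an optimal pebbling, encode the remaining freedom as a linear program with integer coefficients, and invoke the fact that such an LP has a rational optimum. If anything, you are more careful than the paper, which leaves implicit the step of fixing which move type occurs at which time (the paper's phrasing ``if the black value of a node is increased at a step then all its children must be 1'' is only linear once that structure is pinned down), and you explicitly flag the reason one must start from a concrete optimal $P$ rather than try to write a single LP for all pebblings.
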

\begin{proof}  Consider an optimal B/W fractional pebbling algorithm.
Let the variables $b_{v,t}$ and $w_{v,t}$ stand for the black and white
pebble values of node $v$ at step $t$ of the algorithm.

{\bf Claim:}  We can define a set of linear inequalities with 0 -
1 coefficients  which suffice to ensure that the pebbling is legal.

For example,  all variables are non-negative, $b_{v,t} + w_{b,t} \le 1$,
initially all variables are 0, and finally the nodes have the values
that we want, node values remain the same on steps in which nothing is
added or subtracted, and if the black value of a node is increased at
a step then all its children must be 1 in the previous step, etc.

Now let $p$ be a new variable representing the maximum pebble value of
the algorithm.  We add an inequality for each step $t$ that says the
sum of all pebble values at step $t$ is at most $p$.

Any solution to the linear programming problem:

    Minimize $p$ subject to all of the above inequalities

gives an optimal pebbling algorithm for the graph.  But every LP program
with rational coefficients has a rational optimal solution (if it has
any optimal solution). 
\end{proof}

\subsubsection*{}
Now we can prove the lower bound for all heights:
\[\FRpebbles(T^h_d) \geq (d-1)h/2 - d/2 \]
%\begin{theorem}\label{t:dustinLB} For any $d$ and $h$, $\FRpebbles(T_{d}^{h} \geq(d-1)(h-1)/ 2 - .5$
% \end{theorem}
\begin{proof} 

The high-level strategy for the proof is as follows. Given $d$ and $h$, we 
transform the tree $T_{d}^{h}$ into a DAG $G_{d,h}$ such that a lower
bound on $\BWpebbles(G_{d,h})$ gives a lower
bound for $\FRpebbles(T_{d}^{h})$. 
To analyze $\BWpebbles(G_{d,h})$, we use a 
result of
Klawe \cite{klawe}, who shows that for a DAG $G$ that satisfies a certain 
``niceness'' property, $\BWpebbles(G)$ can be given in terms of $\Bpebbles(G)$ 
(and the relationship is tight to within a constant less than one). 
The black pebbling cost is typically
easier to analyze. In our case, $G_{d,h}$ does not satisfy the niceness
property as-is, but just by removing some edges from $G_{d,h}$, we get
a new DAG $G'_{d,h}$ which is nice. We then show how to exactly compute
$\Bpebbles(G'_{d,h})$ which yields a lower bound on
$\BWpebbles(G_{d,h})$, and hence on $\FRpebbles(T_{d}^{h})$.

We first motivate the construction $G_{d,h}$ and show that the whole black-white
pebbling number of $G_{d,h}$ is related to the fractional pebbling number
of $T_{d}^{h}$. 
 
We first use Lemma \ref{l:rational} to ``discretize'' the
fractional pebble game. The following are the rules for the
discretized game, where $c$ is a parameter:
\begin{packed_item}
 \item For any node $v$, decrease $b(v)$ or increase $w(v)$ by $1/c$.
 \item For any node $v$, including leaf nodes, if all the children of $v$
 have value 1, then increase $b(v)$ or decrease $w(v)$ by $1/c$.
\end{packed_item}

By Lemma \ref{l:rational}, we can assume all pebble
values are rational, and if we choose $c$ large enough it is not a restriction
that pebble values can only be changed by $1/c$. Since sliding moves
are not allowed, the pebbling cost for this game is at most one more
than the cost of fractional pebbling with black sliding moves.

Now we show how to construct $G_{d,h}$ (for an example, see figure \ref{f:reductionG}).
 We will split up each node of
$T_{d}^{h}$ into $c$ nodes, so that the discretized game corresponds to
the whole black-white pebble game on the new graph. Specifically, the cost
of the whole black-white pebble game on the new graph will be exactly $c$
times the cost of the discretized game on $T_{d}^{h}$.

In place of each node $v$ of $T_{d}^{h}$, $G_{d,h}$ has $c$ nodes $v[1], \ldots, v[c]$;
having $c'$ of the $v[i]$ pebbled simulates $v$ having value $c'/c$. In
place of each edge $(u,v)$ of $T_{d}^{h}$ is a copy of the complete bipartite 
graph $(U,V)$, where $U$ contains nodes $u[1] \ldots u[c]$ and $V$ contains nodes $v[1]
 \ldots v[c]$. If $u$ was a parent of $v$ in the tree, then all the edges go
from $V$ to $U$ in the corresponding complete bipartite graph. Finally, a new 
``root'' is added at height $h+1$ with edges from each
of the $c$ nodes at height $h$\footnote{The reason for this is quite technical: Klawe's definition 
of pebbling is slightly different from ours in that it requires that the root remain pebbled. Adding 
a new root forces there to be a time when all $c$ of the height $h$ nodes, which represent the
root of $T_d^h$, are pebbled. Adding one more pebble to $G_{d,h}$ changes the relationship between 
the cost of pebbling $T_d^h$ and the cost of pebbling $G_{d,h}$ by a negligible amount.}. 
So every node at height $h-1$ and lower
has $c$ parents, and every internal node except for the root has $dc$
children. 

\ifpdf
\else
\begin{figure}
\vspace*{.3cm}
\hspace*{1.5cm}\includegraphics[scale=0.70]{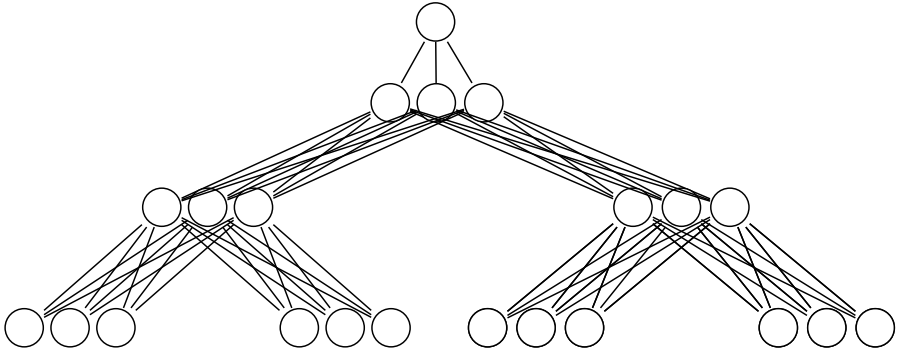}
\hspace*{1cm} \vspace*{.2cm}
\caption{$G_{2,3}$ with $c=3$}
\label{f:reductionG}
\end{figure}
\fi

To lower bound $\BWpebbles(G_{d,h})$, we will use
Klawe's result \cite{klawe}. Klawe showed that for ``nice'' graphs $G$, the black-white
pebbling cost of $G$ (with black and white sliding moves) is at least $\lfloor \Bpebbles/2 
\rfloor + 1$. Of course, the black-white
pebbling cost without sliding moves is at least the cost with them. We define 
what it means
for a graph to be nice in Klawe's sense.

\begin{defn}
\label{d:nice}
A DAG $G$ is nice if the following conditions hold:
\begin{enumerate}
\item If $u_1$, $u_2$ and $u$ are nodes of $G$ such that $u_1$ and $u_2$
are children of $u$ (i.e., there are edges from $u_1$ and $u_2$ to $u$),
then the cost of black pebbling $u_1$ is equal to the cost of black
pebbling $u_2$
\item If  $u_1$ and $u_2$ are children of $u$, then there is no path from  $u_1$ to $u_2$ or from $u_2$ to $u_1$.  
\item If $u, u_1, \ldots, u_m$ are nodes none of which has
 a path to another, then there are node-disjoint paths $P_1, \ldots,
 P_m$ such that $P_i$ is a path from a leaf (a node with in-degree 0) to $u_i$ 
and there is no path between $u$ and any node in $P_i$.
\end{enumerate}
\end{defn}

$G_{d,h}$ is not nice in Klawe's sense. We will delete some edges from
$G_{d,h}$ to produce a nice graph $G'_{d,h}$ and we will analyze $\Bpebbles(G'_{d,h})$. 
Note that a lower bound on $\BWpebbles(G'_{d,h})$ is also a lower bound on $\BWpebbles(G_{d,h})$.

The following definition will help in explaining the construction of
$G'_{d,h}$ as well as for specifying and proving properties of certain paths. 

\begin{defn} For $u \in G_{d,h}$, let $T_{d}^{h}(u)$ be the node in $T_{d}^{h}$ such that
$T_{d}^{h}(u)[i] = u$ for some $i \leq c$.  For $v,v' \in T_{d}^{h}$, we say $v < v'$
if $v$ is visited before $v'$ in an inorder traversal of $T_{d}^{h}$. For $u,u'
\in G_{d,h}$, we say $u < u'$ if $T_{d}^{h}(u) < T_{d}^{h}(u')$ or if for some $v \in T_{d}^{h}$, $u =
v[i]$, $u' = v[j]$, and $i < j$. 
\end{defn}

$G_{d,h}'$ is obtained from $G_{d,h}$ by removing $c-1$ edges from each internal node
except the root, as follows (for an example, see figure \ref{f:reductionGprime}). 
For each internal node $v$ of $T$, consider
the corresponding nodes $v[1], v[2], \ldots, v[c]$ of $G_{d,h}$. Remove the
edges from $v[i]$ to its $i-1$ smallest and $c-i$ largest children. So
in the end each internal node except the root has $c(d-1)+1$ children.

%This is $G_{d,h}'$ for the height 3 binary tree with $c=3$:

\ifpdf
\else
\begin{figure}
\vspace*{.3cm}
\hspace*{1.5cm}\includegraphics[scale=0.70]{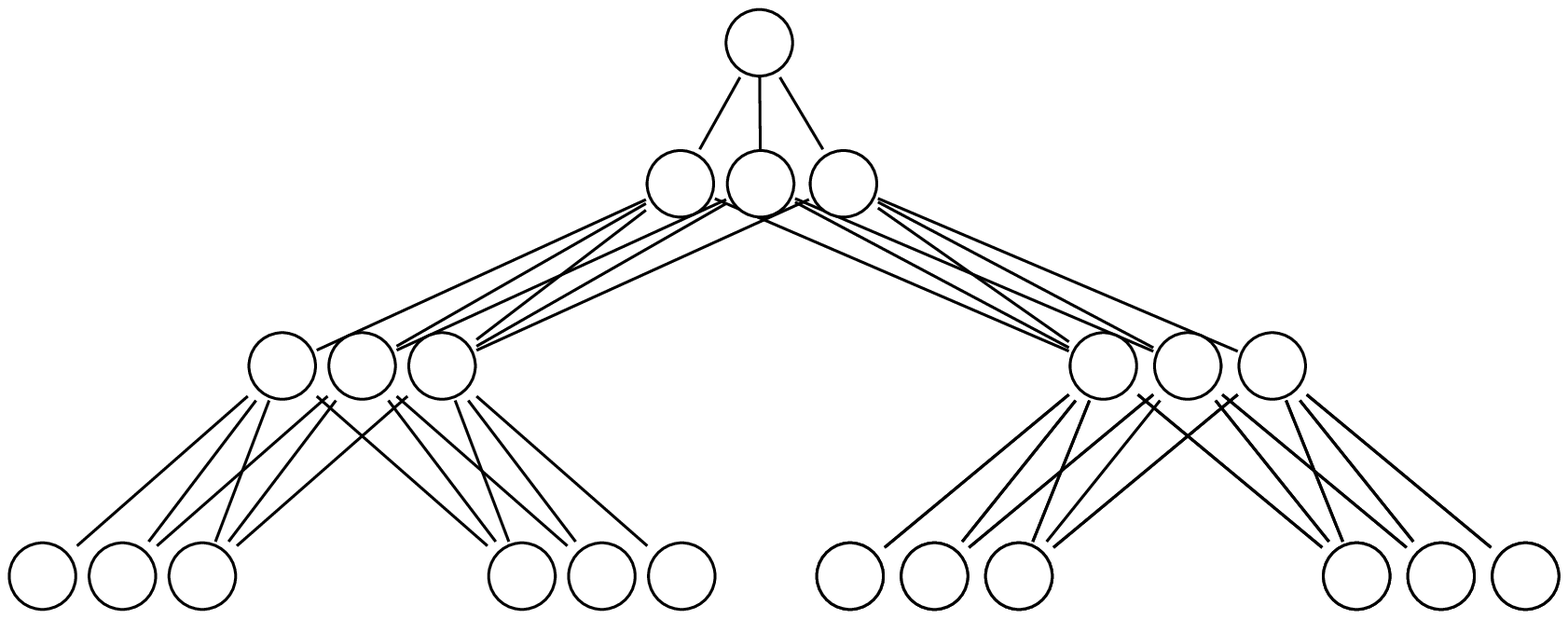}
\hspace*{1cm} \vspace*{.2cm}
\caption{$G_{2,3}'$ with $c=3$}
\label{f:reductionGprime}
\end{figure}
\fi

We first analyze \Bpebbles($G'_{d,h})$ and then show that
it is nice.
We show that $\Bpebbles(G_{d,h}') = c[(d-1)(h-1) +
1]$. Note that an upper bound of $c[(d-1)(h-1) +
1]$ is attained using a simple recursive algorithm similar to that used
for the binary tree.

For the lower bound, consider the earliest time $t$ when all paths from
 a leaf to the root are blocked. Figure \ref{f:fract_bottleneck} 
 is an example of the type of
 pebbling configuration that we are about to analyze. 
 The last pebble placed must have been
placed at a leaf, since otherwise $t-1$ would be an earlier time when
all paths from a leaf to the root are blocked. Let $P$ be the
newly-blocked path from a leaf to the root.  Consider the set $S = \{
u \in G_{d,h}' \ \vert\ u \not \in P \text{ and $u$ is a child of a node in }
P \}$ of size $c (d-1)(h-1) + (c-1) = c[(d-1)(h-1) + 1] - 1$ (the $c-1$
is contributed by nodes at height $h$).  We will give a set of
mutually node-disjoint paths $\{P_u\}_{u \in S}$ such that $P_u$ is a
path from a leaf to $u$ and $P_u$ does not intersect $P$. At time $t-1$,
there must be at least one pebble on each $P_u$, since otherwise there would
still be an open path from a leaf to the root at time $t$. Also counting
the leaf node that is pebbled at $t$ gives c[(d-1)(h-1) + 1] pebbles.

\begin{defn}
 The left-most (right-most) path to $u$ is the unique path ending at
 $u$ determined by choosing the smallest (largest) child at every level.
\end{defn}
\begin{defn}
 $P(l)$ is the node of path $P$ at height $l$, if it exists.
\end{defn}

For each $u \in S$ at height $l$, if $u$ is less than (greater than)
$P(l)$ then make $P_u$ the left-most (right-most) path to $u$. Now
we need to show that the paths $\{P_u\}_{u \in S} \cup \{P\}$ are
disjoint. The following fact is clear from the definition of $G_{d,h}'$.
\begin{lem}\label{l:thefact} For any $u,u' \in G_{d,h}'$, if $u < u'$ then the
smallest child of $u$ is not a child of $u'$, and the largest child of
$u'$ is not a child of $u$. 
\end{lem}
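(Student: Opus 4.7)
The plan is to perform a straightforward case analysis on the tree projections of $u$ and $u'$. Given $u,u' \in G'_{d,h}$ with $u < u'$, let $v = T^h_d(u)$ and $v' = T^h_d(u')$; I will handle separately the case $v \ne v'$ and the case $v = v'$.

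In the first case ($v \ne v'$), since $T^h_d$ is a tree, $v$ and $v'$ have disjoint sets of children in $T^h_d$. By the construction of $G_{d,h}$, the children of any copy $v[i]$ in $G_{d,h}$ are precisely the nodes $\{w[\ell] : w \text{ is a child of } v \text{ in } T^h_d,\ \ell \in [c]\}$, and similarly for $v'[j]$. Hence the child sets of $u$ and $u'$ in $G_{d,h}$ are already disjoint, and deleting edges to form $G'_{d,h}$ can only preserve disjointness, so both halves of the conclusion hold trivially.

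In the second case ($v = v'$), by the ordering definition we must have $u = v[i]$ and $u' = v[j]$ with $i < j$. I enumerate the $dc$ children of $v$ in $G_{d,h}$ in the induced order as $c_1 < c_2 < \cdots < c_{dc}$. The definition of $G'_{d,h}$ removes from $v[k]$ its $k-1$ smallest and $c-k$ largest children, so the remaining children of $v[k]$ in $G'_{d,h}$ form the consecutive window $\{c_k, c_{k+1}, \ldots, c_{k+c(d-1)}\}$. Therefore the smallest child of $u$ is $c_i$, which lies strictly below the window $\{c_j,\ldots,c_{j+c(d-1)}\}$ of children of $u'$; and the largest child of $u'$ is $c_{j+c(d-1)}$, which lies strictly above the window $\{c_i,\ldots,c_{i+c(d-1)}\}$ of children of $u$. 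Both halves of the claim follow immediately.

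The hard part is essentially nonexistent: once the edge-deletion rule is translated into the statement that each $v[k]$ is connected in $G'_{d,h}$ to exactly the length-$(c(d-1)+1)$ window of children starting at position $k$, the lemma reduces to observing that shifting such a window by at least one position strictly shifts both its minimum and its maximum. I expect the write-up to be only a few lines of bookkeeping.
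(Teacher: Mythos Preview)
Your proposal is correct. The paper itself offers no proof of this lemma beyond declaring it ``clear from the definition of $G_{d,h}'$,'' so your case analysis on whether $T^h_d(u)=T^h_d(u')$, together with the observation that the edge-deletion rule leaves each $v[k]$ with exactly the contiguous window $\{c_k,\ldots,c_{k+c(d-1)}\}$ of children, is precisely the routine verification the paper omits.
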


First we show that $P_u$ and $P$ are disjoint. The following lemma will
help now and in the proof that $G'_{d,h}$ is nice.

\begin{lem}\label{l:paths}
 For $u,v \in G_{d,h}'$ with $u < v$, if there is no path from $u$ to $v$
 or from $v$ to $u$ then the left-most path to $u$ does not intersect
 any path to $v$ from a leaf, and the right-most path to $v$ does not
 intersect any path to $u$ from a leaf.
\end{lem}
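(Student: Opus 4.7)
The plan is to argue both halves by contradiction using Lemma~\ref{l:thefact} iteratively along tree-chains; by duality I focus on the first half: the left-most path $L$ to $u$ is disjoint from every path $P$ from a leaf to $v$. Suppose some node $w$ lies in $L\cap P$ and pick $w$ at maximal tree-height of $T_d^h(w)$, so the portions of $L$ and $P$ strictly above $w$ are disjoint. If $w=u$ then $P$ already provides a DAG-path from $u$ to $v$, contradicting the hypothesis; otherwise the DAG-parent $u'$ of $w$ on $L$ and the DAG-parent $p$ of $w$ on $P$ are distinct copies of the same tree-node (the tree-parent of $T_d^h(w)$), and since $w$ is the smallest DAG-child of $u'$, Lemma~\ref{l:thefact} forces $u'>p$. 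Writing $u=T_d^h(u)[i_u]$, copy indices are preserved along a leftmost path (the smallest DAG-child of $T_d^h(\cdot)[i]$ is the leftmost tree-child copy at index $i$), so $i_{u'}=i_u$ and hence the copy index of $p$ is strictly less than $i_u$. Iterating this up the leftmost tree-chain from $T_d^h(w)$ toward $T_d^h(u)$, every node of $P$ at the corresponding tree-levels is a copy of the relevant leftmost tree-ancestor with copy index $<i_u$.

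I now case-split on how $T_d^h(v)$ sits relative to $T_d^h(u)$, which are comparable as common tree-ancestors of $T_d^h(w)$. If $T_d^h(v)$ lies at or below the tree-level of $T_d^h(u)$, then $T_d^h(v)$ is on the leftmost tree-chain from $T_d^h(u)$ down to $T_d^h(w)$, so $T_d^h(v)\le T_d^h(u)$ in the in-order $<$; combined with the copy-index bound above this forces $v<u$, contradicting $u<v$. The remaining and main case is that $T_d^h(v)$ is a strict tree-ancestor of $T_d^h(u)$. Write the tree-chain $T_d^h(u)=q_0,q_1,\ldots,q_s=T_d^h(v)$. At the level of $T_d^h(u)$ the path $P$ passes through $x=q_0[j]$ for some $j<i_u$ and then continues up to $v$. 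I then prove the following lifting sub-lemma by induction on $s$: if $q_0[a]$ has a DAG-path to $v$ in $G'_{d,h}$ and $a'\ge a$, then so does $q_0[a']$. Applied to $a=j$ and $a'=i_u$ it yields a DAG-path from $u$ to $v$, the desired contradiction.

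For the sub-lemma the base case $s=1$ is a direct verification of the edge condition defining $G'_{d,h}$ around $v$: the needed inequality holds provided $m_s<d$, where $m_s$ is the position of $q_{s-1}$ as a tree-child of $T_d^h(v)$. The condition $m_s<d$ is exactly what the hypothesis $u<v$ gives in this case, because $u<v$ together with $T_d^h(v)$ being a tree-ancestor of $T_d^h(u)$ puts $T_d^h(u)$ into a non-rightmost subtree of $T_d^h(v)$. For the inductive step, given the DAG-parent $q_1[b]$ used by the $a$-path, choose a DAG-parent $q_1[b']$ of $q_0[a']$ with $b'\ge b$ as follows: when the local position $m_1=1$ the pruning range of $q_0[a']$'s DAG-parents is $[1,a']$, so $b'=b$ works directly; when $m_1\ge 2$, the choice $b'=\min(c,b+(a'-a))$ lies in the shifted pruning range. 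Applying the IH one level up at $q_1$ then lifts the entire path.

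The right-most half of the lemma follows by the dual argument: replace ``smallest child'' by ``largest child'' throughout, use the other direction of Lemma~\ref{l:thefact}, and track copy indices down-right from $v$. The main obstacle in the whole proof is the lifting sub-lemma of the strict-tree-ancestor case, since it requires careful bookkeeping of how the pruning-induced copy-index windows shift along $q_0,q_1,\ldots,q_s$; the single place the hypothesis $u<v$ actually enters is the boundary condition $m_s<d$ at the top of the chain, which underwrites the base case.
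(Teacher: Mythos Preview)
Your argument is essentially correct and considerably more detailed than the paper's. The paper's proof is three sentences: it takes the highest intersection $w$ at height $l-1$, asserts (offering only ``since there is no path between $u$ and $v$'' as justification) that $P_u'(l) < P_v'(l)$, and then invokes Lemma~\ref{l:thefact} to conclude $P_u'(l-1)\ne P_v'(l-1)$, contradicting the intersection at $w$. The inequality $P_u'(l) < P_v'(l)$ is precisely the nontrivial step, and the paper does not justify it. Your route is genuinely different: you observe (via the contrapositive of Lemma~\ref{l:thefact}) that in fact the parents satisfy $u'>p$, i.e.\ $P_u'(l) > P_v'(l)$; you then propagate this copy-index inequality up the leftmost tree-chain to the level of $T_d^h(u)$ (using that each node on that chain is the leftmost tree-child of the next, so parents of a copy~$[j]$ lie among copies $[1],\ldots,[j]$), and you finally discharge the contradiction by a lifting sub-lemma that manufactures a DAG-path from $u$ to $v$. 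In effect, your lifting sub-lemma is exactly what would be needed to justify the paper's bare assertion, in contrapositive form. Your approach buys a fully self-contained argument; the paper's buys brevity at the cost of leaving the key step to the reader.

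One small gap: you handle $w=u$ but not $w=v$. If the highest intersection $w$ equals $v$, then the DAG-parent $p$ of $w$ on $P$ does not exist and your subsequent argument does not apply; but this case is immediate, since then $v$ lies on $L$ and the upper segment of $L$ is already a DAG-path from $v$ to $u$, contradicting the hypothesis. With that one-line addition your proof is complete.
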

\begin{proof} Suppose otherwise and let $P_u'$ be the left-most
path to $u$, and $P_v'$ a path to $v$ that intersects $P_u'$. Since
there is no path between $u$ and $v$, there is a height $l$, one
greater than the height where the two paths first intersect, such that
$P_u'(l), P_v'(l)$ are defined and $P_u'(l) < P_v'(l)$. But then from
Lemma \ref{l:thefact} $P_u'(l-1) \not = P_v'(l-1)$, a contradiction. The
proof for the second part of the lemma is similar. 
\end{proof}

That $P_u$ and $P$ are disjoint follows from using Lemma \ref{l:paths}
on $u$ and the sibling of $u$ in $P$. 

Next we show that for distinct
$u,u' \in S$, $P_u$ does not contain $u'$. Suppose it does. Assume $P_u$
is the left-most path to $u$ (the other case is similar). Since $u
\not = u'$, there must be a height $l$ such that $P_u(l)$ is defined
and $P_u(l-1) = u'$. From the definition of $S$, we know $P(l)$ is also
a parent of $u'$. From the construction of $P_u$, since we assumed $P_u$
is the left-most path to $u$, it must be that $P_u(l) < P(l)$. But then
Lemma \ref{l:thefact} tells us that $u'$ cannot be a child of $P(l)$, a
contradiction. 

The proof that $P_u$ and $P_{u'}$ do not intersect is by contradiction.
Assuming that there are $u,u' \in S$ such that
$P_u$ and $P_{u'}$ intersect, there is a height $l$, one greater
than the height where they first intersect, such that $P_u(l) \not =
P_{u'}(l)$. Note that $P_u$ and $P_{u'}$ are both left-most paths or both
right-most paths, since otherwise in order for them to intersect they
would need to cross $P$. But then from Lemma \ref{l:thefact} $P_u(l-1)
\not = P_{u'}(l-1)$, a contradiction.

This is an example of a bottleneck of the specified structure for $G_{d,h}'$
corresponding to the height 3 binary tree, with $c=3$:

\ifpdf
\else
\begin{figure}
\vspace*{.3cm}
\hspace*{1.5cm}\includegraphics[scale=0.70]{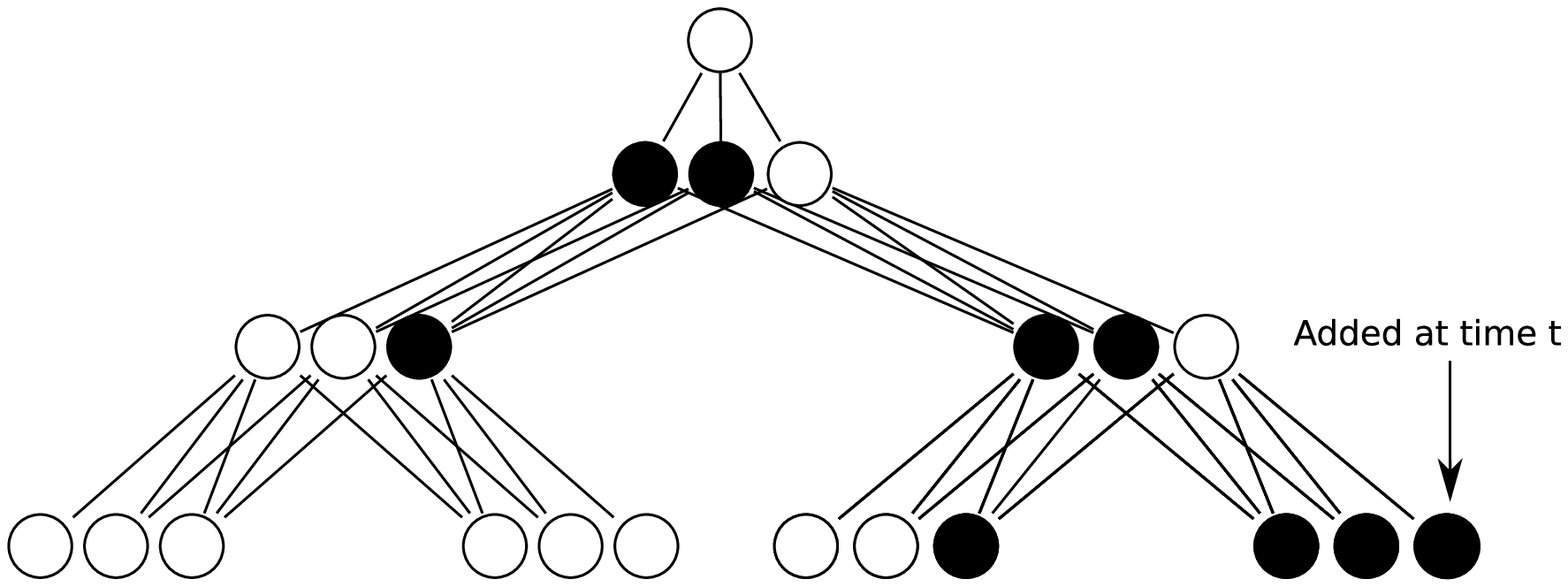}
\hspace*{1cm} \vspace*{.4cm}
\caption{A possible black pebbling bottleneck of $G_{2,3}'$, with $c=3$}
\label{f:fract_bottleneck}
\end{figure}
\fi

The last step is to prove that $G_{d,h}'$ is nice. There are three
properties specified in Definition \ref{d:nice}.
Property 2 is obviously satisfied. For property 1, the argument used to
give the black pebbling lower bound of $c[(d-1)(h-1) + 1]$ can be used
to give a black pebbling lower bound of $c(d-1)(l-1) + 1$ for any node
at height $l \leq h$ (the 1 is for the last node pebbled, and recall
the root is at height $h+1$), and that bound is tight. For property 3,
choose $P_i$ to be the left-most (right-most) path from $u_i$ if $u_i$
is less than (greater than) $u$. Then use Lemma \ref{l:paths} on each
pair of nodes in $\{u, u_1,\ldots, u_m\}$. 

Since $\Bpebbles(G'_{d,h}) = c[(d-1)(h-1)+ 1]$, we have
\[\BWpebbles(G_{d,h}) \geq \BWpebbles(G'_{d,h}) \geq c[(d-1)(h-1) + 1]/2\] and 
thus that the pebbling cost for the discretized game on $T_{d}^{h}$ is
at least $(d-1)(h-1)/2 + .5$, which implies $\FRpebbles(T_{d}^{h}) \geq
(d-1)(h-1)/2 - .5$.
\end{proof}

\subsection{White sliding moves}\label{s:wSlide}

In the definition of fractional pebbling (Definition \ref{d:pebbling})
we allow black sliding moves but not white sliding moves.
To allow white sliding moves we would add a clause

\medskip
\noindent
(iv)  For every internal node $i$, if $w(i)=1$ and $j$ is a
child of $i$ and every child of $i$ except $j$ has total pebble
value 1, then decrease $w(i)$ to 0 and increase $w(j)$ so that
node $j$ has total pebble value 1.

We did not include this move in the original definition because
a nondeterministic $k$-way BP solving \ft\ or \bt\
does not naturally simulate it.
The natural way to simulate such a move would be to verify
the conjectured value of node $i$ (conjectured when the white
pebble was placed on $i$) by comparing it with
$f_i(v_{j_1},\ldots,v_{j_d})$,
where $j_1,\ldots,j_d$ are the children of $i$.  But this
would require the BP to remember a $(d+1)$-tuple of values,
whereas potentially only $d$ pebbles are involved.

White sliding moves definitely reduce the number of pebbles required
to pebble some trees.
For example the binary tree $T^3_2$ can easily be pebbled with 2 pebbles
using white sliding moves, but requires 2.5 pebbles without (Theorem
\ref{t:daryFract}).  The next result shows that $8/3$ pebbles
suffice for pebbling $T^4_2$ with white sliding moves, whereas
3 pebbles are required without (Theorem \ref{t:daryFract}).

\begin{theorem}\label{t:eight-thirds} The binary tree of height 4 can
be pebbled with $8/3$ pebbles using white sliding moves.
\end{theorem}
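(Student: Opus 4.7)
The plan is to construct an explicit fractional pebbling sequence for $T^4_2$ that stays within $8/3$ pebbles and exploits white sliding moves. Number the nodes so the root is $1$, children are $2$ and $3$, grandchildren are $4,\ldots,7$, and leaves are $8,\ldots,15$. The central idea is to keep a \emph{fractional} black pebble $b(2) = 2/3$ on node $2$ for most of the run (rather than a full $1$), adding the complementary white weight $w(2) = 2/3$ only just before the root is pebbled; this saves the $1/3$ over the bound of $3$ from Theorem~\ref{t:daryFract}.

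Phase~I pebbles the subtree rooted at $2$ and ends in the one-pebble configuration $b(2) = 2/3$ (everything else zero). Use the $T^3_2$-with-white-sliding subroutine on this subtree to reach $b(4) = 1,\ w(5) = 1$ with only two pebbles, then apply rule (iii) to raise $b(2)$ directly to $2/3$ (not to $1$) while dropping $b(4)$ to $0$, giving $b(2) = 2/3,\ w(5) = 1$. Next place a helper $b(11) = 1$ (peak $2/3 + 1 + 1 = 8/3$), slide $w(5) \to w(10)$ by rule (iv), remove the helper, and drop the leaf white $w(10)$ for free. Crucially, $b(2)$ is raised only to $2/3$ and not $1$; otherwise the helper-plus-slide step would exceed the budget.

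Phase~II sets up and performs the root increment. By rule (ii) (free) we set $w(2) = 2/3$ and $w(3) = 1$, reaching total $2$; now $b(2) + w(2) = 1$ and $b(3) + w(3) = 1$, so rule (iii) at node $1$ raises $b(1)$ to $1$ while dropping $b(2)$ to $0$, yielding $b(1) = 1,\ w(2) = 2/3,\ w(3) = 1$ at total exactly $8/3$ (the root-black requirement is met). Phase~III removes $b(1)$ by rule (i) and then cascades the two surviving white pebbles to leaves. For $w(3) = 1$: place helper $b(7) = 1$, slide $w(3) \to w(6)$, remove $b(7)$; then helper $b(13) = 1$, slide $w(6) \to w(12)$, remove $b(13)$; drop leaf white $w(12)$ for free. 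For $w(2) = 2/3$: analogously cascade through $w(4)$ and $w(8)$, with helpers $b(5) = 1$ (itself built by a $T^2_2$ sub-procedure of peak $2$) and $b(9) = 1$. Each helper-plus-slide configuration totals $2/3 + 1 + 1 = 8/3$ or less.

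The main obstacle is that every nontrivial configuration saturates the $8/3$ budget with essentially no slack, so the fractional weights $1/3$ and $2/3$ and the interleaving of helper placements, slides, and removals are forced. A minor technical caveat is that we use the natural fractional generalization of rule (iv)---any $w(i) > 0$ may slide to a child $j$ when every sibling of $j$ has pebble value $1$---whereas the rule stated in Section~\ref{s:wSlide} is written for $w(i) = 1$; this extension is implicitly needed when sliding $w(2) = 2/3$ down to $w(4)$.
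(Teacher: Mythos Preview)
Your argument breaks in two places, and the second is fatal.

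First, Phase~II is arithmetically inconsistent. You end Phase~I with $b(2)=2/3$ and then ``add the complementary white weight $w(2)=2/3$,'' but $b(2)+w(2)=4/3>1$ violates constraint~(\ref{e:consTwo}). Your claimed total of $2$ at that moment is consistent only with $b(2)=1/3$, which contradicts what you wrote (and carefully justified) in Phase~I.

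Second, and more seriously, Phase~III cannot be carried out within $8/3$ regardless of how you resolve the first issue. You write ``place helper $b(7)=1$,'' but node~$7$ is internal: to increase $b(7)$ at all you must first bring its leaf children $14,15$ to value~$1$, and at that moment the total is already $w(2)+w(3)+2\ge 1+2=3>8/3$. In fact, having a \emph{full} white pebble $w(3)=1$ on a height-$3$ node after the root step is a dead end. The first time $w(3)$ decreases you need some child of~$3$ at value~$1$; if that child has any black value you already hit~$3$ (its grandchildren had to be fully pebbled while $w(3)=1$), and if it is pure white you are left with $w(6)=w(7)=1$, and reducing either of those then forces a leaf sibling to value~$1$ while both whites persist, again reaching~$3$. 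No version of the generalized rule~(iv) escapes this.

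The paper's construction avoids exactly this trap: it never leaves a whole white pebble on a height-$3$ node. Instead it carries $1/3$ black on node~$2$, black-pebbles the subtree at~$3$ directly (leaving $2/3$ white on the height-$2$ node~$6$ along the way), and after pebbling the root is left with $2/3$ white on each of nodes~$2$ and~$6$. Removing a $2/3$ white from a height-$3$ node, or a $2/3$ white from a height-$2$ node, \emph{can} be done within~$8/3$; removing a full white from a height-$3$ node cannot.
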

\begin{proof} The height 3 binary tree can be pebbled with 2 pebbles. Use
that sequence on node 2, but leave a third black pebble on node 2. That
takes 7/3 pebbles. Put black pebbles on nodes 12 and 13. Slide a third
black pebble up to node 6. Remove the pebbles on nodes 12 and 13. Put
black pebbles on nodes 14 and 15 -- this is the first configuration with
8/3 pebbles. Slide the pebble on node 14 up to node 7. Remove the pebble
from 15. Put 2/3 of a white pebble on node 6. Slide the black pebble
on node 7 up to node 3. Remove a third black pebble from node 6. Put
2/3 of a white pebble on node 2 -- the resulting configuration has 8/3
pebbles. Slide the black pebble on node 3 up to the root. Remove all
black pebbles. At this point there is 2/3 of a white pebble on both node
2 and node 6. Put a black pebble on node 12 and a third black pebble
on node 13 -- another bottleneck. Slide the 2/3 white pebble on node
6 down to node 13. Remove the pebbles from nodes 12 and 13. Finally,
use 8/3 pebbles to remove the 2/3 white pebble from node 2.
\end{proof}

\section{Branching Program Bounds}\label{s:PBbounds}

In this section we prove tight bounds (up to
a constant factor) for the number of states required for
both deterministic and nondeterministic $k$-way  branching
programs to solve the Boolean problems \btdthree\ for all
trees of height 2 and 3. (The bound is obviously $\Theta(k^d)$ for
trees of height 2, because there are $d + k^d$ input variables.)
For every height $h\ge 2$
we prove upper bounds for deterministic {\em thrifty} programs
which solve \ft\
(Theorem \ref{t:BPUpper}, (\ref{e:dFUpper})), and show that these
bounds are optimal for degree $d=2$ even for the Boolean problem \bt\
(Theorem \ref{t:detThriftLB}).  We prove upper bounds for
nondeterministic thrifty programs solving \bt\ in general,
and show that these are optimal for binary trees of height 4
or less (Theorems \ref{t:BPUpper} and \ref{t:thrifFourtwo}). 

For the nondeterministic case our best BP upper bounds for every
$h\ge 2$ come from
fractional pebbling algorithms via Theorem \ref{t:pebSim}.
For the deterministic case our best bounds for the function
problem \ft\ come from black pebbling via the same theorem,
although we can improve on them for the Boolean problem
\bttwoh\ by a factor of $\log k$ (for $h\ge 3)$.

\begin{theorem}[BP Upper Bounds]
\label{t:BPUpper}
For all $h,d\ge 2$
\begin{eqnarray}
   \dFstate & = & O(k^{(d-1)h-d+2})   \label{e:dFUpper} \\
   \dBstate & = &  O(k^{(d-1)h-d+2} / \log k), \mbox{ for $h\ge 3$}
                                       \label{e:dBUpper}   \\
   \ndBstate &  = & O(k^{(d-1)(h/2)+1}) \label{e:nBUpper} 
\end{eqnarray}
The first and third bounds are realized by thrifty programs.
\end{theorem}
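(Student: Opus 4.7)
The plan is to establish the three bounds by connecting branching programs to the pebbling bounds established in Section \ref{s:PebBounds}. Bounds (\ref{e:dFUpper}) and (\ref{e:nBUpper}) follow almost immediately from Theorem \ref{t:pebSim} and yield thrifty programs, which simultaneously takes care of the final sentence of the statement. Bound (\ref{e:dBUpper}) requires a more delicate non-thrifty construction that improves on (\ref{e:dFUpper}) by a $\log k$ factor for the Boolean problem; this is the main obstacle.

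For (\ref{e:dFUpper}), I would apply Theorem \ref{t:pebSim}(i) to the black pebbling of $T^h_d$ with $p = (d-1)h - d + 2$ pebbles furnished by Theorem \ref{t:blackSliding}, obtaining a deterministic thrifty BP solving \ft\ with $O(k^{(d-1)h-d+2})$ states. For (\ref{e:nBUpper}), I would apply Theorem \ref{t:pebSim}(ii) to the fractional pebbling with $p \leq (d-1)h/2 + 1$ pebbles furnished by Theorem \ref{t:daryFract}, obtaining a nondeterministic thrifty BP solving \bt\ with $O(k^{(d-1)h/2+1})$ states. Both constructions are thrifty by Theorem \ref{t:pebSim}, as needed.

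For (\ref{e:dBUpper}), the plan is to modify the thrifty BP for \ft\ from (\ref{e:dFUpper}) into a non-thrifty deterministic BP for \bt\ that saves a factor of $\log k$. The motivating observation is that a single $k$-way query carries $\log k$ bits of information, while the Boolean answer needs only a single bit. The construction should issue extra, non-thrifty queries to the root function $f_1$ at carefully chosen positions (i.e., at arguments that need not be the correct children values), so that the $k$-ary response of one such query delivers enough Boolean-relevant information to collapse $\log k$ states into a single one at some layer of the BP. The hypothesis $h \geq 3$ matters here: it gives enough structural room in the tree so that these non-thrifty queries can be inserted without inflating the state count of the remaining pebbling-driven layers.

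The main obstacle is the design and analysis of this non-thrifty compression. I would need to (i) pinpoint the right layer of the thrifty BP to compress, (ii) describe the non-thrifty queries that enable the compression and argue that they preserve correctness for arbitrary input functions $f_1$, including adversarial ones, and (iii) confirm that the savings is a true $\log k$ factor rather than merely a constant. This step departs from the pebbling framework used for the other two bounds, and is presumably where the proof becomes most delicate, since it must route the extra $\log k$ bits harvested by a single non-thrifty query into genuine state savings downstream rather than merely postponing the bottleneck.
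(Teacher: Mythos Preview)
Your treatment of (\ref{e:dFUpper}) and (\ref{e:nBUpper}) matches the paper exactly: apply Theorem \ref{t:pebSim} to the pebbling bounds of Theorems \ref{t:blackSliding} and \ref{t:daryFract}, and note that the resulting programs are thrifty.

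For (\ref{e:dBUpper}), however, there is a genuine gap: you correctly identify that non-thrifty queries to $f_1$ are needed and that this is the hard part, but you do not supply the construction, and the heuristic you offer (``a single $k$-way query carries $\log k$ bits, so one extra query should collapse $\log k$ states'') does not lead to the actual mechanism. The paper's algorithm does not extract $\log k$ bits from a single clever query; instead it \emph{computes $v_2$ twice}. Concretely, with $m \approx \log k$: (1) compute $v_2$ by black-pebbling the first principal subtree, but retain only which of the $\lceil k/m\rceil$ blocks of size $m$ it lies in; (2) with this coarsened $v_2$, compute $v_3,\ldots,v_{d+1}$ exactly --- this step dominates, and the block trick is what buys the factor-$m$ savings here; (3) holding $v_3,\ldots,v_{d+1}$ and the block, make $m$ separate queries $f_1(a,v_3,\ldots,v_{d+1})$, one for each $a$ in the block, and record the $m$-bit vector of which $a$'s give output $1$; (4) forget everything except this bit vector ($2^m$ possibilities), recompute $v_2$ from scratch, and look it up in the vector. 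Balancing the $k^{(d-1)(h-1)+1}/m$ cost of step (2) against the $2^m$ blowup in step (4) gives the stated bound.

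The two ideas missing from your plan are the \emph{block coarsening} of $v_2$ and the \emph{recomputation} of $v_2$ at the end; without them it is not clear how any number of non-thrifty $f_1$-queries would reduce the state count at the bottleneck layer.
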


\begin{proof}
The first and third bounds follow from Theorem \ref{t:pebSim}
(which states that pebbling upper bounds give rise to upper
bounds for the size of thrifty BPs) and from Theorems
\ref{t:blackSliding} and \ref{t:daryFract} (which give the
required pebbling upper bounds).

To prove (\ref{e:dBUpper}) we use a branching program which
implements the algorithm below.   Here we have a
parameter $m$, and choosing
$m= \lceil \log k^{d-1} - \log\log k^{d-1} \rceil$ suffices
to show $\dBstate = O(k^{(d-1)(h-1)+1} / \log k^{d-1})$,
from which (\ref{e:dBUpper}) follows.
We estimate the number of states required up to a constant factor.  

\medskip
\noindent
1)  Compute $v_2$ (the value of node 2 in the heap ordering),
using the black pebbling algorithm for the principal left subtree.
This requires $k^{(d-1)(h-2)+1}$ states.
Divide the $k$ possible values
for $v_2$ into $\lceil k/m \rceil$ blocks of size $m$.

\medskip
\noindent
2)  Remember the block number for $v_2$, and compute
$v_3,\ldots,v_{d+1}$.  This requires
$k/m \times k^{d-2} \times k^{(d-1)(h-2) + 1} = k^{(d-1)(h-1)+1}/m$
states.

\medskip
\noindent
3)  Remember $v_3,\ldots,v_{d+1}$ and the block number for $v_2$.
Compute $f_1(a, v_3, \ldots, v_{d+1})$ for each of the $m$ possible values $a$ for $v_2$
in its block number, and keep track of the set of $a$'s for which
$f_1 = 1$.  This requires
$k^{d-1} \times k/m \times m \times 2^m = k^d  2^m$ states.

\medskip
\noindent
4)  Remember just the set of possible $a$'s (within
its block) from above (there are $2^m$ possibilities).  Compute $v_2$
again and accept or reject depending on whether $v_2$ is in the subset.
This requires $k^{(d-1)(h-2)+1}  2^m$ states.

The total number of states has order the maximum of
$ k^{(d-1)(h-1)+1}/m$ and $k^{(d-1)(h-2)+1}  2^m$,
which is at most
$$
k^{(d-1)(h-1)+1} / (\log k^{d-1} - \log \log k^{d-1})
$$
for $m = \log k^{d-1} - \log \log k^{d-1}$.
\end{proof}

We combine the above upper bounds with the Ne\u{c}iporuk lower
bounds in Subsection \ref{s:NecLB}, Figure \ref{neci},
to obtain the following.

\begin{cor}[Tight bounds for height 3 trees]
\label{c:HtThree} For all $d\ge 2$
\begin{eqnarray*}
    \dFdthreestate & = &  \Theta(k^{2d-1}) \\
    \dBdthreestate & = & \Theta(k^{2d-1}/\log k)  \\
    \nddthreeBstate & = & \Theta(k^{(3/2)d - 1/2})  
\end{eqnarray*}
\end{cor}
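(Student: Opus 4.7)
The upper bounds are immediate specializations of Theorem~\ref{t:BPUpper} with $h=3$: for the deterministic function problem we get $O(k^{(d-1)\cdot 3 - d + 2}) = O(k^{2d-1})$, for the deterministic Boolean problem $O(k^{2d-1}/\log k)$, and for the nondeterministic Boolean problem $O(k^{(d-1)(3/2)+1}) = O(k^{(3/2)d-1/2})$. So the entire content of the corollary is the matching lower bounds, which I would obtain by applying the Ne\u{c}iporuk method (the subject of Section~\ref{s:NecLB}) to the input-variable partition that is most natural for a height-3 tree.

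For that partition I would group the input by internal node: one block $V_i$ of size $k^d$ for each internal node $i \in \{1,2,\ldots,d+1\}$, consisting of the $k^d$ values of its function $f_i : [k]^d \to [k]$; the $d^2$ leaf variables can be ignored. The Ne\u{c}iporuk lower bounds then require me to estimate, for each block $V_i$, the number $N_i$ of distinct subfunctions of \ftdthree\ (resp.\ \btdthree) obtained by fixing the inputs outside $V_i$. Once these counts are in hand, the $k$-way deterministic Ne\u{c}iporuk bound gives size $\Omega\!\left(\sum_i \log N_i / \log\log N_i\right)$ and the nondeterministic version gives $\Omega\!\left(\sqrt{\sum_i \log N_i}\right)$, each of which I would compare with the targeted bound.

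To carry out the counting I would argue as follows. For a middle-level block $V_i$ ($i \ne 1$), I fix the $d$ leaves under node $i$ to any tuple $\vec{x}\in[k]^d$, fix the other $d-1$ children of the root to arbitrary constants (by fixing their function tables), and fix $f_1$ to project onto the coordinate corresponding to node $i$; then the root's value is exactly $f_i(\vec x)$, so every entry of the table $f_i$ can be independently extracted, giving $N_i = k^{k^d}$ distinct subfunctions for the function problem \ftdthree. For the root block $V_1$, a similar construction with leaves chosen so that the children of the root realize every tuple in $[k]^d$ gives $N_1 = k^{k^d}$ as well. For the Boolean problem \btdthree\ one only gets $\log N_i \approx k^d$ rather than $k^d \log k$, which accounts for the $1/\log k$ factor in $\dBdthreestate$ and is exactly the source of the $\sqrt{\cdot}$ exponent in the nondeterministic bound. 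Plugging these in: $\sum_i \log N_i = \Theta(d\cdot k^d \log k)$ for the function problem, yielding deterministic lower bound $\Omega(k^d \log k / \log(k^d\log k)) \cdot d = \Omega(k^{2d-1})$ after the Ne\u{c}iporuk arithmetic — here I am effectively re-deriving the exponent $2d-1$ from $d$ blocks each contributing $\Theta(k^d/\log k)$, which matches, and similarly $\Omega(k^{2d-1}/\log k)$ for the Boolean version.

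The main obstacle will be the nondeterministic Boolean lower bound of $\Omega(k^{(3/2)d-1/2})$, since the square-root in the nondeterministic Ne\u{c}iporuk bound forces me to use $\log N_i \approx k^d$ per block over $d+1$ blocks, giving only $\Omega(\sqrt{k^d}\cdot \sqrt{d+1}) = \Omega(k^{d/2})$, which is far from the target. To close the gap I would instead use a non-uniform partition or, equivalently, apply the nondeterministic Ne\u{c}iporuk bound together with a block accounting that exploits the fact that on a height-3 tree one level-2 block contributes a $k^{d-1}$ ``amplification'' via the root (because the root must pair up a guess for node~$i$'s value with the remaining level-2 values). Making this precise — most likely by counting ordered pairs of block-subfunctions, or by invoking the variant of Ne\u{c}iporuk used for nondeterministic $k$-way BPs in Section~\ref{s:NecLB} — is the delicate step, and is where I expect to spend the real work; the deterministic bounds are then routine given the counts already computed.
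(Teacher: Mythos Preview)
Your partition---one block $V_i$ of size $k^d$ per internal node---is the wrong choice and does not yield the claimed bounds. First, the subfunction count is incorrect: if $V_i$ is the whole table of a level-$2$ node $i$, then for any fixing of the remaining variables the root value depends on a \emph{single} entry $f_i(\vec x)$ of that table (where $\vec x$ is the fixed leaf tuple under~$i$), post-processed by some fixed map $g:[k]\to R$ determined by $f_1$ and the siblings' values. So the number of restrictions on $V_i$ is at most $k^d\cdot|R|^k$, not $k^{k^d}$. Second, even granting your $N_i=k^{k^d}$, the Ne\u{c}iporuk arithmetic gives only $s_i=\Theta(k^{d-1})$ states per block and hence $\Theta(d\cdot k^{d-1})$ total over your $d+1$ blocks; the line ``$\Omega(k^d\log k/\log(k^d\log k))\cdot d=\Omega(k^{2d-1})$'' is simply false (the left side is $\Theta(k^d)$).

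The paper's partition (see the proof of Theorem~\ref{necilowerbound}) is into \emph{litters}: for each $\vec j\in[k]^d$ take the $d$ variables $\{f_2(\vec j),\ldots,f_{d+1}(\vec j)\}$, one entry from each of the $d$ level-$2$ siblings. For $h=3$ this gives $k^d$ blocks, each of size $d$. Fixing the leaves under each node $i$ to $\vec j$ makes node $i$'s value equal to the $i$-th litter variable, so the root value is $f_1$ applied to the litter tuple; since $f_1$ lies outside the litter it can be chosen freely, and hence all $|R|^{k^d}$ maps $[k]^d\to R$ occur as restrictions. Each litter then forces $\Omega(k^{d-1})$ deterministic states (resp.\ $\Omega(k^{d-1}/\log k)$ for the Boolean problem, $\Omega(k^{(d-1)/2})$ nondeterministic), and summing over the $k^d$ litters gives the three lower bounds directly. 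The nondeterministic case needs no extra ``amplification'' trick---the obstacle you anticipated vanishes once the partition is right.
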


\subsection{The Ne\u{c}iporuk method}\label{s:NecLB}

The Ne\u{c}iporuk method still yields the strongest explicit
binary branching program size lower bounds known today, namely
$\Omega(\frac{n^2}{(\log n)^2})$ for deterministic \cite{ne66} and
$\Omega(\frac{n^{3/2}}{\log n})$ for nondeterministic (albeit for a
weaker nondeterministic model in which states have bounded outdegree
\cite{pu87}, see \cite{ra91}).

By \emph{applying the Ne\u{c}iporuk method} to a $k$-way branching program
$B$ computing a function $f:[k]^m \rightarrow R$, we mean the
following well known steps \cite{ne66}:
%, see \cite{alzw89} for additional considerations):
\begin{enumerate}
\item Upper bound the
  number $N(s,v)$ of (syntactically) distinct branching programs of
  type $B$ having $s$ non-final states, each labelled by one of $v$ variables.
\item Pick a partition $\{V_1,\ldots, V_p\}$ of $[m]$.
\item For $1\leq i\leq p$, lower bound the number $r_{V_i}(f)$
  of restrictions $f_{V_i}: [k]^{|V_i|} \rightarrow R$
  of $f$ obtainable by fixing values of the variables in
  $[m]\setminus V_i$.
\item Then size($B$) $\geq$ $|R|+\sum_{1\leq i\leq p} s_i$, where $s_i
  = \min \{\ s : N(s,|V_i|) \geq r_{V_i}(f)\ \}$.
\end{enumerate}

\begin{theorem} \label{necilowerbound} Applying the Ne\u{c}iporuk
  method yields Figure \ref{neci}.
\end{theorem}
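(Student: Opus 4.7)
The plan is to execute the four-step Ne\u{c}iporuk procedure just outlined; the hard part is choosing the partition and counting restrictions (steps 2 and 3), since the counts $N(s,v)$ in step 1 are a standard exercise. For a deterministic $k$-way BP, labelling each of the $s$ non-final states with one of $v$ variables and wiring its $k$ out-edges to any of the $s+|R|$ possible destinations gives $N(s,v)\leq s\cdot (v(s+|R|)^{k})^{s}$, so $\log N(s,v)=O(sk\log s + s\log v)$. For a nondeterministic $k$-way BP with unbounded out-degree, each state specifies for each of the $k$ query answers a subset of next states, giving $N(s,v)\leq s\cdot v^{s}\cdot 2^{ks(s+|R|)}$ and $\log N(s,v)=O(ks^{2} + s\log v)$.

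For step 2, I would exploit the combinatorial freedom of the root function $f_{1}$ and choose the partition so that the $d$-tuple $(v_{2},\ldots,v_{d+1})$ feeding $f_{1}$ coincides, after a suitable restriction of the leaves, with the $d$ block variables. Concretely, for each $\vec{a}\in[k]^{d}$ form the block
\[
V_{\vec{a}} \;=\; \{\,f_{i}(\vec{a}) : 2\leq i\leq d+1\,\},
\]
so $p=k^{d}$, $|V_{\vec{a}}|=d$, and the blocks partition the $d\cdot k^{d}$ function-table entries at level $2$ (the leaves and the entries of $f_{1}$ are deliberately excluded from the partition). Step 3 then shows $r_{V_{\vec{a}}}(f)\geq |R|^{k^{d}}$: fixing every leaf under child $i$ to the components of $\vec{a}$ makes $v_{i}=f_{i}(\vec{a})$ exactly the $i$-th block variable $y_{i}$, so the restriction of the tree-evaluation function to $V_{\vec{a}}$ equals $(y_{2},\ldots,y_{d+1})\mapsto f_{1}(y_{2},\ldots,y_{d+1})$; letting the fixed part range over all $|R|^{k^{d}}$ possible choices of $f_{1}$ produces that many pairwise distinct restrictions.

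Step 4 combines steps 1--3 via $\sum_{\vec{a}}s_{\vec{a}}\leq s$ and $N(s_{\vec{a}},d)\geq r_{V_{\vec{a}}}$. For deterministic $FT_{d}^{3}(k)$ (with $|R|=k$), solving $s_{\vec{a}}k\log s_{\vec{a}}\gtrsim k^{d}\log k$ gives $s_{\vec{a}}=\Omega(k^{d-1})$ per block and hence $s=\Omega(k^{2d-1})$; for deterministic $BT_{d}^{3}(k)$ (with $|R|=2$) the same calculation yields $s_{\vec{a}}=\Omega(k^{d-1}/\log k)$ and $s=\Omega(k^{2d-1}/\log k)$; for nondeterministic $BT_{d}^{3}(k)$, $ks_{\vec{a}}^{2}\gtrsim k^{d}$ gives $s_{\vec{a}}=\Omega(k^{(d-1)/2})$ and $s=\Omega(k^{(3d-1)/2})$, matching the upper bounds of Corollary~\ref{c:HtThree}. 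The main obstacle is the step-4 bookkeeping needed to justify $r_{V_{\vec{a}}}(f)\leq N(s_{\vec{a}},|V_{\vec{a}}|)$---one must show that after fixing the variables outside $V_{\vec{a}}$, the piece of the BP labelled by $V_{\vec{a}}$-queries can be encoded as a single object counted by $N(s_{\vec{a}},d)$---but because the blocks consist only of level-$2$ function-table entries and share no variables, the standard Ne\u{c}iporuk summarization (entry points plus $V$-labelled states plus exit behaviour into the rest of the BP) goes through cleanly in both the deterministic and nondeterministic settings.
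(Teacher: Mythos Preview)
Your approach is essentially the paper's for $h=3$: the paper also partitions into size-$d$ ``litters'' $\{f_{i}(\vec{j}),f_{i+1}(\vec{j}),\ldots,f_{i+d-1}(\vec{j})\}$ across a sibling group, and shows each litter induces $|R|^{k^{d}}$ restrictions by fixing the descendants to make the litter variables relevant and using the parent node's function table to realize an arbitrary map. For $h=3$ the only sibling group is $\{2,\ldots,d+1\}$, so your blocks $V_{\vec a}$ coincide with the paper's litters and your calculations for the three boxed $k$-way entries are the same as the paper's.

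The gap is that the theorem asserts the bounds in Figure~\ref{neci} for \emph{all} $h$, and every entry carries the factor $\frac{d^{h-2}-1}{d-1}$ (or a close relative), which your partition cannot produce: you use only the $d\cdot k^{d}$ level-$2$ table entries, giving $k^{d}$ blocks regardless of $h$. The paper instead takes litters from \emph{every} sibling group at every internal level below the root, yielding $k^{d}\cdot\frac{d^{h-2}-1}{d-1}$ litters. The restriction count for a deep litter then needs an additional step you omit: after making the litter variables relevant and writing the desired $[k]^{d}\to R$ map into the parent node's table, one must also fix all remaining ancestor tables so that the parent's value percolates unchanged to the root. Without this extension the $h$-dependent prefactors in the figure are unreachable. (You also do not treat the nondeterministic $k$-way \ft\ entry, where the $\sqrt{\log k}$ arises because $|R|=k$ enters the $\nkway$ count only through the mild $(|R|+1)^{sk}$ factor, nor the four binary-BP rows; these use the same partition with the appropriate $N(s,v)$ bounds.)
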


\renewcommand*\arraystretch{1.5} %\rule[10pt]{10pt}{10pt}

\begin{figure}
%\begin{center}
\hspace*{-4mm}
\begin{tabular}{|p{.25\textwidth}||p{.3\textwidth}|p{.3\textwidth}|}
\hline \multicolumn{1}{|c||}
%  {\raisebox{-1ex}{Model}}
{Model}
 & \multicolumn{1}{|c||}{Lower bound for \ft}
  &
\multicolumn{1}{|c||}{Lower bound for \bt}
\\ \hline \hline
Deterministic $k$-way branching program &
\multicolumn{1}{c|}{\raisebox{-1ex}{\framebox{$\frac{d^{h-2}-1}{4(d-1)^2}\cdot
    k^{2d-1}$}}} &
\multicolumn{1}{c|}{\raisebox{-1ex}{\framebox{$\frac{d^{h-2}-1}{3(d-1)^2}\cdot
    \frac{k^{2d-1}}{\log k}$}}} \\
\hline
Deterministic binary branching program &
\multicolumn{1}{c|}{\raisebox{-1ex}{
$\frac{d^{h-2}-1}{5(d-1)^2} \cdot
    k^{2d} = \Omega(n^2/(\log n)^2)$
}} &
\multicolumn{1}{c|}{\raisebox{-1ex}{
$\frac{d^{h-2}-1}{4d(d-1)} \cdot
    \frac{k^{2d}}{\log k} = \Omega(n^2/(\log n)^3)$
}} \\
\hline
Nondeterministic $k$-way BP &
\multicolumn{1}{c|}{\raisebox{-1ex}{
 $\frac{d^{h-2}-1}{2d-2} \cdot k^{\frac{3d}{2}-\frac{1}{2}}\sqrt{\log k}$
}}
&
\multicolumn{1}{c|}{\raisebox{-1ex}{\framebox{$\frac{d^{h-2}-1}{2d-2}\cdot
    k^{\frac{3d}{2}-\frac{1}{2}}$}}} \\
\hline
Nondeterministic binary BP &
\multicolumn{1}{c|}{\raisebox{-1ex}{$\frac{d^{h-2}-1}{2d-2} \cdot
    k^{\frac{3d}{2}}\sqrt{\log k} = \Omega(n^{3/2}/\log n)$}} &
\multicolumn{1}{c|}{\raisebox{-1ex}{$\frac{d^{h-2}-1}{2d-2} \cdot
    k^{\frac{3d}{2}}= \Omega(n^{3/2}/(\log n)^{3/2})$}} \\
\hline
\end{tabular}
\caption{Size bounds,
  expressed in terms of
  $n=\Theta(k^d\log k)$ in the binary cases, obtained by applying the
  Ne\u{c}iporuk method. Rectangles indicate
  optimality in $k$ when $h=3$ (Cor.\ \ref{c:HtThree}).
Improving any entry to
  $\Omega(k^{\mbox{\scriptsize unbounded }f(h)})$ would
  prove $\lspace\subsetneq \p$ (Cor.\ \ref{c:thegoal}).}
  \label{neci}
%\end{center}
\end{figure}
\begin{remark}
  The $\Omega(n^{3/2}/(\log n)^{3/2})$ binary nondeterministic BP
  lower bound for the \bt\ problem and in particular for \bttwothree\
  applies to the number of \emph{states} when these can have arbitrary
  outdegree.  This seems to improve on the best known former bound of
  $\Omega(n^{3/2}/\log n)$, slightly larger but obtained for the weaker
  model in which states have bounded degree, or equivalently, for the
  switching and rectifier network model in which size is defined as
  the number of edges \cite{pu87,ra91}.
\end{remark}

\begin{proof}[Proof of Theorem \ref{necilowerbound}]
%%%%%%%%%%%%%%%%%%%%%%%%%%%%%%%%%%%%%%%%%%%%%%%%%%%%%%%%%%%%%%%%%%%%
% PROOF OF THEOREM 4.3 updated after June 3 2009 discussion in T.O.
% Later updated again June 6th to correct typos (eg R instead of D)
 We have $\dkway(s,v)\leq v^s \cdot (s+|R|)^{sk}$ for the
  number of deterministic BPs and
  $\nkway(s,v) \leq v^s \cdot (|R|+1)^{sk}\cdot(2^s)^{sk}$ for
  nondeterministic BPs
  having $s$ non-final states, each labelled with one of $v$
  variables. To see $\nkway(s,v)$, note that
  edges labelled $i\in[k]$ can connect a state $S$ to zero or one state
  among the final states and can connect $S$ independently to any
  number of states among the non-final states.

  The only decision to make when applying the Ne\u{c}iporuk method is
  the choice of the partition of the input variables. Here every entry
  in Figure \ref{neci} is obtained using the same partition
  (with the proviso that a $k$-ary variable in the partition is
  replaced by $\log k$ binary variables when we treat $2$-way branching
  programs).

  We will only partition the set $V$ of $k$-ary  \ft\ or
  \bt\ variables that pertain to
  internal tree nodes other than the root (we will neglect the root
  and leaf variables). Each
  internal tree node has $d-1$ siblings and each
  sibling involves $k^d$ variables.  By a \emph{litter} we will mean
  any set of $d$ $k$-ary
  variables that pertain to precisely $d$ such siblings. We obtain our
  partition by writing $V$ as a union of
$$k^d \cdot
  \Sigma_{i=0}^{h-3}d^i = k^d \cdot \frac{d^{h-2}-1}{d-1} $$ litters.
  (Specifically, each litter can be
  defined as
$$\{f_i(j_1,j_2,\ldots,j_d),f_{i+1}(j_1,j_2,\ldots,j_d),\ldots,f_{i+d-1}(j_1,j_2,\ldots,j_d)\}$$
for some $1\leq j_1,j_2,\ldots,j_d\leq k$ and some $d$ siblings
$i,i+1,\ldots,i+d-1$.)

Consider such a litter $L$.
We claim that $|R|^{k^d}$ distinct functions $f_L :
[k]^d \rightarrow R$ can be induced by setting the variables outside of $L$,
where $|R|=k$ in the case of \ft\ and $|R|=2$ in the case of \bt.
Indeed, to induce any such function, fix the
``descendants of the litter $L$'' to make each variable in $L$ relevant
to the output; then, set the variables pertaining to the immediate
ancestor node $\nu$ of the siblings forming $L$
to the appropriate $k^d$ values, as if those were the final output
desired; finally, set
all the remaining variables in a way such that
the values in $\nu$  percolate from $\nu$ to the root.

It remains to do the calculations. We illustrate two cases.
Similar calculations yield the other entries in Figure \ref{neci}.

\noindent
\emph{Nondeterministic $k$-way branching programs computing \ft}.
Here $|R|=k$. In a correct program, the number $s$ of states querying
one of the $d$ litter $L$ variables must satisfy
$$
k^{k^d} \leq \nkway(s,d) \leq d^s \cdot (k+1)^{sk}\cdot(2^s)^{sk} \leq
s^s \cdot k^{2sk}\cdot(2^s)^{sk}
$$
since $d\leq s$ (because \ft\
depends on all its variables), and thus
$$
k^d\log k \leq s(\log s + 2k\log k) + s^2k.
$$
Suppose to the contrary that
$s< (k^{\frac{d-1}{2}}\sqrt{\log k})/2$. Then
$$
s(\log s + 2k\log k) + s^2k
< s (\frac{d-1}{2}\log k + \frac{\log\log k}{2} + 2k\log k) + s^2k
< s(sk) + s^2k
< k^d\log k
$$
for large $k$ and all $d\geq 2$, a contradiction.
Hence $s\geq
(k^{\frac{d-1}{2}}\sqrt{\log k})/2$.
Since this holds for every litter, recalling step 4 in the
Ne\u{c}iporuk method as described prior to Theorem~\ref{necilowerbound},
the total number of states in the program is at least
$$
k + k^d \cdot \frac{d^{h-2}-1}{d-1} \cdot
(k^{\frac{d-1}{2}}\sqrt{\log k})/2
\geq
\frac{d^{h-2}-1}{2d-2} \cdot k^{\frac{3d}{2}-\frac{1}{2}}\sqrt{\log k}.
$$

\noindent
\emph{Nondeterministic binary (ie $2$-way)
branching programs deciding \bt}.
Here $|R|=2$. When the program is binary, the $d$ variables in the
litter $L$ become $d \log k$ Boolean variables. The number $s$ of
states querying one of these $d \log k$ variables then verifies
$$
2^{k^d} \leq \ntwoway(s,d\log k) \leq (d\log k)^s\cdot
        (2+1)^{2s} \cdot (2^s)^{2s} < (s\log k)^s \cdot 2^{4s+2s^2}
$$
since $d\leq s$ and thus
$$
k^d \leq s\log s + s\log\log k + 4s + 2s^2 \leq 3s^2 + 5s \log\log k.
$$
It follows that $s\geq k^{\frac{d}{2}}/2$.
Hence the total number of states in a binary
nondeterministic program deciding \bt\ is at least
$$
k^d \cdot \frac{d^{h-2}-1}{d-1} \cdot  \frac{k^{d/2}}{2} \geq
\frac{d^{h-2}-1}{2(d-1)} \cdot k^{\frac{3d}{2}} =
\frac{d^{h-2}-1}{2(d-1)} \cdot \frac{(k^d\log k)^{3/2}}{(\log k)^{3/2}} =
\Omega(n^{3/2}/(\log n)^{3/2})
$$
where $n=\Theta(k^d\log k)$ is the length of the binary
encoding of \bt.
\end{proof}

\medskip

The next two results show limitations on the Ne\u{c}iporuk method
that are not necessarily present in the state sequence method
(see Theorems \ref{t:childLB} and \ref{t:beatittwice}).

Let \func\ have the same input as \ft\ with the exception that the
root function is deleted.
The output is the tuple $(v_2, v_3,\ldots, v_{d+1})$ of values for the
children of the root.  \func\ can be computed by a $k$-way
deterministic BP with $O(k^{(d-1)h-d+2})$ states using the same
black pebbling method which yields the bound (\ref{e:dFUpper})
in Theorem \ref{t:BPUpper}.

\begin{theorem} \label{t:rootfunction}
For any $d,h\geq 2$, the best $k$-way deterministic BP
size lower bound attainable for \func\ by applying the
Ne\u{c}iporuk method is $\Omega(k^{2d-1})$.
\end{theorem}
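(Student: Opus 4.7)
The theorem states that the Ne\u{c}iporuk method for $\func$ gives a lower bound on the order of $k^{2d-1}$, both achievable and unimprovable. I interpret this as two matching claims: the method attains $\Omega(k^{2d-1})$, and it cannot reach any strictly larger $k$-exponent. The proof splits accordingly, with the ceiling direction being the main content.

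For achievability, I would reuse the litter machinery of Theorem \ref{necilowerbound} at an internal level of the tree. When $h \geq 4$, litters at a level strictly below the root's children work directly: for such a litter $L$, fixing the surrounding variables lets the $d$ litter values propagate through an internal function table all the way up to one of the output coordinates $v_j$, making that coordinate range over any function in $[k]^{[k]^d}$ while the remaining coordinates are held to fixed constants. This forces $\log r_L(\func) \gtrsim k^d \log k$ and hence $s_L \gtrsim k^{d-1}/d$, which over $\Theta(k^d)$ litters at one level yields $\Omega(k^{2d-1})$. For the small-height cases $h = 2, 3$, the complexity of $\func$ itself is only $O(k^{2d-1})$ by the black pebbling bound quoted in the excerpt, so the claim is vacuous from above; what remains is just checking that the method does attain this bound when there is anything to prove.

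For the matching ceiling, I would fix an arbitrary partition $\{V_1, \ldots, V_p\}$ and show $\sum_i s_i = O(k^{2d-1})$ as $k \to \infty$ for fixed $d,h$. The Ne\u{c}iporuk counting inequality gives
$$ s_i \cdot \bigl(\log |V_i| + k \log(s_i + |R|)\bigr) \;\geq\; \log r_{V_i}(\func), $$
and I would combine two competing ceilings on $r_{V_i}(\func)$: the output-side bound $\log r_{V_i}(\func) \leq d\, k^{|V_i|} \log k$ coming from $|R| = k^d$, and the input-side bound $\log r_{V_i}(\func) \leq (m - |V_i|) \log k$ coming from the fact that there are only $k^{m-|V_i|}$ assignments to the complement, where $m = \Theta(d^h k^d)$ is the total number of input variables. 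Together these force $s_i \leq \min\!\bigl(k^{|V_i|-1},\; (m - |V_i|)/(kd)\bigr)$ up to $\log k$ factors. A short optimization shows the two ceilings balance at $|V_i| \approx \log_k(m/d)$, where each block contributes $O(m/(dk))$ states and the partition can accommodate at most $O(m/\log_k(m/d))$ such blocks; substituting $m = \Theta(d^h k^d)$ yields $\sum_i s_i = O(d^{2h-4}\, k^{2d-1}) = O(k^{2d-1})$ as $k$ grows.

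The main obstacle is the ceiling argument. It is not enough to use only the output-side bound, which grows unchecked with $|V_i|$; one must simultaneously invoke the input-side bound and verify that no ``clever'' partition mixing very small and very large blocks can beat the symmetric choice $|V_i| \approx \log_k(m/d)$. A secondary technical issue is handling the regime $s_i > |R|$, where $\log(s_i + |R|)$ starts to grow with $s_i$ instead of staying at $\Theta(d \log k)$; the input-side ceiling is exactly what prevents this regime from giving the adversary any useful extra room.
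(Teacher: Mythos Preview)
Your ceiling argument is more elaborate than it needs to be, and it misses the paper's central simplification. The key observation is that the output range $|R| = k^d$ is enormous, and this should be used to \emph{lower bound} the number $N(s,v)$ of branching programs, not to upper bound the number of restrictions. Concretely, any BP with $s$ non-final states and $|R|$ final states has at least $(s + |R|)^{sk} \geq k^{dsk}$ ways to wire its outgoing edges, so whatever upper bound estimate the method uses for $N(s,v)$ must itself be at least $k^{dsk}$. Combining this with the single input-side bound $r_{V_i} \leq k^{m} = k^{O(k^d)}$ (there are only $\Theta(k^d)$ variables to fix outside any block) immediately gives $s_i = O(k^{d-1})$ for \emph{every} block, independent of $|V_i|$. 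Since the number of blocks satisfies $p \leq m = O(k^d)$, the total is $O(k^{2d-1})$ with no case analysis or balancing required.

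Your two ceilings on $r_{V_i}$ --- the ``output-side'' bound $|R|^{k^{|V_i|}}$ and the ``input-side'' bound $k^{m-|V_i|}$ --- are both correct, but the first one is the wrong place to exploit the large $|R|$: it grows with $|V_i|$ and is precisely what forces you into the optimization over block sizes that you flag as the ``main obstacle'' and leave unfinished (``a short optimization shows\ldots''). That entire optimization becomes unnecessary once you move the role of $|R|$ from bounding the restriction count to bounding the program count. The paper's argument is also more robust: it limits \emph{any} valid upper-bound estimate for $N$, not just the specific formula $v^s(s+|R|)^{sk}$ you work with.

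On achievability: the paper treats the theorem purely as a limitation result and proves only the ceiling direction. Your discussion of how litters at a lower level recover $\Omega(k^{2d-1})$ is plausible but is not part of the paper's own proof.
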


\begin{proof}
  The function $\func: [k]^m \rightarrow R$ has
  $m=\Theta(k^d)$. Any partition $\{V_1,\ldots,V_p\}$ of the set of
  $k$-ary input variables thus has $p=O(k^d)$. Claim: for each $i$,
  the best attainable lower bound on the number of states
  querying variables from $V_i$ is $O(k^{d-1})$.

Consider such a set $V_i$, $|V_i|=v\geq 1$.
Here $|R|=k^d$, so the number $\dkway(s,v)$ of distinct
deterministic BPs having $s$
non-final states querying variables from $V_i$ satisfies
\begin{displaymath}
\dkway(s,v) \geq 1^s \cdot (s+|R|)^{sk}
\geq (1+k^d)^{sk}
\geq k^{dsk}.
\end{displaymath}
Hence the estimate used in the Ne\u{c}iporuk method to upper bound
$\dkway(s,v)$ will be at least $k^{dsk}$.
On the other hand,
the number of functions $f_{V_i}:[k]^v \rightarrow R$ obtained
by fixing variables outside of $V_i$ cannot exceed
$k^{O(k^d)}$ since the number of variables outside $V_i$ is
$\Theta(k^d)$.
Hence the best lower bound on the number of states querying variables
from $V_i$ obtained by applying the method will be no
larger than the smallest $s$ verifying $k^{ck^d}\leq k^{dsk}$ for some $c$
depending on $d$ and $k$. This
proves our claim since then this number is at most $s = O(k^{d-1})$.
\end{proof}

Let \funcmod\ have the same input as \ft\ with the exception that the
root function is preset to the sum modulo $k$.
In other words the output is $v_2+ v_3+ \cdots + v_{d+1}\mod k$.

\begin{theorem} \label{t:lasttheorem}
The best $k$-way deterministic BP
size lower bound attainable for \funcmodtwothree\ by applying the
Ne\u{c}iporuk method is $\Omega(k^2)$.
\end{theorem}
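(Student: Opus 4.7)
The plan is twofold: exhibit a partition that attains a Ne\u{c}iporuk lower bound of $\Omega(k^2)$, and show that no partition exceeds $O(k^2)$. The first part is easy. For the singleton partition, each singleton $V_i=\{f_2(a,b)\}$ has $r_{V_i}\le 2k$ (fixing the other variables yields either a constant in $[k]$ or a shift $x\mapsto x+c \bmod k$), which forces $s_i\ge 1$; the $2k^2$ internal-node singletons alone contribute $\Omega(k^2)$. The harder upper bound exploits the structural observation that $\funcmodtwothree$ decomposes as $(v_2+v_3)\bmod k$, where $v_2$ depends only on the left-subtree inputs (the $k^2$ values of $f_2$ together with leaves $v_4,v_5$) and $v_3$ depends only on the right-subtree inputs---two disjoint instances of the tree-evaluation function $F:=FT^2_2(k)$, which admits a trivial deterministic BP of $O(k^2)$ states.

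Given an arbitrary partition $\{V_1,\ldots,V_p\}$, split each part as $V_i=V_i^L\sqcup V_i^R$ by subtree. Since the variables outside $V_i$ split into independent left and right halves, every restriction of $\funcmodtwothree$ on $V_i$ has the form $(x,y)\mapsto g_2(x)+g_3(y)\bmod k$, where $g_2$ is some restriction of $F$ on $V_i^L$ and $g_3$ some restriction of $F$ on $V_i^R$. Hence $r_{V_i}\le r^L_i\cdot r^R_i$, with $r^L_i$ and $r^R_i$ counting the respective $F$-restrictions.

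Next, I translate this product bound into a bound on Ne\u{c}iporuk's $s_i$. Let $s^L_i$ be the minimum $s$ with $N(s,|V_i^L|)=|V_i^L|^s(s+k)^{sk}\ge r^L_i$, and define $s^R_i$ symmetrically. Monotonicity of $N(s,v)$ in both arguments together with $|V_i|\ge \max(|V_i^L|,|V_i^R|)$ yields
\[
N(s^L_i+s^R_i,\,|V_i|) \;\ge\; N(s^L_i,|V_i^L|)\cdot N(s^R_i,|V_i^R|) \;\ge\; r^L_i\cdot r^R_i \;\ge\; r_{V_i},
\]
so $s_i\le s^L_i+s^R_i$. Now $\{V_i^L\}_i$ (resp.\ $\{V_i^R\}_i$), after dropping empties, is a genuine partition of the variable set of the left (resp.\ right) copy of $F$, so each delivers a valid Ne\u{c}iporuk lower bound on BP size for $F$. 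Since $F$ has a BP of $O(k^2)$ states, both sums obey $\sum_i s^L_i\le O(k^2)$ and $\sum_i s^R_i\le O(k^2)$. Combining, $\sum_i s_i\le O(k^2)$, so the Ne\u{c}iporuk bound $|R|+\sum_i s_i$ is $O(k^2)$.

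The main obstacle is the factorization step: I must verify that fixing variables outside $V_i$ indeed decomposes into independent left/right fixings (which follows from disjointness of the two subtree variable sets) and that the equivalence $(g_2,g_3)\sim(g_2+c,g_3-c)$ produced by the mod-$k$ sum is a harmless overcount (losing at most a factor of $k$, which does not affect the $O(k^2)$ bound). A minor edge case is $V_i^L=\emptyset$, where $r^L_i\le k=|R|$ and the convention $N(0,0)=|R|$ gives $s^L_i=0$, preserving the argument; the symmetric case is identical.
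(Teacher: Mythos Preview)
Your argument is correct but takes a genuinely different route from the paper's. The paper argues directly: for any partition block $V_i$ not containing a leaf variable, it bounds the number of induced restrictions by $O(|V_i|^2)=O(k^4)$ (the restriction is either a constant, a constant plus one variable, or the sum of two variables), and then uses $\dkway(s,v)\ge k^{sk}$ to conclude $s_i\le 1$. With at most four blocks containing a leaf (each contributing at most $k$), the total is $O(k^2)$. Your approach instead exploits the product structure $\funcmodtwothree=(F_L+F_R)\bmod k$ to reduce to two independent Ne\u{c}iporuk bounds for $FT_2^2(k)$, each capped at $O(k^2)$ by the existence of an $O(k^2)$-state BP. This is more conceptual and would generalize nicely (e.g.\ to other ``root-function-preset'' variants), whereas the paper's case analysis is more elementary and, notably, works for \emph{any} admissible upper bound $N$ on the BP count, not just the standard formula.

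Two small points. First, your displayed inequality $N(s_i^L+s_i^R,|V_i|)\ge N(s_i^L,|V_i^L|)\cdot N(s_i^R,|V_i^R|)$ does not follow from monotonicity alone; you need supermultiplicativity of $N$. For the standard $N(s,v)=v^s(s+k)^{sk}$ this is immediate: $|V_i|^{a+b}\ge |V_i^L|^a|V_i^R|^b$ and $(a+b+k)^{(a+b)k}\ge (a+k)^{ak}(b+k)^{bk}$. Second, because your argument uses this property of $N$, it establishes the $O(k^2)$ limitation only for the standard Ne\u{c}iporuk count (or any supermultiplicative one); the paper's proof is robust to tighter choices of $N$ since it lower-bounds the actual number of programs. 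This is a mild technical difference, not a real gap.
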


\begin{proof}
  The function $\funcmodtwothree: [k]^m \rightarrow R$ has
  $m=\Theta(k^2)$. Consider a set $V_i$ in any partition
  $\{V_1,\ldots,V_p\}$ of the set of $k$-ary input variables, $|V_i|=v$.
Here $|R|=k$, so the number $\dkway(s,v)$ of distinct
deterministic BPs having $s$
non-sink states querying variables from $V_i$ satisfies
\begin{displaymath}
\dkway(s,v) \geq 1^s \cdot (s+|R|)^{sk}
\geq (1+k)^{sk}
\geq k^{sk}.
\end{displaymath}
If $V_i$ contains a leaf variable, then perhaps the number of
functions induced by setting variables complementary to $V_i$ can
reach the maximum $k^{k^2}$. Ne\u{c}iporuk would conclude that $k$
states querying the variables from such a $V_i$ are necessary.
Note that there are at most $4$ sets $V_i$ containing a leaf
variable (hence a total of $4k$ states required to account for the
variables in these $4$ sets).
Now suppose that $V_i$ does not contain a leaf variable.
Then setting the variables complementary to $V_i$ can either induce
a constant function (there are $k$ of those), or the sum of a constant
plus a variable (there are at most $k\cdot |V_i|$ of those)
or the sum of two of the variables (there are at most $|V_i|^2$ of
those). So the maximum number of induced functions is
$|V_i|^2=O(k^4)$. The number of states querying variables from $V_i$ is
found by Ne\u{c}iporuk to be $s\geq 4/k$. In other words $s=1$.
So for any of the at least $p-4$ sets in the partition not
containing a leaf variable, the method gets one state. Since
$p-4=O(k^2)$, the total number of states accounting for all the $V_i$
is $O(k^2)$.
\end{proof}

\subsection{The state sequence method}\label{s:beating}

Here we give alternative proofs for some of the lower bounds
given in Section \ref{s:NecLB}.  These proofs
are more intricate than the Ne\u{c}iporuk proofs but they do not
suffer a priori from a quadratic limitation.  The method also yields
stronger lower bounds for \functwofour\ and \funcmodtwothree\
(Theorems \ref{t:childLB} and \ref{t:beatittwice}) than
those obtained by applying Ne\u{c}iporuk's method (Theorems
\ref{t:rootfunction} and \ref{t:lasttheorem}).

\begin{theorem}\label{t:newLB}
$\ndtwothreeBstate \ge k^{2.5}$ for sufficiently large $k$.
\end{theorem}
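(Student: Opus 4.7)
The plan is to apply the state sequence method sketched in Subsection~\ref{s:beating}. Let $B$ be a nondeterministic $k$-way branching program of size $s$ that solves $BT_2^3(k)$; I aim to show $s\ge k^{5/2}$ by a fooling-set argument driven by pairs of ``checkpoint'' states along fixed accepting computations. The target exponent $5/2$ matches the upper bound obtained from the fractional pebbling of $T_2^3$ with $2.5$ pebbles (Theorem \ref{t:pebSim}), so the counting budget is tight: I aim to exhibit roughly $k^5$ YES instances that force pairwise (almost) distinct pairs of checkpoint states, so that $s^2\gtrsim k^5$.

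First I would set up the hard family. Fix $f_1(x,y)=1$ iff $x=y$, so the root value is $1$ iff $v_2=v_3$. For each tuple $(a,u_4,u_5,u_6,u_7)\in [k]^5$ define an instance $I_{a,\vec u}$ by setting leaves $v_i=u_i$ for $i\in\{4,5,6,7\}$, the entries $f_2(u_4,u_5)=a$ and $f_3(u_6,u_7)=a$, and all remaining entries of $f_2,f_3$ to a fixed default (say $1$). Every such instance has root value $1$, so for each one I would fix once and for all an accepting computation $\pi_{a,\vec u}$ in $B$. Along $\pi_{a,\vec u}$ I then identify two checkpoint states: $\alpha_{a,\vec u}$, the first state along the path by which both leaves $4,5$ and the entry $f_2(u_4,u_5)$ have all been queried; and $\beta_{a,\vec u}$, the first state after $\alpha_{a,\vec u}$ by which leaves $6,7$ and $f_3(u_6,u_7)$ have also all been queried. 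The map $(a,\vec u)\mapsto(\alpha,\beta)$ has codomain of size at most $s^2$.

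The heart of the proof is a splicing argument bounding the fibers of this map. Suppose two instances $I=I_{a,\vec u}$ and $J=I_{a',\vec u'}$ share the same checkpoint pair $(\alpha,\beta)$. Then the concatenation of the prefix of $\pi_I$ up to $\alpha$, the segment of $\pi_J$ from $\alpha$ to $\beta$, and the suffix of $\pi_I$ from $\beta$ onward is a legal computation in $B$ on a ``hybrid'' input $H$: $H$ borrows leaves $4,5$ and the relevant entry of $f_2$ from $I$, leaves $6,7$ and the relevant entry of $f_3$ from $J$, and $f_1$ from both (which agree). By restricting the family so that the fibres of $(\vec u)$ decouple $\{u_4,u_5\}$ from $\{u_6,u_7\}$, one can pick $I,J$ so that in $H$ we have $v_2=a\ne a'=v_3$; then $H$ is a NO instance, and $B$ nevertheless accepts it, a contradiction. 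Hence each checkpoint pair is hit by only $O(1)$ members of the family, and the family has size $k^5$, forcing $s^2=\Omega(k^5)$ as desired.

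The main obstacle is to make the splicing produce a genuinely consistent hybrid input together with a genuinely valid computation. Consistency demands that the cells queried by the prefix of $\pi_I$, the middle segment of $\pi_J$, and the suffix of $\pi_I$ never receive conflicting answers; this is what dictates the precise choice of $\alpha$ (``everything needed to determine $v_2$'') and $\beta$ (``everything needed to determine $v_3$''), and why the suffix can only query $f_1$-entries depending on $(v_2,v_3)$, which agree between $I$ and $H$ by construction. A secondary difficulty is nondeterminism: the chosen $\pi_I,\pi_J$ may use wildly different paths of $B$, so I would track carefully which cells each segment has \emph{committed} to reading, and if necessary refine the family so that collisions in $(\alpha,\beta)$ occur only between inputs that agree on all cells queried outside the middle segment. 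If this refinement costs a factor polylogarithmic in $k$ it is harmless, but keeping the loss to a constant is the delicate part of the counting.
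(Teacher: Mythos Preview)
Your proposal has two genuine gaps that prevent it from reaching $k^{2.5}$.

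\textbf{Fiber size.} Your splicing, even if it worked, shows only that if $I=I_{a,\vec u}$ and $J=I_{a',\vec u'}$ share the checkpoint pair $(\alpha,\beta)$ then $a=a'$; it says nothing about $\vec u$. When $a=a'$ the hybrid $H$ has $v_2=v_3=a$ and is a YES instance, so no contradiction arises. Thus a single pair $(\alpha,\beta)$ may be hit by all $k^4$ choices of $\vec u$ for that fixed $a$, and your count collapses to $s^2\ge k^5/k^4=k$. To get $s^2\ge k^5$ the pair $(\alpha,\beta)$ would have to pin down all five coordinates, and nothing in your argument recovers $\vec u$ from two states.

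\textbf{Splicing consistency.} Your checkpoint $\alpha$ is the first time $\pi_I$ has queried leaves $4,5$ and $f_2(u_4,u_5)$; nothing stops the prefix of $\pi_I$ from \emph{also} querying leaf $6$, leaf $7$, or some entry of $f_3$. On such a query the prefix is committed to $I$'s answer while $H$ carries $J$'s, so the spliced path is not a computation on $H$ at all. The vague ``refine the family'' clause cannot repair this without discarding an uncontrolled fraction of the $k^5$ inputs.

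The paper's proof avoids both problems with a different setup. It \emph{fixes} the four leaves to $r,s,r,s$ and makes $f_2,f_3$ constant off $(r,s)$, so the only varying cells are $f_2(r,s)=v_2$, $f_3(r,s)=v_3$, and the \emph{entire root table} $f_1$. Each accepting computation is chopped into \emph{learning intervals}, maximal segments that query at most one of the two cells $f_2(r,s),f_3(r,s)$; an interval is then swapped for one from another input agreeing on that single coordinate, so consistency is automatic. The count is not ``$k^5$ inputs versus $s^2$ pairs'' but ``$2^{k^2}$ root tables $f$ versus at most $2^{k(|S_2|+|S_3|)}$ signatures $(\varphi_2[f],\varphi_3[f])$'', where $|S_2|+|S_3|\le 2(|\Gamma_2^{r,s}|+1)(|\Gamma_3^{r,s}|+1)$. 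Pigeonhole forces $(|\Gamma_2^{r,s}|+1)(|\Gamma_3^{r,s}|+1)\ge k/2$, hence $|\Gamma_2^{r,s}|+|\Gamma_3^{r,s}|\gtrsim\sqrt{k}$, and summing over the $k^2$ pairs $(r,s)$ gives $k^{2.5}$. The crucial point you are missing is that the entropy driving the bound lives in the $2^{k^2}$ root tables; by fixing $f_1$ to equality you throw this away, and varying the leaves cannot compensate.
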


\begin{proof}
Consider an input $I$ to \bttwothree.  We number the nodes
in $T^3_2$ as in Figure \ref{sample}, and let $v^I_j$ denote
the value of node $j$ under input $I$.  We say
that a state in a computation on input $I$ {\em learns} $v^I_j$ if that
state queries $f_j^I(v^I_{2j},v^I_{2j+1})$ (recall $2j,2j+1$ are the
children of node $j$).

\noindent
{\bf Definition [Learning Interval]}
\emph{Let $B$ be a $k$-way
nondeterministic BP that solves \bttwothree.
Let
${\cal C} = \gamma_0, \gamma_1, \cdots, \gamma_T$
be a computation of $B$ on input $I$.  We say that a state $\gamma_i$
in the computation is {\em critical} if one or more of the following holds:\begin{enumerate} \item $i = 0$ or $i=T$ \item $\gamma_i$ learns $v^I_2$
and there is an earlier state which learns $v^I_3$ with no intervening
state that learns $v^I_2$.  \item $\gamma_i$ learns $v^I_3$ and no
earlier state learns $v^I_3$ unless an intervening state learns $v^I_2$.
\end{enumerate}
We say that a subsequence $\gamma_i,\gamma_{i+1},\cdots
\gamma_j$ is a {\em learning interval} if $\gamma_i$ and $\gamma_j$ are
consecutive critical states.  The interval is {\em type 3} if $\gamma_i$
learns $v^I_3$, and otherwise the interval is {\em type 2}.
}

Thus type 2 learning intervals begin with $\gamma_0$ or a state
which learns $v^I_2$, and never learn $v^I_3$
until the last state, and type 3 learning intervals begin with a
state which learns $v^I_3$ and never learn $v^I_2$
until the last state.

Now let $B$ be as above, and for $j\in\{2,3\}$ let $\Gamma_j$
be the set of all states of $B$ which query the input function $f_j$.
We will prove the theorem by showing that for large $k$
\begin{equation}\label{e:proof}
  |\Gamma_2| + |\Gamma_3| > k^2\sqrt{k}.
\end{equation}

For $r,s\in[k]$ let $F^{r,s}_{yes}$ be the set of inputs $I$ to $B$
whose four leaves are labelled $r,s,r,s$ respectively,
whose middle node functions $f_2^I$ and $f_3^I$
are identically 1 except
$f^I_2(r,s)=v^I_2$ and $f^I_3(r,s)=v^I_3$, and $f^I_1(v^I_2,v^I_3)=1$
(so $v^I_1 = 1$).  Thus each such $I$ is a `YES input', and
should be accepted by $B$.

Note that each member $I$ of $F_{yes}^{r,s}$ is uniquely
specified by a triple
\begin{equation}\label{e:triple} (v^I_2,v^I_3,f^I_1)
\mbox{ where $f^I_1(v^I_2,v^I_3)=1$}
\end{equation}
and hence
$F_{yes}^{r,s}$ has exactly $k^2(2^{k^2-1})$ members.

For $j\in\{2,3\}$ and $r,s\in [k]$ let $\Gamma_j^{r,s}$ be the subset
of $\Gamma_j$ consisting of those states which query $f_j(r,s)$.
Then $\Gamma_j$ is the disjoint union of $\Gamma_j^{r,s}$ over all pairs
$(r,s)$ in $[k]\times [k]$.  Hence to prove (\ref{e:proof}) it suffices
to show
\begin{equation}\label{e:gammas}
  |\Gamma_2^{r,s}| + |\Gamma_3^{r,s}| > \sqrt{k}
\end{equation}
for large $k$ and all $r,s$ in $[k]$.  We will
show this by showing
\begin{equation}\label{e:product}
   (|\Gamma_2^{r,s}|+1) (|\Gamma_3^{r,s}|+1) \ge k/2
\end{equation}
for all $k\ge 2$.  (Note that given the product, the sum
is minimized when the summands are equal.)

For each input $I$ in $F_{yes}^{r,s}$ we associate a fixed
accepting computation ${\cal C}(I)$ of $B$ on input $I$.

Now fix $r,s\in [k]$.
For $a,b\in [k]$ and $f:[k]\times [k] \ra \{0,1\}$ with $f(a,b)=1$ we
use $(a,b,f)$ to denote the input $I$ in $F_{yes}^{r,s}$
it represents as in (\ref{e:triple}).

To prove (\ref{e:product}), the idea is that if it is false, then as
$I$ varies through all inputs $(a,b,f)$ in $F_{yes}^{r,s}$
there are too few states learning $v^I_2 = a$ and $v^I_3 =b$ to verify
that $f(a,b)=1$.  Specifically, we can find $a,b,f,g$ such that
$f(a,b)=1$ and $g(a,b)=0$,
and by cutting and pasting the accepting computation ${\cal C}(a,b,f)$
with accepting computations of the form ${\cal C}(a,b',g)$ and
${\cal C}(a',b,g)$ we can construct an accepting computation of
the `NO input' $(a,b,g)$.

We may assume that the branching program
$B$ has a unique initial state $\gamma_0$ and a unique accepting state
$\delta_{ACC}$.

For $j\in \{2,3\}$, $a,b\in [k]$ and $f:[k]\times [k]\ra \{0,1\}$ with
$f(a,b)=1$ define $\varphi_j(a,b,f)$ to be the set of all state pairs
$(\gamma,\delta)$ such that there is a type $j$ learning interval in
${\cal C}(a,b,f)$ which begins with $\gamma$ and ends with $\delta$.
Note that if $j=2$ then $\gamma\in(\Gamma_2^{r,s} \cup \{\gamma_0\})$
and $\delta \in (\Gamma_3^{r,s} \cup \{\delta_{ACC}\})$, and if $j=3$
then $\gamma\in\Gamma_3^{r,s}$ and $\delta \in (\Gamma_2^{r,s}
\cup \{\delta_{ACC}\})$.

To complete the definition, define $\varphi_j(a,b,f)=\varnothing$
if $f(a,b)=0$.

For $j\in\{2,3\}$ and $f:[k]\times [k] \ra \{0,1\}$ we define a
function $\varphi_j[f]$ from $[k]$ to sets of state pairs as follows:
\begin{eqnarray*}
  \varphi_2[f](a) & =  &\bigcup_{b\in[k]} \varphi_2(a,b,f) \
           \subseteq S_2 \\
  \varphi_3[f](b) & =  &\bigcup_{a\in[k]} \varphi_3(a,b,f) \
        \subseteq S_3
\end{eqnarray*}
where $S_2 = (\Gamma_2^{r,s}\cup \{\gamma_0\}) \times
             (\Gamma_3^{r,s} \cup \{\delta_{ACC}\})$ and
$S_3 = \Gamma_3^{r,s} \times
             (\Gamma_2^{r,s} \cup \{\delta_{ACC}\})$.

For each $f$ the function $\varphi_j[f]$ can be specified by listing a
$k$-tuple of subsets of $S_j$, and hence there are at most $2^{k|S_j|}$
distinct such functions as $f$ ranges over the $2^{k^2}$ Boolean functions
on $[k]\times[k]$, and hence there are at most $2^{k(|S_2|+|S_3|)}$
pairs of functions $(\varphi_2[f],\varphi_3[f])$.  If we assume that
(\ref{e:product}) is false, we have $|S_2|+|S_3| < k$.  Hence by the
pigeonhole principle there must exist distinct Boolean functions $f,g$
such that $\varphi_2[f] = \varphi_2[g]$ and $\varphi_3[f] = \varphi_3[g]$.

Since $f$ and $g$ are distinct we may assume that there exist $a,b$
such that $f(a,b)=1$ and $g(a,b)=0$.  Since $\varphi_2[f](a) =
\varphi_2[g](a)$, if $(\gamma,\delta)$ are the endpoints of a type 2
learning interval in ${\cal C}(a,b,f)$ there exists $b'$ such that
$(\gamma,\delta)$ are the endpoints of a type 2 learning interval
in ${\cal C}(a,b',g)$ (and hence $g(a,b')=1$).  Similarly, if
$(\gamma,\delta)$ are endpoints of a type 3 learning interval
in ${\cal C}(a,b,f)$ there exists $a'$ such that $(\gamma,\delta)$
are the endpoints of a type 3 learning interval in ${\cal C}(a',b,f)$.

Now we can construct an accepting computation for the `NO input'
$(a,b,g)$ from ${\cal C}(a,b,f)$ by replacing each learning
interval beginning with some $\gamma$ and ending with some $\delta$
by the corresponding learning interval in ${\cal C}(a,b',g)$
or ${\cal C}(a',b,g)$.  (The new accepting computation has the same
sequence of critical states as ${\cal C}(a,b,f)$.)
This works because a type 2 learning interval
never queries $v_3$ and a type 3 learning interval never queries $v_2$.

This completes the proof of (\ref{e:product}) and the theorem.
\end{proof}

\begin{theorem}\label{t:detThree} Every deterministic branching program
that solves \bttwothree\ has at least $k^3/\log k$ states for
sufficiently large $k$.
\end{theorem}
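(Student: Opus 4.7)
The plan is to adapt the state-sequence method of Theorem \ref{t:newLB} to the deterministic setting, refining the counting to capture the extra $1/\log k$ factor that matches the upper bound in \eqref{e:dBUpper}. As before, fix $r,s \in [k]$ and let $\Gamma_j^{r,s}$ denote the set of states of the deterministic BP $B$ that query $f_j(r,s)$ for $j \in \{2,3\}$; let $F_{yes}^{r,s}$ be the family of YES inputs $(a,b,f)$ from the previous proof. The target is the per-pair bound $|\Gamma_2^{r,s}| + |\Gamma_3^{r,s}| = \Omega(k/\log k)$, which sums to $\Omega(k^3/\log k)$ over the $k^2$ choices of $(r,s)$.

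Since $B$ is deterministic, each input $I = (a,b,f) \in F_{yes}^{r,s}$ induces a unique accepting computation ${\cal C}(I)$. First I would define, for each such $I$, a compact signature $\sigma(I)$ that records the ordered sequence of states from $\Gamma_2^{r,s} \cup \Gamma_3^{r,s}$ visited along ${\cal C}(I)$, together with the answer (an $a$ or a $b$) returned at each query. As in Theorem \ref{t:newLB}, between consecutive critical visits the computation learns information only from $f_1$ and from otherwise-constant queries to $f_2,f_3$ at non-$(r,s)$ arguments, so the inter-critical-state fragments are determined by $f_1$ once the signature is fixed. Next, I would argue by a cut-and-paste argument on signatures that two inputs $(a,b,f)$ and $(a',b',g)$ sharing the same signature but differing in a way that makes $g(a,b) = 0$ force the computation of $B$ on the NO input $(a,b,g)$ to be accepting, contradicting correctness.

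The crucial counting step is to bound the number of distinct signatures by $2^{O((|\Gamma_2^{r,s}| + |\Gamma_3^{r,s}|)\log k)}$: each critical-state entry in $\sigma(I)$ contributes $O(\log k)$ bits (state identity plus the returned value), and the signature has length bounded by the number of critical states. A pigeonhole against the $2^{k^2-1}$ functions $f_1$ with $f_1(a,b)=1$, ranging over the $k^2$ pairs $(a,b)$, then forces $|\Gamma_2^{r,s}| + |\Gamma_3^{r,s}| = \Omega(k/\log k)$. The main obstacle I anticipate is the cut-and-paste step: in the deterministic setting we cannot freely graft computation fragments from different inputs as in Theorem \ref{t:newLB}, so matching signatures must instead be used to reason \emph{indirectly} about what the unique computation on a synthesized NO input must look like. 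Controlling how the inter-critical-state fragments depend on $f_1$, and thereby justifying that an $f_1$-swap preserves both the signature and the accepting behaviour, appears to be the most delicate part of the argument and the place where the $\log k$-tight counting must be executed precisely.
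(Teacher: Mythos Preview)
Your signature-based pigeonhole does not yield a usable collision. Because the signature records the answer ($a$ or $b$) at each visit to $\Gamma_2^{r,s}\cup\Gamma_3^{r,s}$, two inputs $(a,b,f)$ and $(a',b',g)$ with the same signature must have $a=a'$ and $b=b'$ (as soon as both kinds of states are visited). Hence any collision the pigeonhole produces is of the form $(a,b,f)\sim(a,b,g)$ with $f(a,b)=g(a,b)=1$, and both are YES inputs; you never obtain the configuration $g(a,b)=0$ that your cut-and-paste step requires. Dropping the answers from the signature does not help either: then a collision no longer pins down $(a,b)$, and the ``indirect reasoning about the synthesized NO input $(a,b,g)$'' you allude to has nothing to anchor it. The obstacle you flag in your last paragraph is therefore fatal rather than merely delicate.

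The paper's proof avoids per-input signatures entirely. For each root function $f$ (now ranging over all $2^{k^2}$ Boolean functions, not just those with $f(a,b)=1$) it records a pair of \emph{transition functions}
\[
\psi_2[f]:[k]\times(\Gamma_2^{r,s}\cup\{\gamma_0\})\to\Gamma_3^{r,s}\cup\{\delta_{ACC},\delta_{REJ}\},
\qquad
\psi_3[f]:[k]\times\Gamma_3^{r,s}\to\Gamma_2^{r,s}\cup\{\delta_{ACC},\delta_{REJ}\},
\]
where $\psi_2[f](a,\gamma)$ is the next critical state after $\gamma$ on input $(a,b,f)$ (independent of $b$, since $f_3(r,s)$ is not queried in a type 2 interval), and symmetrically for $\psi_3$. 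Determinism makes these genuine functions, and the pair $(\psi_2[f],\psi_3[f])$ determines the entire critical-state sequence, hence the accept/reject outcome, for every $(a,b)$ simultaneously. So $f\mapsto(\psi_2[f],\psi_3[f])$ is injective and no cut-and-paste is needed. Counting pairs gives $2^{k^2}\le(s+2)^{k(s+1)}$ with $s=|\Gamma_2^{r,s}|+|\Gamma_3^{r,s}|$, whence $(s+1)\log(s+2)\ge k$ and $s\ge k/\log k$ for large $k$. The point you are missing is that the right object is indexed by $f$ and has domain of size $\Theta(ks)$, not a per-input trace of length $\le s$; this is exactly what produces the extra factor of $k$ in the exponent and yields the $k/\log k$ bound.
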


%\medskip
\begin{proof} We modify the proof of Theorem \ref{t:newLB}.  Let $B$
be a deterministic BP which solves \bttwothree, and for $j\in\{2,3\}$ let
$\Gamma_j$ be the set of states in $B$ which query $f_j$ (as before).
It suffices to show that for sufficiently large $k$
\begin{equation}\label{e:gammaSum}
  |\Gamma_2|+|\Gamma_3|\ge k^3/\log k.
\end{equation}

For $r,s \in [k]$ we define the set $F^{r,s}$ to be the same as
$F_{yes}^{r,s}$ except that we remove the restriction on $f^I_1$.
Hence there are exactly $k^2 2^{k^2}$ inputs in $F^{r,s}$.

As before, for $j\in\{2,3\}$, $\Gamma_j$ is the disjoint union
of $\Gamma^{r,s}$ for $r,s\in [k]$.
Thus to prove (\ref{e:gammaSum}) it suffices to show that
for sufficiently large $k$ and all $r,s$ in $[k]$
\begin{equation}\label{e:newGsum}
   |\Gamma_2^{r,s}| + |\Gamma_3^{r,s}| \ge k/\log k.
\end{equation}
We may assume there are unique start, accepting, and rejecting states
$\gamma_0$, $\delta_{ACC}$, $\delta_{REJ}$.
Fix $r,s\in[k]$.

For each root function $f:[k]\times[k]\ra \{0,1\}$ we define the functions
\begin{eqnarray*}
   \psi_2[f] : [k]\times (\Gamma_2^{r,s} \cup \{\gamma_0\}) & \ra &
           (\Gamma_3^{r,s} \cup \{\delta_{ACC},\delta_{REJ}\})\\
   \psi_3[f] : [k]\times \Gamma_3^{r,s}
      & \ra & (\Gamma_2^{r,s} \cup \{\delta_{ACC},\delta_{REJ}\})
\end{eqnarray*}
by $\psi_2[f](a,\gamma) = \delta$ if $\delta$ is the
next critical state after $\gamma$ in a computation with input $(a,b,f)$
(this is independent of $b$), or $\delta=\delta_{REJ}$ if there is no
such critical state.  Similarly $\psi_3[f](b,\delta)=\gamma$ if $\gamma$
is the next critical state after $\delta$ in a computation with input
$(a,b,f)$ (this is independent of $a$), or $\delta=\delta_{REJ}$ if
there is no such critical state.

\bigskip

\noindent CLAIM:  The pair of functions $(\psi_2[f],\psi_3[f])$ is
distinct for distinct $f$.

\bigskip

For suppose otherwise.  Then there are $f,g$ such that
$\psi_2[f]=\psi_2[g]$ and $\psi_3[f]=\psi_3[g]$ but $f(a,b)\ne g(a,b)$
for some $a,b$.  But then the sequences of critical states in the two
computations $C(a,b,f)$ and $C(a,b,g)$ must be the same, and hence the
computations either accept both $(a,b,f)$ and $(a,b,g)$ or reject both.
So the computations cannot both be correct.

Finally we prove (\ref{e:newGsum}) from the CLAIM.  Let $s_2 =
|\Gamma_2^{r,s}|$ and let $s_3=|\Gamma_3^{r,s}|$, and let $s=s_2+s_3$.
Then the number of distinct pairs $(\psi_2,\psi_3)$ is at most $$
   (s_3+2)^{k(s_2+1)}(s_2+2)^{ ks_3} \le (s+2)^{k(s+1)}
$$ and since there are $2^{k^2}$ functions $f$ we have $$
   2^{k^2} \le (s+2)^{k(s+1)}
$$ so taking logs, $k^2 \le k(s+1)\log (s+2)$ so $k/\log(s+2) \le
s+1$, and (\ref{e:newGsum}) follows.
\end{proof}

Recall from Theorem~\ref{t:rootfunction} that applying the
Ne\u{c}iporuk method to \functwofour\ yields an $\Omega(k^3)$ size
lower bound and from Theorem~\ref{t:lasttheorem} that applying it to
\funcmodtwothree\ yields $\Omega(k^2)$.  The next two results
improve on these bounds using the state sequence method.
The new lower bounds match the upper bounds given by the pebbling
method used to prove (\ref{e:dFUpper}) in Theorem \ref{t:BPUpper}.  

\begin{theorem}\label{t:childLB}
Any deterministic $k$-way BP for \functwofour\
has at least $k^4/2$ states.
\end{theorem}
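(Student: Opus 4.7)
The plan is to adapt the state sequence method from Theorems \ref{t:newLB} and \ref{t:detThree}, exploiting the recursive structure of \functwofour. In $T_2^4$, computing the pair $(v_2, v_3)$ reduces to solving two essentially independent instances of $FT_2^3$, one per principal subtree of the root, while remembering one of the two outputs throughout the computation of the other. This suggests a lower bound of the form (cost of remembering $v_2$) times (state cost of $FT_2^3$), namely $k \cdot k^3 = k^4$.

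Concretely, I first fix $(r_4, r_5, r_6, r_7)\in [k]^4$ and choose leaves together with $f_4, f_5, f_6, f_7$ so that nodes $4, 5, 6, 7$ always evaluate to $r_4, r_5, r_6, r_7$. I parameterize inputs by $(a, \rho)$, where $f_2(r_4, r_5) = a$ controls $v_2$ and $\rho$ encodes the right-subtree data (in particular all of $f_3$); on each $I(a, \rho)$ the BP must output $(a, v_3(\rho))$. For each such input, let $L(I) \in \Gamma_2^{r_4, r_5}$ be the last state in the computation that queries $f_2(r_4, r_5)$; such a state exists because the BP must determine $v_2$. For each $a \in [k]$ let $Q_a$ be the set of states visited in the post-$L(I)$ portion of the computation on $I(a, \rho)$ as $\rho$ ranges over all right-subtree inputs. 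Because $f_2(r_4, r_5)$ is never re-queried after $L(I)$, the sub-branching program induced on $Q_a$ acts as an $FT_2^3$-solver on the right subtree (with $a$ carried along to the final output). The $\Omega(k^3)$ lower bound from Theorem \ref{necilowerbound} then gives $|Q_a| \geq k^3/2$, where the constant is tightened beyond the naive $k^3/4$ Ne\u{c}iporuk estimate by a direct state-sequence counting on the family of $k^2$ distinct $(v_6, v_7)$-behaviors, analogous to the proof of Theorem \ref{t:detThree}.

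The crucial step is to show $Q_a \cap Q_{a'} = \emptyset$ for $a \neq a'$. If $s \in Q_a \cap Q_{a'}$ were witnessed by a common right-subtree input $\rho$ in both post-$L$ portions, the deterministic computations from $s$ onward would be identical (they agree on every remaining query, since $f_2(r_4, r_5)$ is no longer read), yet must yield the different outputs $(a, v_3(\rho))$ and $(a', v_3(\rho))$, a contradiction. The remaining case, in which $s$ is witnessed only by different $\rho$'s for the two values of $a$, is the main obstacle: resolving it requires a cut-and-paste mixing argument in the spirit of the $\varphi$-function analysis in the proof of Theorem \ref{t:newLB}, producing a single $\rho$ that witnesses the contradiction. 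Once disjointness is in hand, summing over the $k$ values of $a$ yields $|Q(B)| \geq k \cdot k^3/2 = k^4/2$, as required; a symmetric argument based on the last query to $f_3(r_6, r_7)$ can be used as a fallback to handle the case where the BP determines $v_3$ before $v_2$ on most inputs.
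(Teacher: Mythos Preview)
Your proposal has genuine gaps, and the paper's argument proceeds along quite different lines.

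\textbf{The setup collapses the right subtree.} As written, you fix $(r_4,r_5,r_6,r_7)$ and also the leaves and $f_4,f_5,f_6,f_7$, so $v_6,v_7$ are constants $r_6,r_7$. Then the only right-subtree datum that affects the output is the single value $f_3(r_6,r_7)$; the ``$FT_2^3$ instance on the right subtree'' has degenerated to reading one variable. There is no reason the post-$L$ portion should require $k^3/2$ states. Even under a charitable reinterpretation where the right subtree is left free, the claim $|Q_a|\ge k^3/2$ can fail outright: a BP may compute $v_3$ first (spending its $\Theta(k^3)$ states there), then make its final query to $f_2(r_4,r_5)$, and go straight to an output. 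On such a program $|Q_a|=O(k)$. Your ``symmetric fallback'' based on the last $f_3$-query does not repair this, since a BP can interleave queries so that neither post-last-$f_2$ nor post-last-$f_3$ segment solves an $FT_2^3$ instance.

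\textbf{Disjointness is the whole difficulty, and you leave it open.} You correctly identify the hard case (same state $s$ reached via different $\rho$'s) and state that it ``requires a cut-and-paste mixing argument'', but supply none. In fact disjointness is false in general: a state querying a right-subtree variable can be reached after learning $v_2=a$ on one input and after learning $v_2=a'$ on another, and the subsequent computations may legitimately diverge on later right-subtree queries. So the proposed product structure $k\cdot k^3$ cannot be obtained this way.

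\textbf{What the paper does instead.} The paper fixes only the leaf values (all sibling pairs equal to $(r,s)$), lets $v_4,v_5,v_6,v_7$ range freely, and sets $f_2=f_3=+_k$. For each input it records the \emph{last} state $\gamma^I$ querying any of $f_4,f_5,f_6,f_7$ at $(r,s)$, together with the value $v_i^I$ returned. A second triple $U(I)$ records $(v_4,v_5,v_3)$ or $(v_6,v_7,v_2)$ according to which subtree $\gamma^I$ sits in. A short adversary argument on inputs with $(v_4,v_5,v_6,v_7)=(a,b,a,c)$ forces $U(I)$ to hit at least $k^3/2$ triples; if $|\Gamma^{r,s}|<k^2/2$ there are fewer than $k^3/2$ pairs $(\gamma^I,v_i^I)$, yielding two inputs with the same last height-$2$ query and answer but different outputs, a contradiction. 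No disjointness of state sets across parameter values is needed.
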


\begin{proof}
Let $E_{4true}$ be the set of all inputs $I$ to \functwofour\
such that $f_2^I=f^I_3 =
+_k$ (addition mod $k$), and for $i\in\{4,5,6,7\}$ $f^I_i$ is identically
0 except for $f^I_i(v^I_{2i},v^I_{2i+1})$.

Let $B$ be a branching program as in the theorem.  For each $I\in
E_{4true}$ let ${\cal C}(I)$ be the computation of $B$
on input $I$.

For $r,s\in[k]$ let $E^{r,s}_{4true}$ be
the set of inputs $I$ in $E_{4true}$
such that for $i\in\{4,5,6,7\}$, $v^I_{2i}=r$ and $v^I_{2i+1}=s$.
Then for each pair $r,s$ each input $I$ in $E^{r,s}_{4true}$ is
completely specified by the quadruple $v^I_4,v^I_5,v^I_6,v^I_7$, so
$|E^{r,s}_{4true}| = k^4$.

For $r,s\in[k]$ and $i\in\{4,5,6,7\}$ let $\Gamma^{r,s}_i$
be the set of states of $B$ that query $f_i(r,s)$, and let
\begin{equation}\label{e:Gammars}
     \Gamma^{r,s} = \Gamma^{r,s}_4\cup\Gamma^{r,s}_5\cup\Gamma^{r,s}_6
     \cup
          \Gamma^{r,s}_7
\end{equation}
%Notice that if $I\in E^{r,s}_{4true}$ and $\gamma$
%is a state in ${\cal C}(I)$ that learns any of $v_4,v_5,v_6,v_7$ then
%$\gamma\in\Gamma^{r,s}$.

The theorem follows from the following Claim.

\medskip
\noindent
CLAIM 1:  $|\Gamma^{r,s}|  \ge k^2/2$
for all $r,s \in [k]$.

\medskip

To prove CLAIM 1, suppose to the contrary for some $r,s$
\begin{equation}\label{e:gammak}
    |\Gamma^{r,s}| < k^2/2
\end{equation}
%We may assume that $B$ has a unique initial state $\gamma_0$.

We associate a pair
$$
  T(I) = (\gamma^I,v^I_i)
$$
with $I$ as follows: $\gamma^I$ is the last state in the
computation ${\cal C}(I)$ that is in $\Gamma^{r,s}$ (such a state
clearly exists), and
$i\in\{4,5,6,7\}$ is the node queried by $\gamma^I$.
(Here $v^I_i$ is the value of node $i$).

We also associate a second triple $U(I)$ with each input $I$ in
$E^{r,s}_{4true}$ as follows: $$
  U(I) = \left\{ \begin{array}{ll}
               (v^I_4,v^I_5,v^I_3) &
      \mbox{if $\gamma^I$ queries node 4 or 5 } \\
               (v^I_6,v^I_7,v^I_2) & \mbox{otherwise.}
              \end{array}
\right.  $$
 \medskip
\noindent
CLAIM 2: As $I$ ranges over $E^{r,s}_{4true}$, $U(I)$
ranges over at least $k^3/2$ triples in $[k]^3$.

\medskip

To prove CLAIM 2, consider the the subset $E'$ of inputs in
$E^{r,s}_{4true}$ whose values for nodes 4,5,6,7 have the form $a,b,a,c$
for arbitrary $a,b,c \in [k]$.  For each such $I$ in $E'$ an adversary
trying to minimize the number of triples $U(I)$ must choose one of the
two triples $(a,b,a+_k c)$ or $(a,c,a+_k b)$.  There are a total of $k^3$
distinct triples of each of the two forms, and the adversary must choose
at least half the triples from one of the two forms, so there must be at
least $k^3/2$ distinct triples of the form $U(I)$.   This proves CLAIM 2.

On the other hand by (\ref{e:gammak}) there are fewer than $k^3/2$
possible values for $T(I)$.  Hence there exist inputs $I,J \in
E^{r,s}_{4true}$ such that $U(I) \ne U(J)$ but $T(I)=T(J)$.  Since
$U(I) \ne U(J)$ but $v^I_i = v^J_i$ (where $i$ is the node queried by
$\gamma^I=\gamma^J$) it follows that either $v^I_2 \ne v^J_2$ or $v^I_3
\ne v^J_3$, so $I$ and $J$ give different values to the function
\functwofour.  But since $T(I)=T(J)$ if follows that the two
computations ${\cal C}(I)$ and ${\cal C}(J)$ are in the same
state $\gamma^I=\gamma^J$ the last time any of the nodes
$\{4,5,6,7\}$ is queried, and the answers $v^I_i=v^J_i$
to the queries are the same, so both computations give identical
outputs.  Hence one of them is wrong.
\end{proof}

\begin{theorem}\label{t:beatittwice}
Any deterministic $k$-way BP for \funcmodtwothree\ requires
at least $k^3$ states.
\end{theorem}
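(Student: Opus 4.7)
The plan is to imitate the state-sequence argument used in Theorem \ref{t:childLB}, but adapted to the height 3 tree with root function fixed to $+_k$. The tree has root 1 with children 2,3 and leaves 4,5,6,7, where $v_2 = f_2(v_4,v_5)$, $v_3 = f_3(v_6,v_7)$, and the output is $v_2 + v_3 \pmod k$. First I fix $r,s\in[k]$ and define a family $E^{r,s}$ of ``isolated'' inputs: the leaves are $v_4=v_6=r$ and $v_5=v_7=s$, and $f_2,f_3$ are identically $0$ except that $f_2(r,s)$ and $f_3(r,s)$ are set freely to any values in $[k]$. Each $I\in E^{r,s}$ is thus uniquely determined by the pair $(v_2^I,v_3^I)\in[k]^2$, so $|E^{r,s}|=k^2$. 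Let $\Gamma^{r,s}$ be the set of states of $B$ that query either $f_2(r,s)$ or $f_3(r,s)$; these sets are pairwise disjoint over distinct $(r,s)$, so the theorem follows from the claim $|\Gamma^{r,s}|\ge k$ for every $r,s$.

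Next I would observe that on any input $I\in E^{r,s}$, every query except possibly those to $f_2(r,s)$ and $f_3(r,s)$ returns a value independent of $(v_2^I,v_3^I)$ (the leaves are fixed, and off-diagonal function queries return $0$). Since $B$ is correct and the output depends nontrivially on both $v_2$ and $v_3$, each of $f_2(r,s)$ and $f_3(r,s)$ must be queried somewhere in ${\cal C}(I)$, so I may define $\gamma^I$ as the \emph{last} state of ${\cal C}(I)$ lying in $\Gamma^{r,s}$, and let $j^I\in\{2,3\}$ be the index such that $\gamma^I$ queries $f_{j^I}(r,s)$. Note that $j^I$ is determined by $\gamma^I$ itself. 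I then define the map
\[
T(I) = (\gamma^I,\, v^I_{j^I}) \in \Gamma^{r,s} \times [k],
\]
whose image has size at most $k\cdot|\Gamma^{r,s}|$.

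The key step is injectivity of $T$ on $E^{r,s}$. Suppose $T(I)=T(J)$, so $\gamma^I=\gamma^J$, $j^I=j^J=j$, and $v^I_j = v^J_j$. After the common state $\gamma^I$ fires the query $f_j(r,s)$ and receives the same answer in both computations, the remaining portions of ${\cal C}(I)$ and ${\cal C}(J)$ make no further queries to $f_2(r,s)$ or $f_3(r,s)$ (by maximality of $\gamma^I$), and all remaining queries return values that are the same deterministic constants on both inputs. Hence the two computations take identical paths to identical final states, so $B$ produces the same output on $I$ and on $J$. Correctness gives $v^I_2 + v^I_3 \equiv v^J_2 + v^J_3 \pmod k$, and combined with $v^I_j = v^J_j$ this forces $v^I_{3-j+2} = v^J_{3-j+2}$, i.e., $I=J$. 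Therefore $k^2 = |E^{r,s}| \le |T(E^{r,s})| \le k\cdot|\Gamma^{r,s}|$, so $|\Gamma^{r,s}|\ge k$ and summing over the $k^2$ disjoint pairs $(r,s)$ gives the total state count $\ge k^3$.

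The only subtle point I expect to check carefully is the injectivity argument: one must confirm that earlier queries to $f_j(r,s)$ in ${\cal C}(I)$ and ${\cal C}(J)$ are consistent with the common prefix (they are, since the answers at those queries equal the common $v^I_j=v^J_j$), and that the sole potential information carrier distinguishing $I$ from $J$ past $\gamma^I$ truly vanishes. No information about $v^I_{j'}$ ($j'$ the other child) can flow past $\gamma^I$ except through $\gamma^I$ itself, so the state $\gamma^I$ must already encode enough of $v^I_{j'}$ to correctly produce the output after learning $v^I_j$, which is exactly the content of the injectivity claim.
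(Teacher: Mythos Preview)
Your proposal is correct and follows essentially the same approach as the paper's proof: the same input family $E^{r,s}$ (with off-diagonal function values set to a constant), the same key map $T(I)=(\gamma^I,v^I_{j^I})$ using the last $\Gamma^{r,s}$-state, and the same injectivity argument that the common suffix after $\gamma^I$ forces equal outputs and hence equal sibling values via the $+_k$ relation. The paper's write-up is terser (it simply says the pair determines the output, which with $v^I_j$ determines $v^I_{j'}$), while you spell out the suffix-identity reasoning more explicitly; your final ``subtle point'' about prefix consistency is unnecessary, since the argument only uses the suffix from $\gamma^I$ onward.
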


\begin{proof}
We adapt the previous proof.  Now $E^{r,s}$ is the set of
inputs $I$ to \funcmodtwothree\ such that for $i\in\{2,3\}$,
$f^I_i$ is identically one except possibly for $f^I_i(r,s)$,
and $v^I_4 = v^I_6 = r$ and $v^I_5 = v^I_7 = s$.
Note that an input to $E^{r,s}$ can be specified by the
pair $(v^I_2,v^I_3)$, so $E^{r,s}$ has exactly $k^2$ elements.
Define
$$
   \Gamma^{r,s} = \Gamma^{r,s}_2 \cup \Gamma^{r,s}_3
$$
Now we claim that an input $I$ in $E^{r,s}$ can be specified
by the pair $(\gamma^I,v^I_i)$, where $\gamma^I$ is the last
state in the computation ${\cal C}(I)$ that is in $\Gamma^{r,s}$,
and $i\in\{2,3\}$ is the node queried by $\gamma^I$.

The Claim holds because $(\gamma^I,v^I_i)$ determines the output
of the computation, which in turn (together with $v^I_i$)
determines $v^I_j$, where $j$ is the sibling of $i$.

From the Claim it follows that $|\Gamma^{r,s}| \ge k$ for
all $r,s\in [k]$, and hence there must be at least $k^3$ states
in total.
\end{proof}

\subsection{Thrifty lower bounds}
\label{s:thriftyLB}

See Definition \ref{d:thrifty} for thrifty programs.

Theorem \ref{t:detThriftLB} below shows that the upper bound
given in Theorem \ref{t:BPUpper} (\ref{e:dFUpper}) is optimal
for deterministic thrifty programs solving the function
problem \ft\ for $d=2$ and all $h\ge 2$.
Theorem \ref{t:thrifFourtwo} shows that the upper bound given
in Theorem \ref{t:BPUpper} (\ref{e:nBUpper}) is optimal for
nondeterministic thrifty programs solving the Boolean problem
\bt\ for $d=2$ and $h=4$ (it is optimal for $h \le 3$ by 
Theorem \ref{c:HtThree}).

\begin{theorem}\label{t:detThriftLB}
For any $h,k$, every deterministic thrifty branching program solving
$BT_2^{h}(k)$ has at least $k^h$ states.
\end{theorem}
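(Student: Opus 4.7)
The plan is to prove the lower bound by a direct argument pinned to the pebbling-bottleneck structure of $T_2^h$. For the base case $h=2$ the claim is immediate: any thrifty BP must, on each input, query $f_1(v_2,v_3)$ at the correct pair $(v_2,v_3)$, and the $k^2$ possible pairs produce $k^2$ distinct query labels, hence at least $k^2$ states.

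For general $h$ the plan is to construct a family of $k^h$ inputs $\{I_{\vec a}\}_{\vec a\in[k]^h}$ whose parameters are the values at the $h$ nodes of the standard black-pebbling bottleneck of $T_2^h$, namely the left siblings $2,6,14,\ldots,2^{h-1}-2$ of the internal rightmost-path nodes together with the two deepest rightmost leaves $2^h-2$ and $2^h-1$. For each internal bottleneck node $n_j$ I fill its subtree trivially so that it evaluates to $a_j$: all leaves below $n_j$ are set to $1$, all interior functions of that subtree return $1$ on input $(1,1)$, and $f_{n_j}(1,1)=a_j$. The two bottleneck leaves receive values $a_{h-1},a_h$ directly, and the path-node functions are fixed independently of $\vec a$. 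For each input $I_{\vec a}$ I then define the critical state $\gamma(\vec a)\in C(I_{\vec a})$ to be the state reached immediately after the last of the \emph{revealing} thrifty queries: the two leaf queries at $2^h-2,2^h-1$ (returning $a_{h-1},a_h$) and the queries $f_{n_j}(1,1)$ for each internal bottleneck $n_j$ (returning $a_j$). Because $B$ is thrifty and must propagate values along the rightmost path to determine $v_1$, each such revealing query must occur in $C(I_{\vec a})$, so $\gamma(\vec a)$ is well-defined. The lower bound reduces to showing that the map $\vec a\mapsto\gamma(\vec a)$ is injective, which immediately gives $k^h$ distinct states.

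The main obstacle is exactly this injectivity. Because BP states may have multiple incoming edges, different computation histories can converge to a shared state, so one cannot conclude injectivity merely from divergent query-answer sequences. To handle this, the plan is to propagate a contradiction \emph{downstream} from a putative collision: if $\gamma(\vec a)=\gamma(\vec a')$ with $\vec a\ne\vec a'$, then determinism forces the two computations to proceed identically from this shared state until some query gives different answers on $I_{\vec a}$ and $I_{\vec a'}$; by thriftiness, the $f$-query labels made at the path-node ancestors of each $n_j$, and ultimately the root query $f_1(v_2,v_3)$, are forced to equal the true child values in each input. The construction of the trivial subtrees ensures that a discrepancy $a_j\ne a_j'$ propagates up the rightmost chain of path-node $f$-queries to produce two incompatible required labels at some ancestor — which a deterministic shared continuation cannot provide. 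The technical crux, and what I expect to be hardest to make precise, is tracking this propagation when the BP interleaves the revealing queries with the path-propagation queries in arbitrary orders: one must verify by case analysis on query order that any merger of computations for distinct $\vec a$ forces a thriftiness violation at some subsequent rightmost-path query, which yields the contradiction and hence the injectivity of $\gamma$.
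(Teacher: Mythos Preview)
Your proposal has a genuine gap: the map $\vec a\mapsto\gamma(\vec a)$ need not be injective, and the downstream-propagation argument you sketch does not rescue it.

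Here is a concrete obstruction already at $h=3$. The path nodes are $1,3$; the bottleneck nodes are $2,6,7$ (with $6,7$ leaves). Consider a thrifty BP that first queries leaves $6,7$, then $f_3(a_2,a_3)$, then descends into the subtree of node $2$ and queries $f_2(1,1)$, then queries the root, then outputs. The last revealing query is $f_2(1,1)$, so $\gamma(\vec a)$ is the state reached just after it. Nothing prevents the BP from retaining, at that moment, only the pair $(a_1,v_3)$ where $v_3=f_3(a_2,a_3)$; it has no further need for $a_2,a_3$ individually. Since $f_3:[k]^2\to[k]$ cannot be injective, many distinct $(a_2,a_3)$ collapse to the same $v_3$, and for all of them $\gamma(\vec a)$ is the same state. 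Your downstream argument then only yields: the subsequent root query forces $a_1=a_1'$ and $v_3=v_3'$, which is not enough to conclude $\vec a=\vec a'$. The point is that the path-node query $f_3(a_2,a_3)$ that would have pinned down $(a_2,a_3)$ occurred \emph{before} $\gamma$, not after; you cannot force every path-node query to lie after the last revealing query, because the BP is free to interleave them in either order. The case analysis you anticipate being ``hardest to make precise'' is in fact impossible to close with this choice of $\gamma$.

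The paper's proof avoids this by abandoning the special input family and the single-state injectivity idea entirely. Instead it works over \emph{all} inputs: for each input $I$ it defines a full system of critical states (one per node, defined top-down from the last root query), extracts a \emph{supercritical} state $r^I$ (the last critical state at a height-$2$ node), and shows that the associated pebbling bottleneck has at least $h$ nodes. The lower bound then comes from an information-theoretic surjection: starting at $r$ and tracing forward, the thrifty property lets one read off the values of the $h$ bottleneck nodes for free from the query \emph{labels} encountered, so any $I$ with $r^I=r$ can be specified with $|\Vars|-h$ advice elements. This gives $|E_r|\le k^{|\Vars|-h}$ for each $r$, hence at least $k^h$ supercritical states. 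The key difference is that the paper never needs two distinct inputs to share a full downstream computation; it only needs that from $r$ one can reconstruct $I$ with $h$ fewer $[k]$-values than naively required.
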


Fix a deterministic thrifty BP $B$ that solves $BT_2^{h}(k)$. Let $E$ be the inputs to $B$. Let $\Vars$ be the set of $k$-valued input variables (so $|E| = k^{|\Vars|}$). Let $Q$ be the states of $B$. If $i$ is an internal node then the $i$ variables are $f_i(a,b)$ for $a,b \in [k]$, and if $i$ is a leaf node then there is just one $i$ variable $l_i$. We sometimes say ``$f_i$ variable'' just as an in-line reminder that $i$ is an internal node. Let $\var(q)$ be the input variable that $q$ queries. Let $\node$ be the function that maps each variable $X$ to the node $i$ such that $X$ is an $i$ variable, and each state $q$ to $\node(\var(q))$. When it is clear from the context that $q$ is on the computation path of $I$, we just say ``$q$ queries $i$'' instead of ``$q$ queries the thrifty $i$ variable of $I$''.

Fix an input $I$, and let $P$ be its computation path. We will choose $n$ states on $P$ as {\bf critical states} for $I$, one for each node. 
Note that $I$ must visit a state that queries the root (i.e. queries the thrifty root variable of $I$), since otherwise the branching program would make a mistake on an input $J$ that is identical to $I$ except 
$f_1^J(v_2^I,v_3^I) := k - f_1^I(v_2^I,v_3^I)$;
 hence $J \in BT^h_2(k)$ iff $I \not \in BT^h_2(k)$. So, we can choose the root critical state for $I$ to be the last state on $P$ that queries the root. The remainder of the definition relies on the following small lemma:
\begin{lemma}\label{l:basic_thrifty}
 For any $J$ and internal node $i$, if $J$ visits a state $q$ that queries $i$, then for each child $j$ of $i$, there is an earlier state on the computation path of $J$ that queries $j$.
\end{lemma}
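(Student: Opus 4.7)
The approach is by contradiction. Suppose $J$ visits a state $q$ querying the thrifty $i$ variable of $J$, yet for some child $j$ of $i$ no state strictly before $q$ on $J$'s computation path queries the thrifty $j$ variable of $J$. The plan is to construct a perturbed input $J'$ that agrees with $J$ on every variable queried prior to $q$ along $J$'s path but satisfies $v_j^{J'}\neq v_j^J$. By determinism, $J'$ then reaches the same state $q$, where the query $f_i(v_{2i}^J,v_{2i+1}^J)$ issued by $q$ is no longer the thrifty $i$ variable of $J'$, contradicting thriftiness of $B$.

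To construct $J'$, split into cases. If $j$ is a leaf, change the leaf label $l_j$ to any value distinct from $v_j^J$; this alters only the single variable $l_j$, which is precisely the thrifty $j$ variable of $J$, and clearly changes $v_j$. If $j$ is internal, keep every leaf label and every function table unchanged except flip the single entry $f_j(v_{2j}^J,v_{2j+1}^J)$ to any value different from $f_j^J(v_{2j}^J,v_{2j+1}^J)$. Because this single perturbation is at node $j$ itself and the subtrees rooted at $2j$ and $2j+1$ are untouched, an easy induction gives $v_{2j}^{J'}=v_{2j}^J$ and $v_{2j+1}^{J'}=v_{2j+1}^J$, so the flipped entry remains the thrifty $j$ variable under $J'$, now taking a new value; hence $v_j^{J'}\neq v_j^J$. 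In either case, $J$ and $J'$ differ in exactly one input variable, namely the thrifty $j$ variable of $J$.

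By our contradiction hypothesis, no state strictly before $q$ on $J$'s path queries this one distinguishing variable, so every query answered along $J$'s prefix is answered the same way under $J'$; determinism then forces $J'$'s path to agree with $J$'s up to and including $q$. Since $j\in\{2i,2i+1\}$ and $v_j^{J'}\neq v_j^J$, the tuple $(v_{2i}^{J'},v_{2i+1}^{J'})$ differs from $(v_{2i}^J,v_{2i+1}^J)$, so $q$ queries $f_i$ at a non-thrifty argument on input $J'$, contradicting thriftiness. I do not anticipate a significant obstacle: the lemma formalizes the intuitive principle that a thrifty BP cannot query a parent before having inspected each of its children, and the entire argument is a one-variable switching lemma. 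The only point requiring mild care is the internal-$j$ case, where one must verify that a perturbation localized at $f_j$ changes $v_j$ without disturbing $v_{2j}$ or $v_{2j+1}$; this is immediate from the disjointness of the subtrees involved.
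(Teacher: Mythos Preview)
Your proof is correct and follows essentially the same approach as the paper's: argue by contradiction, perturb a single variable (the thrifty $j$ variable of $J$) to obtain an input $J'$ with $v_j^{J'}\neq v_j^J$, use determinism to conclude $J'$ reaches $q$, and derive a thriftiness violation at $q$. Your version is in fact more carefully written than the paper's, which simply asserts the existence of an input identical to $J$ except for the value $v_{2i}$ without spelling out the leaf/internal case split you provide.
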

\begin{proof}
 Suppose otherwise, and wlog assume the previous statement is false for $j=2i$. For every $a \not = v_{2i}^J$ there is an input $J_a$ that is identical to $J$ except $v_{2i}^{J_a} = a$.
But the computation paths of $J_a$ and $J$ are identical up to $q$, so $J_a$ queries a variable $f_i(a,b)$ such that $b = v_{2i+1}^{J_a}$ and $a \not = v_{2i}^{J_a}$, which contradicts the thrifty assumption.
\end{proof} 

Now we can complete the definition of the critical states of $I$. For $i$ an internal node, if $q$ is the node $i$ critical state for $I$ then the node $2i$ (resp. $2i+1$) critical state for $I$ is the last state on $P$ before $q$ that queries $2i$ (resp. $2i+1$).

We say that a collection of nodes is a {\em minimal cut} of the tree if 
every path from root to leaf contains exactly one of the nodes.
Now we assign a pebbling sequence to each state on $P$, such that the set of pebbled nodes in each configuration is a minimal cut of the tree or a subset of some minimal cut (and once it becomes a minimal cut, it remains so), and any two adjacent configurations are either identical, or else the later one follows from the earlier one by a valid pebbling move.  (Here we
allow the removal of the pebbles on the children of a node $i$ as part
of the move that places a pebble on $i$.)  This assignment can be described inductively by starting with the last state on $P$ and working backwards. Note that implicitly we will be using the following fact:

\begin{fact}\label{f:basic_crit_state}
For any input $I$, if $j$ is a descendant of $i$ then the node $j$ critical state for $I$ occurs earlier on the computation path of $I$ than the node $i$ critical state for $I$.  
\end{fact}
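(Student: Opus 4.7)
The plan is to derive Fact \ref{f:basic_crit_state} directly from the recursive definition of critical states together with Lemma \ref{l:basic_thrifty}. First I would reduce to the single-step case: since ``occurs earlier on $P$'' is a transitive relation, and since the path in the tree from $i$ down to any descendant $j$ is a sequence of parent-child steps, it suffices to show that for every internal node $i$ with critical state $q_i$, the critical state for each child of $i$ is well-defined and occurs strictly earlier on $P$ than $q_i$.

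The single-step case is essentially built into the definition. By construction the node-$2i$ critical state (resp.\ node-$(2i+1)$ critical state) is declared to be the last state on $P$ strictly before $q_i$ that queries $2i$ (resp.\ $2i{+}1$), so \emph{if} such a state exists, then it automatically precedes $q_i$ on $P$. The only content to verify is existence. Here I would invoke Lemma \ref{l:basic_thrifty}: since $B$ is thrifty and $q_i$ queries (the thrifty variable of) $i$, the lemma guarantees that for each child of $i$ there is some state on $P$ strictly before $q_i$ that queries that child. Hence the ``last such state before $q_i$'' is well-defined and occurs before $q_i$, establishing both well-definedness of the child critical states and the strict precedence claim.

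To package this cleanly I would use induction on the distance of $i$ from the root. The base case is the root itself, whose critical state exists by the argument already given in the text (flipping $f_1$ at $(v_2^I,v_3^I)$ would otherwise fool $B$). For the inductive step, assuming $q_i$ is well-defined, the paragraph above yields well-defined $q_{2i}$ and $q_{2i+1}$ strictly preceding $q_i$ on $P$; in particular this justifies the recursive definition all the way down to the leaves. Transitivity of ``earlier on $P$'' then upgrades the single-step statement to arbitrary descendants, giving the fact. The main (and essentially only) obstacle is the nonemptiness needed to make ``last query before $q_i$'' well-defined, and this is handled entirely by Lemma \ref{l:basic_thrifty}; no additional subtlety about leaves vs.\ internal children arises, since the definition treats both uniformly.
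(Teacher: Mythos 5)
Your proof is correct and supplies exactly the argument the paper leaves implicit (the Fact is stated without proof, merely noted as something ``implicitly used''): the recursive definition places each child's critical state strictly before its parent's on $P$, Lemma~\ref{l:basic_thrifty} guarantees the ``last state before $q_i$ querying the child'' is well-defined, and transitivity of ``earlier on $P$'' extends the single parent--child step to arbitrary descendants. The induction on depth, with the root case grounded in the explicit argument the paper gives for existence of the root critical state, is the natural way to package this, and there is no hidden subtlety about leaf versus internal children since Lemma~\ref{l:basic_thrifty} covers both.
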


The pebbling configuration for the output state has just a black pebble on the root. Assume we have defined the pebbling configurations for $q$ and every state following $q$ on $P$, and let $q'$ be the state before $q$ on $P$. If $q'$ is not critical, then we make its pebbling configuration be the same as that of $q$. If $q'$ is critical then it must query a node $i$ that is pebbled in $q$. The pebbling configuration for $q'$ is obtained from the configuration for $q$ by removing the pebble from $i$ and adding pebbles to $2i$ and $2i+1$ (if $i$ is an internal node - otherwise you only remove the pebble from $i$). 

Now consider the last critical state in the computation path $P^I$ that
queries a height 2 node (i.e. a parent of leaves).  
We use $r^I$ to denote this state and call it the
{\bf supercritical state} of $I$.  The pebbling configuration 
associated with $r^I$ is called the bottleneck configuration, and its
pebbled nodes are called {\bf bottleneck nodes}.  The two children of
$\node(r^I)$ must be bottleneck nodes, and the bottleneck nodes
form a minimal cut of the tree.   The path from the root to
$\node(r)$ is the {\bf bottleneck path}, and by Fact
\ref{f:basic_crit_state} it cannot contain any bottleneck nodes.
From all this it is easy to see that there must be at least $h$
bottleneck nodes.

%The bottleneck path of $I \in E$ is the bottleneck path of $r^I$. 
Here is the main property of the pebbling sequences that we need: 

\begin{fact}\label{f:basic_peb_seq}
For any input $I$, if non-root node $i$ with parent $j$ is pebbled at a state $q$ on $P^I$, then the node $j$ critical state $q'$ of $I$ occurs later on $P^I$, and there is no state (critical or otherwise) between $q$ and $q'$ on $P^I$ that queries $i$. 
\end{fact}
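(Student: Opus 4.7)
The plan is to derive both claims from an exact characterization of the interval of $P^I$ during which $i$ is pebbled. Throughout I write $c_m$ for the node $m$ critical state of $I$. I will trace the inductive definition of the pebbling configurations forwards along $P^I$.

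First I would observe that between consecutive states along $P^I$, the configuration changes only at a transition out of a critical state: if the state $q''$ is critical querying $m$, then comparing the definition $\mathrm{cfg}(q'') = \mathrm{cfg}(q''^+)\setminus\{m\}\cup\{\text{children}(m)\}$ to its predecessor gives, in forward terms, that the pebbles on the children of $m$ are removed and a pebble is placed on $m$ (with the obvious adjustment for a leaf). Consequently, a pebble on our non-root node $i$ can appear in the configuration only via (the forward image of) the critical step at the \emph{unique} critical state querying $i$, namely $c_i$, and can disappear only via the critical step at the \emph{unique} critical state querying the parent $j$ of $i$, namely $q'=c_j$. Since every node has exactly one critical state, the set of positions on $P^I$ at which $i$ lies in the configuration is precisely the closed interval from the state immediately after $c_i$ up to and including $q'$.

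Given this characterization, the first claim is immediate: since $i$ is pebbled at $q$, we must have $c_i < q \le q'$, so $q'$ occurs at or later than $q$ on $P^I$. (In fact, if $q \neq q'$ then $q'$ is strictly later, which is the nontrivial content.) For the second claim, I would invoke the recursive definition of critical states: $c_i$ is specified as the last state on $P^I$ \emph{before} $q' = c_j$ that queries $i$, so no state in the open interval $(c_i, q')$ queries $i$. Since $q > c_i$, the open interval $(q, q')$ is contained in $(c_i, q')$, and therefore contains no state querying $i$.

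I do not expect a main obstacle here: the Fact is really just a restatement of the construction once one notices that a single forward interval of the configuration records the life of the pebble on $i$. The only step that needs a little care is verifying that no other critical state in $(c_i, q')$ can silently add or remove $i$ from $\mathrm{cfg}$; this is handled by uniqueness of the per-node critical state plus the fact that in the tree, $i$ has only the one parent $j$, so no critical state other than $c_j$ lists $i$ among its children to be displaced.
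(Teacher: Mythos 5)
Your proof is correct, and since the paper states Fact~\ref{f:basic_peb_seq} without a written proof, your argument is the natural completion of what the paper implicitly relies on: the backward inductive construction of the configurations makes the ``lifetime'' of the pebble on $i$ exactly the half-open interval from the state immediately after $c_i$ through $c_j$, and the definition of $c_i$ as the \emph{last} query to $i$ before $c_j$ directly forbids queries to $i$ inside $(c_i,c_j)$, hence inside $(q,q')$. The only wrinkle, which you already flag, is that the Fact's phrase ``occurs later'' should be read as ``occurs at or later'' (the case $q=q'=c_j$ can occur, e.g.\ when $q$ is the supercritical state itself, and then the second clause is vacuous); and it is worth saying explicitly that the well-definedness of the interval $(c_i,c_j]$ and the legality of each backward critical move both rest on Fact~\ref{f:basic_crit_state}, which guarantees $c_i<c_j$.
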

Let $R$ be the states that are supercritical for at least one input. Let $E_r$ be the inputs with supercritical state $r$. Now we can state the main lemma. 
\begin{lemma}\label{l:thrifty_advice_main_lemma}
 For every $r \in R$, there is an surjective function 
from $[k]^{|\Vars|-h}$ to $E_r$.
\end{lemma}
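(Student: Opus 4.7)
The plan is to prove the equivalent inequality $|E_r| \le k^{|\Vars|-h}$ by building an injection $\psi : E_r \to [k]^{|\Vars|-h}$ and then inverting it (extending arbitrarily off the image) to obtain the claimed surjection $[k]^{|\Vars|-h} \to E_r$. The injection $\psi$ will act by deleting $h$ carefully chosen coordinates of each $I \in E_r$ whose values can be recovered from $r$ together with the remaining $|\Vars|-h$ coordinates.

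Two of the dropped coordinates are handed to us by the thriftiness of $r$. Since $\var(r) = f_{\node(r)}(a_r,b_r)$ for fixed $a_r,b_r \in [k]$ and $\node(r)$ is at height $2$, every $I \in E_r$ must have $l_{2\node(r)}^{I} = a_r$ and $l_{2\node(r)+1}^{I} = b_r$; these two leaf variables are always droppable, and their values are readable off $r$ directly. The remaining $h-2$ dropped coordinates come from the bottleneck path $u_1 = 1, u_2, \ldots, u_{h-1} = \node(r)$. For each $j = 2,\ldots,h-1$ the off-path sibling $u_j'$ of $u_j$ is an internal node (since it sits at height $h-j+1 \ge 2$), and I would drop the entry of $f_{u_j'}$ that $B$ queries thriftily at the node-$u_j'$ critical state of $I$.

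To justify recovery of these $h-2$ entries, I would walk up the bottleneck path from $u_{h-1}'$ to $u_2'$, using at each step the fact that $B$ cannot query $f_{u_j'}$ between the node-$u_j'$ critical state of $I$ and the node-$u_{j-1}$ critical state (Fact \ref{f:basic_peb_seq}), together with the observation that $B$'s subsequent thrifty query at $u_{j-1}$ needs $v_{u_j'}^{I}$ as one of its arguments. These facts force $v_{u_j'}^{I}$ to be encoded in the state $B$ occupies at the moment it fires that $u_{j-1}$-query, and a short deterministic simulation from $r$ along the non-dropped coordinates then pins down what value the dropped entry must have taken.

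The main obstacle is that the node-$u_j'$ critical state may land either before or after $r$ in $P^I$; only the height-$2$ sibling $u_{h-1}'$ is guaranteed, by supercriticality of $r$, to have its critical state before $r$. In the ``before'' case, $v_{u_j'}^{I}$ is directly encoded in $r$ and is therefore constant across $E_r$, so the dropped entry has a value one reads off $r$. In the ``after'' case I would instead argue that the pair $(v_{2u_j'}^{I}, v_{2u_j'+1}^{I})$ is encoded in $r$ (because $B$ needs those to form the thrifty $f_{u_j'}$ query after $r$), so the $k$ possible responses to that query label $k$ distinct successors of the query state and $v_{u_j'}^{I}$ is read off from the deterministic path taken from $r$ using the non-dropped coordinates. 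The delicate bookkeeping is coordinating these two subcases up the path so that exactly $h-2$ entries are dropped in total, with the recovery being unambiguous; I expect this inductive coordination — especially when the ``after'' case at one level pushes the encoding requirement down onto internal children below — to be the crux of the argument, and I would carry it out by induction on $j$ using Facts \ref{f:basic_crit_state} and \ref{f:basic_peb_seq} to control what $B$ knows at each critical moment.
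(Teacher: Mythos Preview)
Your high-level plan---build an injection $E_r \to [k]^{|\Vars|-h}$ by dropping $h$ coordinates and recovering them via forward simulation from $r$---is the right shape, and matches the paper's strategy. But the specific coordinates you drop do not work.

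First, a minor issue: in the ``before'' case you assert that $v_{u_j'}^I$ is constant across $E_r$. This is false already for $h=3$: take $\node(r)=2$, so $u_2'=3$; every $I\in E_r$ has its node-$3$ critical state before $r$, yet $v_3^I$ ranges freely over $[k]$. What is true in this case is that $u_j'$ is pebbled at $r$ (equivalently, $u_j'$ is a bottleneck node), so by Fact~\ref{f:basic_peb_seq} the simulation from $r$ reaches the node-$u_{j-1}$ critical state---and hence learns $v_{u_j'}^I$ from the thrifty query there---before $u_j'$ is ever queried. So recovery works, just not for the reason you give.

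The real gap is the ``after'' case. Here the node-$u_j'$ critical state $c_{u_j'}$ lies strictly between $r$ and $c_{u_{j-1}}$. Your simulation from $r$ therefore hits $c_{u_j'}$ first, and that state queries precisely the entry $f_{u_j'}(v_{2u_j'}^I,v_{2u_j'+1}^I)$ you dropped. You cannot take the next edge without the answer, and you have not yet seen any $u_{j-1}$-query to reveal $v_{u_j'}^I$. Your claim that ``$(v_{2u_j'}^I,v_{2u_j'+1}^I)$ is encoded in $r$'' does not help: even granting that the simulation reaches $c_{u_j'}$ and reads off the argument pair, knowing the position of the dropped entry does not tell you its value, and the deterministic path cannot proceed without it.

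The fix is exactly what the paper does: do not commit in advance to saving at the off-path siblings $u_j'$. The ``after'' case is precisely the case where $u_j'$ is \emph{not} a bottleneck node; the actual bottleneck node in that subtree sits strictly below $u_j'$. The paper therefore runs the simulation dynamically, learning child-values for free whenever an internal node is queried and spending advice only on the queried node's own value. Fact~\ref{f:basic_peb_seq} then guarantees that each of the $\ge h$ bottleneck nodes (whichever they turn out to be for this $I$) gets learned from its parent's query before it is itself queried, so $h$ advice elements are saved. Your inductive ``push the encoding requirement down onto internal children'' is gesturing at this, but carried out it amounts to abandoning the fixed choice of $u_j'$ and letting the bottleneck configuration dictate which coordinates to drop.
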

The lemma gives us that $|E_r| \le k^{|\Vars|-h}$ for every $r \in R$. Since $\{E_r\}_{r\in R}$ is a partition of $E$, there must be at least $|E| / k^{|\Vars|-h} = k^h$ sets in the partition, i.e. there must be at least $k^h$ supercritical states. So the theorem follows from the lemma. 

\begin{proof}
Fix $r \in R$ and let $D := E_r$. Let $\isc := \node(r)$. Since $r$ is
thrifty for every $I$ in $D$, there are values $v_{2\isc}^D$ and
$v_{2\isc+1}^D$ such that $v_{2\isc}^I = v_{2\isc}^D$ and
$v_{2\isc+1}^I = v_{2\isc+1}^D$ for every $I$ in $D$.   The surjective
function of the lemma is computed by a procedure $\IntAdv$ that takes
as input a $[k]$-string (the advice), tries to interpret it as the code
of an input in $D$, and when successful outputs that input. We want to
show that for every $I \in D$ we can choose $\adv^I \in [k]^{|\Vars|-h}$
such that $\IntAdv(\adv^I)\DEF = I$.

The idea is that
the procedure $\IntAdv$ traces the computation path $P$ starting from
state $r$, using the advice string $\adv^I$ when necessary to answer
queries made by each state $q$ along the path.  By the thrifty
property, the procedure can `learn' the values $a,b$ of the children of
$i=\node(q)$ (if $i$ is an internal node) from the query $f_i(a,b)$ of $q$.
Each such child that has not been queried earlier in the trace saves
one advice value for the future.  By Fact \ref{f:basic_peb_seq} the
parent of each of the $h$ bottleneck nodes will be queried
before the node itself, making a total savings of at least $h$
values in the advice string.  After the trace is completed, the
remaining advice values complete the specification of the input
$I\in E_r$.

In more detail, during the execution of the procedure we maintain a current state $q$, a partial function $v^*$ from nodes to $[k]$, and a set of nodes $\UL$. Once we have added a node to $\UL$, we never remove it, and once we have added $v^*(i) := a$ to the definition of $v^*$, we never change $v^*(i)$. We have reached $q$ by following a \emph{consistent partial computation path} starting from $r$, meaning there is at least one input in $D$ that visits exactly the states and edges that we visited between $r$ and $q$. So initially $q = r$. Intuitively, $v^*(i)\DEF = a$ for some $a$ when we have ``committed'' to interpreting the advice we have read so-far as being the initial segment of \emph{some} complete advice string $\adv^I$ for an input $I$ with $v_i^I = a$. Initially $v^*$ is undefined everywhere. As the procedure goes on, we may often have to use an element of the advice in order to set a value of $v^*$; however, by exploiting the properties of the critical state sequences, for each $I \in D$, when given the complete advice $\adv^I$ for $I$ there will be at least $h$ nodes $\UL^I$ that we ``learn'' without directly using the advice. Such an opportunity arises when we visit a state that queries some variable $f_i(b_1,b_2)$ and we have not yet committed to a value for at least one of $v^*(2i)$ or $v^*(2i+1)$ (if both then, we learn two nodes). When this happens, we add that child or children of $i$ to $\UL$ (the {\sf L} stands for ``learned''). So initially $\UL$ is empty. There is a loop in the procedure $\IntAdv$ that iterates until $|\UL| = h$. Note that the children of $\isc$ will be learned immediately. Let $v^*(D)$ be the inputs in $D$ consistent with $v^*$, i.e. $I \in v^*(D)$ iff $I \in D$ and $v_i^I = v^*(i)$ for every $i \in \Dom(v^*)$. 

Following is the complete pseudocode for $\IntAdv$. We also state the most-important of the invariants that are maintained. \\

\noindent
{\bf Procedure} $\IntAdv(\vec{a} \in [k]^*)$:
\begin{algorithmic}[1]
  \STATE $q := r$,\ $\UL := \emptyset$,\ $v^* := \text{undefined everywhere}$.
  \STATE {\bf Loop Invariant:} If $N$ elements of $\vec{a}$ have been used, then $|\Dom(v^*)| = N + |\UL|$.\\
  \WHILE{$|\UL| < h$}
  	  \STATE $i := \node(q)$
        \IF{$i$ is an internal node and $2i \not \in \Dom(v^*)$ or $2i+1 \not \in \Dom(v^*)$}
	        \STATE let $b_1,b_2$ be such that $\var(q) = f_i(b_1,b_2)$.
	        \IF{$2i \not \in \Dom(v^*)$}
	        	\STATE $v^*(2i) := b_1$ and $\UL := \UL + 2i$.
		  \ENDIF
		  \IF{$2i+1 \not \in \Dom(v^*)$ and $|\UL| < h$}
		  	\STATE $v^*(2i+1) := b_2$ and $\UL := \UL + (2i+1)$.
	        \ENDIF
        \ENDIF
        \IF{$i \not \in \Dom(v^*)$}
        	\STATE let $a$ be the next unused element of $\vec{a}$.
	       \STATE $v^*(i) := a$.
        \ENDIF
        \STATE $q := $ the state reached by taking the edge out of $q$ labeled $v^*(i)$.
  \ENDWHILE
  \STATE let $\vec{b}$ be the next $|\Vars| - |\Dom(v^*)|$ unused elements of $\vec{a}$. \label{line:lastadviceuse}
  \STATE let $I_1,\ldots,I_{|v^*(D)|}$ be the inputs in $v^*(D)$ sorted according to some globally fixed order on $E$. \label{line:secondtolastline}
  \STATE if $\vec{b}$ is the $t$-largest string in the lexicographical ordering of $[k]^{|\Vars| - |\Dom(v^*)|}$, and $t \le |v^*(D)|$, then return $I_t$.\footnote{See after this code for argument that $|v^*(D)| \le k^{|\Vars| - |\Dom(v^*)|}$.}  \label{line:lastline}
\end{algorithmic}

\vspace{10pt}

If the loop finishes, then there are at most $|E|/|\Dom(v^*)| = k^{|\Vars|-|\Dom(v^*)|}$ inputs in $v^*(D)$. 
So for each of the 
inputs $I$ enumerated on line \ref{line:secondtolastline}, there is a way of setting $\vec{a}$ so that $I$ will be chosen on line \ref{line:lastline}. 

Recall we are trying to show that for every $I$ in $D$ there is a string $\adv^I  \in [k]^{|\Vars|-h}$ such that $\IntAdv(\vec{a})\DEF = I$. This is easy to see under the assumption that there is such a string that makes the loop finish while maintaining the loop invariant; since the loop invariant ensures we have used $|\Dom(v^*)| - h$ elements of advice when we reach line \ref{line:lastadviceuse}, and since line \ref{line:lastadviceuse} is the last time when the advice is used, in all we use at most $|\Vars| - h$ elements of advice.
 To remove that assumption, first observe that for each $I$, we can set the advice to some $\adv^I$ so that $I \in g(D)$ is maintained when $\IntAdv$ is run on $\vec{a}^I$. Moreover, for that $\adv^I$, we will never use an element of advice to set the value of a bottleneck node of $I$, and $I$ has at least $h$ bottleneck nodes. Note, however, that this does not necessarily imply that $\UL^I$ (the $h$ nodes $\UL$ we obtain when running $\IntAdv$ on $\adv^I$) is a subset of the bottleneck nodes of $I$. Finally, note that we are of course implicitly using the fact that no advice elements are ``wasted''; each is used to set a different node value.
\end{proof}

\begin{cor}
 For any $h,k$, every deterministic thrifty branching program solving
$BT_2^{h}(k)$ has at least $\sum_{2 \le l \le h} k^l$ states.
\end{cor}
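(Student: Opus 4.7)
The plan is to generalize the proof of Theorem~\ref{t:detThriftLB} by, for each height $l \in \{2, \ldots, h\}$, counting a distinct family of states of $B$ and summing the resulting bounds. For each input $I$ and each $l$, define $r^I_l$ to be the last critical state on $P^I$ that queries a height-$l$ node; this specializes to the supercritical state $r^I = r^I_2$ when $l=2$ and to the root critical state when $l=h$. These states exist for every $l$ because every node is queried on $P^I$, as in Lemma~\ref{l:basic_thrifty}. Let $R_l := \{r^I_l : I \in E\}$. Since each BP state queries a unique variable, and variables attached to nodes at different heights are distinct, the sets $R_l$ for $l = 2, \ldots, h$ are pairwise disjoint.

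It therefore suffices to prove $|R_l| \geq k^{h-l+2}$ for every $l$, since then
\[
|Q| \;\geq\; \sum_{l=2}^{h} |R_l| \;\geq\; \sum_{l=2}^{h} k^{h-l+2} \;=\; \sum_{l=2}^{h} k^l.
\]

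To bound $|R_l|$ I adapt Lemma~\ref{l:thrifty_advice_main_lemma}. Fix $r \in R_l$ and let $E^l_r = \{I : r^I_l = r\}$; then $\node(r)$ is a fixed height-$l$ node and, by thriftiness, the children's values at $\node(r)$ coincide across all $I \in E^l_r$. The pebbling configuration that the theorem's construction attaches to $r$ arises by starting with the root pebbled at the output state and expanding, going backwards, every critical state encountered between the output and $r$ (inclusive of $r$ itself). Because critical states of descendants precede those of ancestors on $P^I$, the backwards pass expands every ancestor of $\node(r)$ together with $\node(r)$; this yields a cut containing one ``non-path sibling'' at every height $l, l+1, \ldots, h-1$ (giving $h-l$ nodes) together with the two children of $\node(r)$ at height $l-1$, for a total of at least $h-l+2$ non-root nodes. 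Any additional backwards expansion caused by a non-ancestor critical state occurring after $r^I_l$ on $P^I$ can only enlarge the cut, since every expansion is a strict $1$-for-$2$ move.

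With a cut of size $\geq h-l+2$ at $r$, the procedure $\IntAdv$ from Lemma~\ref{l:thrifty_advice_main_lemma} runs essentially verbatim: starting at $r$ and tracing $P^I$ forward, each cut node $i$ has its parent's critical state later on $P^I$ (Fact~\ref{f:basic_peb_seq}) with no intervening query to $i$, so the thrifty query $f_{\Par(i)}$ made at that later critical state reveals $v^I_i$ without consulting advice. Running the loop until $h-l+2$ such values have been learned yields $|E^l_r| \leq k^{|\Vars| - (h-l+2)}$, hence $|R_l| \geq k^{h-l+2}$. The main technical point I expect to require care is verifying the cut-size lower bound $h-l+2$ in full generality (allowing non-ancestor expansions to interleave between $r^I_l$ and the output); this should follow from the $1$-for-$2$ observation, but warrants an explicit check.
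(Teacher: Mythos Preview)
Your proposal is correct and follows exactly the approach the paper sketches: the paper's own proof is the single sentence ``The same proof is easily adapted to show there are at least $k^{h-l+2}$ states that query height $l$ nodes, for $l = 2,\ldots,h$,'' and you have supplied precisely that adaptation, defining $r^I_l$ as the last height-$l$ critical state and checking that the associated pebble configuration has size at least $h-l+2$. One small point worth making explicit in your write-up: the reason every non-ancestor expansion after $r^I_l$ is a strict $1$-for-$2$ move (rather than a leaf removal) is that any critical state strictly after $r^I_l$ must be for a node at height $>l\ge 2$, hence internal---this follows because every node at height $\le l$ has a height-$l$ ancestor whose critical state is at or before $r^I_l$, and descendants' critical states precede ancestors'.
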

\begin{proof}
 The previous theorem only counts states that query height 2 nodes. The same proof is easily adapted to show there are at least $k^{h-l+2}$ states that query height $l$ nodes, for $l = 2,\ldots,h$.
%Also there are at least $k^{h-1}$ leaf-querying states; the supercritical state for an input is a leaf querying state, and you don't learn a node at the supercritical state, but there are at least $h-1$ states after it where you learn at least one node. Those $h$ state sets are disjoint, so we can sum the bounds.
\end{proof}

\begin{theorem} \label{t:thrifFourtwo}
Every nondeterministic thrifty branching program solving
\bttwofour\ has $\Omega(k^3)$ states.
\end{theorem}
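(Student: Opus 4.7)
The plan is to prove this by mirroring the structure of the fractional pebbling lower bound $\FRpebbles(T_2^4)\ge 3$ from Theorem~\ref{t:daryFract}, using the state-sequence technique of Theorem~\ref{t:newLB}. Fix a nondeterministic thrifty BP $B$ solving $BT_2^4(k)$, let $\Gamma_i$ denote the states of $B$ querying $f_i$, and let $\Gamma_i^{r,s}\subseteq\Gamma_i$ be the sub-collection that queries at argument $(r,s)\in[k]^2$. The goal is to show $\sum_{i=4}^{7}\sum_{r,s}|\Gamma_i^{r,s}|\ge\Omega(k^3)$, which suffices.

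First I would set up a hard YES-input family as in Theorem~\ref{t:newLB}. For each $(r,s)\in[k]^2$, let $F^{r,s}_{yes}$ consist of inputs whose eight leaves are labelled $r,s,r,s,r,s,r,s$; each height-2 function $f_4,f_5,f_6,f_7$ is identically $1$ except at argument $(r,s)$, where it takes a freely chosen value in $[k]$; the functions $f_2,f_3$ are identically $1$ except at the correct child-value arguments; and $f_1$ is any $0/1$ table with $f_1(v_2,v_3)=1$. Because $B$ is thrifty, no accepting computation can learn anything informative about an internal node except by querying it at its true child values, so the height-2 queries that can distinguish inputs inside $F^{r,s}_{yes}$ all lie in $\Gamma_4^{r,s}\cup\Gamma_5^{r,s}\cup\Gamma_6^{r,s}\cup\Gamma_7^{r,s}$.

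Next, along a fixed accepting computation $\mathcal{C}(I)$ for each $I\in F^{r,s}_{yes}$, I would mark critical states that are the direct BP analogues of the critical times $t_1,t_2,t_3,t_6,t_7$ in the proof of $\FRpebbles(T_2^4)\ge 3$: $t_1$ is the last state querying $f_1$; $t_2,t_3$ are the latest earlier states querying $f_2,f_3$ respectively; $t_4,t_5$ are the latest states before $t_2$ querying $f_4,f_5$; and $t_6,t_7$ are the latest states before $t_3$ querying $f_6,f_7$. Thriftiness together with Lemma~\ref{l:basic_thrifty}-style reasoning forces each descendant's critical state to appear strictly before its ancestor's, giving the same partial order on the $t_i$ used in that proof. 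Between consecutive critical states I would group the computation into learning intervals, each labelled by the subtree whose value it is ``carrying'' (exactly the role of the $b_i,w_i$ bookkeeping in the pebbling proof).

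Then I would carry out the same case split: CASE~I ($t_2<t_7$) with sub-cases IA and IB according to whether the ``white information about node~6'' is released before or after $t_1$, and CASE~II ($t_7<t_2$). In each case, three of the inequalities \eqref{e:bwsum}--\eqref{e:wsevtwo} of the fractional pebbling proof force the state at certain moments to simultaneously encode three independent pieces of data, each ranging over $\Theta(k)$ possibilities. Imitating the signature-counting argument of Theorem~\ref{t:newLB}, I would define, for each case, a signature map assigning to every $I\in F^{r,s}_{yes}$ a tuple of critical states and endpoint functions $\varphi_j[f_1]$, show that two inputs with the same signature can be spliced at their critical states to produce an accepting computation on an input where $f_1(v_2,v_3)=0$, and pigeonhole on the number of signatures. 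Per $(r,s)$ this forces $\Omega(k)$ states from the relevant $\Gamma_i^{r,s}$'s in each case, and summing over $(r,s)\in[k]^2$ yields $\Omega(k^3)$ states in total.

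The main obstacle is the splicing step. Unlike the height-3 case of Theorem~\ref{t:newLB}, where the two principal subtrees of the root are independent and a single pair of learning-interval types suffices, the height-4 case requires simultaneous splicing at multiple levels and must preserve thriftiness of the spliced computation: any $f_i$-query appearing in the swapped-in segment must still be at the correct child values of the hybrid input. This is delicate precisely because sub-cases IA and IB are distinguished by \emph{which} node's value has to be remembered across $t_1$, so the partition of learning intervals by ``carried node'' must be chosen to match the case split, and the signature must include enough information (the analogue of both $b_i$ and $w_i$ for the relevant nodes) to guarantee that the splice produces a legal, thrifty, accepting path. Getting this accounting exactly right is the BP counterpart of checking all the pebble-value inequalities in the fractional pebbling proof.
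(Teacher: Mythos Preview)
Your plan is plausible in spirit but takes a considerably more complicated route than the paper, and the ``main obstacle'' you yourself flag---the multi-level splicing while preserving thriftiness---is precisely the step you leave unresolved, so as written this is a sketch with a genuine gap rather than a proof.

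The paper sidesteps that difficulty with a short encoding argument that does \emph{not} import the full $t_1,\dots,t_7$ critical-time hierarchy or the CASE~I/IA/IB/II split from the fractional-pebbling lower bound. It fixes $f_1\equiv 1$ (so every input in $E^{r,s}$ is a YES instance, determined by the six middle values $v_2,\dots,v_7$, giving $|E^{r,s}|=k^6$) and, for a chosen accepting computation $\mathcal{C}(I)$, marks just \emph{two} states in $\Gamma^{r,s}=\bigcup_{j=4}^7\Gamma_j^{r,s}$: the last one $\gamma^I$ before the first root query and the first one $\delta^I$ after it (defaulting to the start and accept states). The segment between them contains the root query and no height-2 queries besides $\gamma^I$. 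A three-way tag $u\in\{1,2,3\}$ records which of $v_2,v_3$ are queried inside the segment; in each case thriftiness forces the segment to reveal at least two of the six middle values, so the remaining four are written down as $x_1,\dots,x_4\in[k]$. The map $I\mapsto(u,\gamma^I,\delta^I,x_1,\dots,x_4)$ is shown to be injective by a single splice: if $U(I)=U(J)$ with $I\ne J$, replacing the segment of $\mathcal{C}(I)$ by that of $\mathcal{C}(J)$ yields an accepting computation on input $I$ that violates thriftiness at one of nodes $1,2,3$. Injectivity gives $3(|\Gamma^{r,s}|+1)^2k^4\ge k^6$, hence $|\Gamma^{r,s}|+1\ge k/\sqrt{3}$, and summing over $(r,s)$ yields $\Omega(k^3)$.

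Two points of contrast worth noting. First, the paper's contradiction is a \emph{thriftiness violation}, not acceptance of a NO input; since $f_1\equiv 1$ there are no NO inputs in the family, so there is no need to vary $f_1$ or to build the $\varphi_j[f_1]$ signature machinery you propose. Second, only one splice across one segment is needed, so the delicate issue of simultaneous splicing at multiple levels---which you correctly identify as the crux of your approach---simply does not arise. Your route might be completable, but the paper's is both shorter and avoids exactly the part you have not worked out.
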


\begin{proof}
As in the proof of the previous theorem  we restrict
attention to inputs $I$ in which the function $f_i$ associated with
each internal node $i$ (except $i=1$) satisfies $f_i(x,y)=0$
except possibly when $x,y$ are the values of its children.
For $r,s\in[k]$ let $E^{r,s}$ be the set of all such inputs $I$
such that for all $j\in\{4,5,6,7\}$, $v^I_{2j}=r$ and $v^I_{2j+1}=s$
(i.e. each pair of sibling leaves have values $r,s$), and
$f_1$ is identically 1 (so $I$ is a YES instance).
Thus $I$ is determined by the values of its 6 middle nodes
$\{2,3,4,5,6,7\}$, so
$$
          |E^{r,s}|= k^6
$$
Let $B$ be a nondeterministic thrifty branching program that solves
$T_2(4,k)$, and let $\Gamma$ be the set of states of $B$ which
query one of the nodes $4,5,6,7$.  We will show $|\Gamma| = \Omega(k^3)$.

For $r,s\in [k]$ let $\Gamma^{r,s}$ be the set of states of $\Gamma$
that query $f_j(r,s)$ for some $j\in\{4,5,6,7\}$.  We will show
\begin{equation}\label{e:GamLB}
  |\Gamma^{r,s}| +1 \ge k/\sqrt{3}
\end{equation}
Since $\Gamma$ is the disjoint union of $\Gamma^{r,s}$ for all
$r,s\in [k]$, it will follow that $|\Gamma|=\Omega(k^3)$ as required.

For each $I\in E^{r,s}$ let ${\cal C}(I)$ be an accepting
computation of $B$ on input $I$.  Let $t^I_1$ be the first time
during ${\cal C}(I)$ that the root $f_1$ is queried.  Let $\gamma^I$ be
be the last state in $\Gamma^{r,s}$ before $t^I_1$ in ${\cal C}(I)$
(or the initial state $\gamma_0$ if there is no such state) and
let $\delta^I$ be the first state in $\Gamma^{r,s}$ after $t^I_1$
(or the ACCEPT state $\delta_{acc}$ if there is no such state).

We associate with each $I\in E^{r,s}$ a tuple
$$
   U(I) = (u,\gamma^I,\delta^I,x_1,x_2,x_3,x_4)
$$
where $u\in\{1,2,3\}$ is a tag, and $x_1,x_2,x_3,x_4$ are in $[k]$
and are chosen so that
$U(I)$ uniquely determines $I$ (by determining the values of
all 6 middle nodes).
Specifically, $x_1 = v^I_i$,
where $i$ is the node queried by $\gamma^I$ (or $i=4$ if
$\gamma^I=\gamma_0$).

We partition $E^{r,s}$ into three sets $E^{r,s}_1,E^{r,s}_2,E^{r,s}_3$
according to which of the nodes $v_2,v_3$ the computation
${\cal C}(I)$ queries during the segment of the computation
between $\gamma^I$ and $\delta^I$.   (The tag $u$ tells us that $I$
lies in set $E^{r,s}_u$.)

Let node $j\in\{2,3\}$ be the parent of node $i$
(where $i$ is defined above) and let $j'\in\{2,3\}$ be the
sibling of $j$.

\begin{itemize}
\item
$E^{r,s}_1$ consists of those inputs $I$
for which ${\cal C}(I)$ queries neither $v_2$ nor $v_3$.

\item
$E^{r,s}_2$ consists of those inputs $I$ for which
${\cal C}(I)$ queries $v_{j'}$.

\item
$E^{r,s}_3$ consists of those inputs $I$ for which
${\cal C}(I)$ queries $v_j$ but not $v_{j'}$.
\end{itemize}

To complete the definition of $U(I)$ we need only specify
the meaning of $x_2,x_3,x_4$.

Let ${\cal S}(I)$ denote the segment of the computation
${\cal C}(I)$ between $\gamma^I$ and $\delta^I$ (not counting
the action of the last state $\delta^I$).   This
segment always queries the root $f_1(v_2,v_3)$, but
does not query any of the nodes
$4,5,6,7$ except $\gamma^I$ may query node $i$.

The idea is that the segment ${\cal S}(I)$ will determine
(using the definition of {\em thrifty}) the values of (at least) two of
the six middle nodes, and $x_1,x_2,x_3,x_4$ will specify the remaining
four values.  We require that
$x_1,x_2,x_3,x_4$ must specify the value of any node
(except the root) that is queried during the segment, but
the state that queries the node determines the values of its children.

In case the tag $u=1$, the computation queries $f_1(v_2,v_3)$,
and hence determines $v_2,v_3$, so $x_1,x_2,x_3,x_4$ specify
the four values $v_4,v_5,v_6,v_7$.

In case $u=2$, the computation queries $f_{j'}$ at the values
of its children, so $x_1,x_2,x_3,x_4$ do not specify the values
of these children, but instead specify $v_2,v_3$.

In case $u=3$, $x_1,x_2,x_3,x_4$ do not specify the value of
the sibling of node $i$ and do not specify $v_{j'}$, but
do specify $v_j$ and the values of the other level 2 nodes.

\medskip
\noindent
{\bf Claim:}  If $I,J\in E^{r,s}$ and $U(I) = U(J)$, then
$I=J$.

Inequality (\ref{e:GamLB}) (and hence the theorem) follows from
the Claim, because if $|\Gamma^{r,s}|+1< k/\sqrt{3}$ then
there would be fewer than $k^6$ choices for $U(I)$ as $I$ ranges
over the $k^6$ inputs in $E^{r,s}$.

To prove the Claim, suppose $U(I)=U(J)$ but $I\ne J$.  Then we can define
an accepting computation of input $I$ which violates the
definition of thrifty.  Namely follow the computation
${\cal C}(I)$ up to $\gamma^I$.  Now follow the segment of
${\cal C}(J)$ between $\gamma^I$ and $\delta^I$, and complete
the computation by following ${\cal C}(I)$.  Notice that
the segment of ${\cal C}(J)$ never queries any of the nodes
$4,5,6,7$ except for $v_i$, and $U(I) = U(J)$
(together with the definition of $E^{r,s}$) specifies the
values of the other nodes that it queries.  However,
since $I\ne J$, this segment of ${\cal C}(J)$ with input $I$
will violate the definition of {\em thrifty} while querying
at least one of the three nodes $v_1,v_2,v_3$.
\end{proof}

\section{Conclusion}\label{s:conclu}
The Thrifty Hypothesis (page \pageref{thriftyH}) states that
thrifty branching programs are optimal among $k$-way BPs solving
\ft.  For the deterministic case, this says that the black pebbling
method is optimal.  Proving this would separate \lspace\ from \p\
(Corollary \ref{c:thegoal}).
Even disproving this would be interesting, since it would show
that one can improve upon this obvious application of pebbling.

The next important step is to extend the tight branching program
bounds given in Corollary \ref{c:HtThree} for height 3 trees
to height 4 trees.  The upper bound given in Theorem \ref{t:BPUpper}
(\ref{e:dFUpper}) for the height 4 function problem $FT^4_d(k)$
for deterministic BPs is $O(k^{3d-2})$.  If we could match this
with a similar lower bound when $d=4$ (e.g. by using a variation
of the state sequence method in Section \ref{s:beating})
this  would yield $\Omega(k^{10})$
states for the function problem and hence (by Lemma \ref{l:FvsB})
$\Omega(k^9)$ states for the Boolean problem $BT^4_4(k)$.  This
would break the Ne\u{c}iporuk $\Omega(n^2)$ barrier for branching
programs (see Section \ref{s:NecLB}).

For nondeterministic BPs, the upper bound given by Theorem \ref{t:BPUpper}
for the Boolean problem for height 4 trees is $O(k^{2d-1})$.  This
comes from the upper bound on fractional pebbling given in
Theorem \ref{t:daryFract}, which we suspect is optimal for $h=4$
and degree $d=3$.  The corresponding lower bound for nondeterministic BPs
for $BT^4_3(k)$ would be $\Omega(k^5)$.  A proof would break
the Ne\u{c}iporuk $\Omega(n^{3/2})$ barrier for nondeterministic BPs.

Other (perhaps more accessible) open problems are to generalize
Theorem \ref{t:thrifFourtwo} to get general lower bounds for
nondeterministic thrifty BPs solving $BT^h_2(k)$, and to improve
Theorem \ref{t:daryFract} to get tight bounds on the number
of pebbles required to fractionally pebble $T^h_d$.

The proof of Theorem \ref{t:detThriftLB}, which states that
deterministic thrifty BPs require at least $k^h$ states to solve
$BT_2^{h}(k)$, is taken from \cite{wehr}.  That paper also proves
the same lower bound for the more general class of `less-thrifty' BPs,
which are allowed to query $f_i(a,b)$ provided that either $(a,b)$
correctly specify the values of both children of $i$, or neither $a$
nor $b$ is correct. 

\cite{wehr} also calculates $(k+1)^h$ as the exact
number of states required to solve $FT_2^h(k)$ using the
black pebbling method, and proves this is optimal when $h=2$.
So far we have not been able to beat this
BP upper bound by even one state, for any $h$ and any $k$
using any method.   That this bound might actually be unbeatable
(at least for all $h$ and all sufficiently large $k$) makes an
intriguing hypothesis.

%Separating \lspace\ from \p\ is a long-standing and important
%open problem in computational complexity.  We have suggested an
%approach which should be worthy of the attention of the
%complexity theory community

\medskip
\noindent
{\bf Acknowledgment}  James Cook played a helpful role in the early
parts of this research.

\bibliographystyle{alpha}
\bibliography{paper}

\begin{thebibliography}{KRW95}

\bibitem[adH79]{meyeraufderheide}
F.~Meyer auf~der Heide.
\newblock A comparison between two variations of a pebble game on graphs.
\newblock Master's thesis, Universit{\"a}t Bielefeld, Fakult{\"a}t f{\"u}r
  Mathematik, 1979.

\bibitem[BSSV03]{BSSV03}
P.~Beame, M.~Saks, X.~Sun, E.~Vee.
\newblock Time-space tradeoff lower bounds for randomized computation
of decision problems.
\newblock {\em JACM}, 50(2):154--195, 2003.

\bibitem[BC82]{boco82}
A.~Borodin and S.~Cook.
\newblock A time-space tradeoff for sorting on a general sequential model of
  computation.
\newblock {\em SIAM J. Comput.}, 11(2):287--297, 1982.

\bibitem[BRS93]{borasm93}
A.~Borodin, A.~Razborov, and R.~Smolensky.
\newblock On lower bounds for read-k-times branching programs.
\newblock {\em Computational Complexity}, 3:1--18, 1993.

\bibitem[BCMSW09]{BCMSW09}
M.~Braverman, S.~Cook, P.~McKenzie, R.~Santhanam, D.~Wehr.
\newblock
Branching programs for tree evaluation.
\newblock In
{\em 34th International Symposium, Mathematical Foundations of Computer
Science} (MFCS 2009), LNCS 5734, Springer, 175--186, 2009.

\bibitem[BCMSW.A09]{BCMSWA09}
M.~Braverman, S.~Cook, P.~McKenzie, R.~Santhanam, D.~Wehr.
\newblock Fractional pebbling and thrifty brancing programs.
\newblock In
{\em Foundations of Software Technology and Theoretical Computer
Science} (FSTTCS 2009), LIPIcs, pp. 109--120, 2009.


\bibitem[Coo74]{co74}
S.~Cook.
\newblock An observation on time-storage trade off.
\newblock {\em J. Comput. Syst. Sci.}, 9(3):308--316, 1974.

\bibitem[CS76]{cose76}
S.~Cook and R.~Sethi.
\newblock Storage requirements for deterministic polynomial time recognizable
  languages.
\newblock {\em J. Comput. Syst. Sci.}, 13(1):25--37, 1976.

\bibitem[EIRS01]{edimrusg01}
J.~Edmonds, R.~Impagliazzo, S.~Rudich, and J.~Sgall.
\newblock Communication complexity towards lower bounds on circuit depth.
\newblock {\em Computational Complexity}, 10(3):210--246, 2001.
\newblock An abstract appeared in the \emph{32nd IEEE FOCS (1991)}.

\bibitem[GKM08]{gakomc08}
A.~G{\'a}l, M.~Kouck{\'y}, and P.~McKenzie.
\newblock Incremental branching programs.
\newblock {\em Theory Comput. Syst.}, 43(2):159--184, 2008.

\bibitem[Gol08]{go08}
O.~Goldreich.
\newblock {\em Computational Complexity: A Conceptual Perspective}.
\newblock Cambridge University Press, 2008.

\bibitem[HW93]{hawi93}
J.~H{\aa}stad and A.~Wigderson.
\newblock Composition of the universal relation.
\newblock {\em Advances in Computational Complexity Theory, AMS-DIMACS},
  13:119--134, 1993.

\bibitem[Kla85]{klawe}
M.~Klawe.
\newblock A tight bound for black and white pebbles on the pyramid.
\newblock {\em J. ACM}, 32(1):218--228, 1985.

\bibitem[KRW95]{karawi95}
M.~Karchmer, R.~Raz, and A.~Wigderson.
\newblock Super-logarithmic depth lower bounds via direct sum in communication
  complexity.
\newblock {\em Computational Complexity}, 5:191--204, 1995.
\newblock An abstract appeared in the \emph{6th Structure in Complexity Theory
  Conference (1991)}.

\bibitem[Lou79]{loui}
M.C. Loui.
\newblock The space complexity of two pebble games on trees.
\newblock Technical Report LCS 133, MIT, Cambridge, Massachussetts, 1979.

\bibitem[LT80]{lengauer-tarjan}
T.~Lengauer and R.~Tarjan.
\newblock The space complexity of pebble games on trees.
\newblock {\em Inf. Process. Lett.}, 10(4/5):184--188, 1980.

\bibitem[Mah07]{ma07}
M.~Mahajan.
\newblock Polynomial size log depth circuits: between \ncone and \acone.
\newblock {\em Bulletin of the EATCS}, 91:30--42, February 2007.

\bibitem[Ne{\u{c}}66]{ne66}
{\`{E}}.~Ne{\u{c}}iporuk.
\newblock On a boolean function.
\newblock {\em Doklady of the Academy of the USSR}, 169(4):765--766, 1966.
\newblock English translation in \emph{Soviet Mathematics Doklady} 7:4, pp.\
  999-1000.

\bibitem[Nor09]{nordstrom}
J.~Nordstr\"{o}m.
\newblock New wine into old wineskins: A survey of some pebbling classics with
  supplemental results.
\newblock Available on line at
  {\verb+http://people.csail.mit.edu/jakobn/research/+}, 2009.

\bibitem[PH70]{pahe70}
M.~Paterson and C.~Hewitt.
\newblock Comparative schematology.
\newblock In {\em Record of Project MAC Conference on Concurrent Systems and
  Parallel Computations}, pages 119--128, 1970.
\newblock (June 1970) ACM. New Jersey.

\bibitem[Pud87]{pu87}
P.~Pudl{\'a}k.
\newblock The hierarchy of boolean circuits.
\newblock {\em Computers and artificial intelligence}, 6(5):449--468, 1987.

\bibitem[Raz91]{ra91}
A.~Razborov.
\newblock Lower bounds for deterministic and nondeterministic branching
  programs.
\newblock In {\em 8th Internat. Symp. on Fundamentals of Computation Theory},
  pages 47--60, 1991.

\bibitem[Sud78]{su78}
H.~Sudborough.
\newblock On the tape complexity of deterministic context-free languages.
\newblock {\em J. ACM}, 25(3):405--414, 1978.

\bibitem[Tai05]{ta05}
M.A. Taitslin.
\newblock An example of a problem from {PTIME} and not in {NLogSpace}.
\newblock In {\em Proceedings of Tver State University}, volume 6(12) of {\em
  Applied Mathematics, issue 2, Tver State University, Tver}, pages 5--22,
  2005.

\bibitem[Weg00]{we00}
I.~Wegener.
\newblock {\em Branching Programs and Binary Decision Diagrams}.
\newblock SIAM Monographs on Discrete Mathematics and Applications. Soc. for
  Industrial and Applied Mathematics, Philadelphia, 2000.

\bibitem[Wehr10]{wehr}
D.~Wehr.
\newblock Pebbling and branching programs solving the tree evaluation
problem.
\newblock MSc research paper, Department of Computer Science, University
of Toronto, Feb. 2010.
\newblock arXiv:1002.4676.

\end{thebibliography}

\end{document}